\documentclass[a4paper, UKenglish,cleveref, autoref, thm-restate, dvipsnames]{lipics-v2021}
\usepackage[utf8]{inputenc}
\usepackage{hyperref}
\usepackage{url}
\usepackage{todonotes}
\usepackage{xargs}
\usepackage{amsmath, amsthm, amssymb}
\usepackage{xcolor}
\usepackage{tikz}
\usetikzlibrary{automata,calc, positioning, graphs, decorations.pathmorphing,shapes,arrows.meta,arrows,shapes.misc, fit, math}

\numberwithin{theorem}{section}
\numberwithin{definition}{section}
\numberwithin{lemma}{section}
\numberwithin{corollary}{section}

\keywords{Width Parameters, Parameterized Complexity, Tutte Polynomial} %TODO mandatory; please add comma-separated list of keywords
\ccsdesc{Theory of computation~Design and analysis of algorithms~Parameterized complexity and exact algorithms}

\nolinenumbers
\hideLIPIcs
\authorrunning{Isja Mannens and Jesper Nederlof}
%\author{Anonymous}{Author(s)}{}{}{}
%\authorrunning{Anonymous Author(s)}
\author{Isja Mannens}{Utrecht University, The Netherlands}{i.m.e.mannens@uu.nl}{https://orcid.org/0000-0003-2295-0827}{}

\author{Jesper Nederlof}{Utrecht University, The Netherlands}{j.nederlof@uu.nl}{https://orcid.org/0000-0003-1848-0076}{}

\acknowledgements{Both authors are supported by the project CRACKNP that has received funding from the European Research Council (ERC) under the European Union’s Horizon 2020 research and innovation programme (grant agreement No 853234).}

%somewhere
%Supported by the project CRACKNP that has received funding from the European Research Council (ERC) under the European Union’s Horizon 2020 research and innovation programme (grant agreement No 853234)

\DeclareMathOperator{\ctw}{ctw}
\DeclareMathOperator{\pw}{pw}
\DeclareMathOperator{\tw}{tw}

\DeclareMathOperator{\spa}{span}
\DeclareMathOperator{\supp}{supp}
\DeclareMathOperator{\rank}{rank}

\title{A Fine-Grained Classification of the Complexity of Evaluating the Tutte Polynomial on Integer Points Parameterized by Treewidth and Cutwidth
%\footnote{Supported by the project CRACKNP that has received funding from the European Research Council (ERC) under the European Union’s Horizon 2020 research and innovation programme (grant agreement No 853234)}
}
\titlerunning{Tutte Polynomial Parameterized by Width Measures}
\date{\today}

\bibliographystyle{plainurl}

%\newtheorem{lemma}{Lemma}
%\newtheorem{definition}{Definition}
%\newtheorem{theorem}{Theorem}
%\newtheorem{corollary}{Corollary}

% Use functions for parameters of graphs instead of 'ctw is the cutwidth of G', etc.
% Sections moved to seperate files

\newcommandx{\jesper}[2][1=]{\todo[inline,linecolor=red,backgroundcolor=red!25,bordercolor=red,caption={\normalsize \textbf{Jesper}},#1]{\normalsize #2}}
\newcommandx{\isja}[2][1=]{\todo[linecolor=green,backgroundcolor=green!25,bordercolor=green,caption={\normalsize \textbf{Isja}},#1]{\normalsize \textbf{Isja:} #2}}

\begin{document}
\maketitle

\begin{abstract}
We give a fine-grained classification of evaluating the Tutte polynomial $T(G;x,y)$ on all integer points on graphs with small treewidth and cutwidth.
Specifically, we show for any point $(x,y) \in \mathbb{Z}^2$ that either
\begin{itemize}
    \item $T(G; x, y)$ can be computed in polynomial time,
    \item $T(G; x, y)$ can be computed in $2^{O(tw)}n^{O(1)}$ time, but not in $2^{o(ctw)}n^{O(1)}$ time assuming the Exponential Time Hypothesis (ETH),
    \item $T(G; x, y)$ can be computed in $2^{O(tw \log tw)}n^{O(1)}$ time, but not in $2^{o(ctw \log ctw)}n^{O(1)}$ time assuming the ETH,
\end{itemize}
where we assume tree decompositions of treewidth $tw$ and cutwidth decompositions of cutwidth $ctw$ are given as input along with the input graph on $n$ vertices and point $(x,y)$.

To obtain these results, we refine the existing reductions that were instrumental for the seminal dichotomy by Jaeger, Welsh and Vertigan~[Math. Proc. Cambridge Philos. Soc'90].
One of our technical contributions is a new rank bound of a matrix that indicates whether the union of two forests is a forest itself, which we use to show that the number of forests of a graph can be counted in $2^{O(tw)}n^{O(1)}$ time.
\end{abstract}

\section{Introduction}

%\begin{itemize}
%    \item Doel: strakke onder en bovengrenzen in de volgende zin. Als $c^k$ ook SETH lower bound van $(c-\epsilon)^k$, of een ETH lower bound van $k^{o(k)}$ oid.
%    \item Tutte
%    \item JVW
%    \item Parameterized complexity
%\end{itemize}

%\isja{Explicitly introduce Tutte polynomial in introduction.}
%\jesper{zou goed zijn om hier al width measures genoemd te hebben}

We study the parameterized complexity of computing the Tutte Polynomial. The Tutte polynomial is a graph invariant that generalizes any graph invariant that satisfies a linear deletion-contraction recursion. Such invariants include the chromatic, flow and Jones polynomials, as well as invariants that count structures such as the number of forests or the number of spanning subgraphs. Due to its generality the Tutte polynomial is of great interest to a variety of fields, including knot theory, statistical physics and combinatorics.

For a number of these fields it is important to understand how difficult it is to compute the Tutte polynomial. A series of papers, culminating in the work by Jaeger, Vertigan, and Welsh \cite{jaeger_vertigan_welsh_1990} has given a complete dichotomy showing that the problem of evaluating the Tutte polynmial is \#P-hard on all points except on the following \emph{special points} on which it is known to be computable in polynomial time:
\begin{equation}\label{eq:special}
(1, 1), (-1, -1), (0, -1), (-1, 0), (i, -i), (-i, i), (j, j^2), (j^2, j),\ H_1
\end{equation}
where $j = e^{2\pi i/3}$ and $i=\sqrt{-1}$, and $H_\alpha$ denotes the hyperbola $\{(x,y): (x-1)(y-1) = \alpha\}$. 
These hyperbolic curves turn out to be of great importance to understanding the complexity of the Tutte Polynomial, as the problem is generally equally hard on all points of the same curve, except for the \emph{special points} listed in~\eqref{eq:special}.

Further refinements of the result by~\cite{jaeger_vertigan_welsh_1990} have since been made: Among others, a more fine-grained examination of the complexity was done by Brand et al.~\cite{BrandDR19} (building on earlier work by Dell et. al.~\cite{Dell_2014}): they showed that for almost all points the Tutte polynomial cannot be evaluated in $2^{o(n)}$ time on $n$-vertex graphs, assuming (a weaker counting version of) the Exponential Time Hypothesis.
This is tight because, on the positive side, Bj\"orklund et al.~\cite{BjorklundHKK08} showed that the Tutte polynomial can be evaluated on any point in $2^nn^{O(1)}$ time.
%
%A natural question to ask is what kind of classification one can find for variations of the problem. For example one might want to find a full dichotomy for the modular variant of the problem. Some progress has been made towards this goal (\cite{annan1994complexity}, \cite{GOODALL2004293}), however it remains an open problem.

Another perspective worth examining is that of the parameterized complexity of the problem, when parameterized by \emph{width measures}. This is a rapidly evolving field within parameterized complexity.\footnote{For example, the biennial Workshop on Graph Classes, Optimization, and Width Parameters (GROW) already had its 10'th edition recently~\url{https://conferences.famnit.upr.si/event/22/}.} Intuitively, it is concerned with the effects of structural properties of the given input graph on its complexity. This often generates results that have greater practical value and give a deeper understanding of the problem, in comparison with classical worst-case analysis. It is therefore natural to ask what a complexity classification for the Tutte Polynomial would look like in this parameterized context. 

For the specific subject of evaluating the Tutte polynomial parameterized by width measures, research has already been done in this area over twenty years ago: Noble \cite{DBLP:journals/cpc/Noble98} has given a polynomial time algorithm for evaluation the Tutte Polynomial on bounded treewidth graphs. Noble mostly focused on the dependence on the number of vertices and edges, and showed each point of the Tutte polynomial can be evaluated in linear time, assuming the treewidth of the graph is constant. See also an independently discovered (but slower) algorithm by Andrzejak~\cite{Andrzejak98}.
However, this glances over the \emph{exponential} part of the runtime, i.e. the dependence on the treewidth. Since this is typically the bottleneck, recent work aims to refine our understanding of this exponential dependence with upper and lower bounds on complexity of the problem in terms of this parameter that match in a fine-grained sense.

In this work, we extend this research line and determine the fine-grained complexity for each integer point $(x,y)$ of the problem of evaluating the Tutte polynomial $(x,y)$.
As was done in previous works, we base our lower bounds on the Exponential Time Hypothesis (\textsc{ETH}) and the Strong Exponential Time Hypothesis (\textsc{SETH}) formulated by Impagliazzo and Paturi~\cite{ImpagliazzoP01}. For a given width parameter $k$, the former will be used to exclude run times of the form $k^{o(k)}n^{O(1)}$, while the latter will be used to exclude run times of the form $(c-\epsilon)^kn^{O(1)}$ for some constant $c$ and any $\epsilon >0$.

%Another interesting avenue is to parameterize some aspect of the polynomial, instead of the graph. Roth et. al. \cite{DBLP:conf/icalp/Roth0W21} examine such a parameterization, where the definition of the polynomial is limited to only include edgesets of a certain size. While interesting, we will not consider such a parameterization in this work.

%The parameters we will focus on in this paper are structural parameters referred to as \emph{width measures}.
Specifically we consider the \emph{treewidth}, \emph{pathwidth} and \emph{cutwidth} of the graph. The first two, in some sense, measure how close the graph is to looking like a tree or path respectively. The cutwidth measures how many edges are layered on top of each other when the vertices are placed in any linear order. We will more precisely define these parameters in the preliminaries.

Width measures in particular are interesting because instances where such structural parameters are small come up a lot in practice. For example, the curve $H_2$ corresponds to the partition function of the Ising model, which is widely studied in statistical physics, on graphs with particular topology such as lattice graphs or open/closed Cayley trees (\cite{KRIZAN1983230}). In all such graphs with $n$ vertices, even the cutwidth (the largest parameter we study) is at most $O(\sqrt{n})$.

\subsection{Our contributions}
%We initiate the study of the parameterized complexity of computing the Tutte Polynomial at a given point. In particular we give a fine-grained classification for all so called \emph{special curves} with an integer valued parameter. We do so for the parameters treewidth, pathwidth and cutwidth, which we summarize in Figure~\ref{fig:Results}.
% \begin{figure}[h]
%     \centering
%     \begin{tabular}{|c||c|c|} \hline
%         Curve   & Excluded run time           & Upper bound \\ \hline \hline
%         $H_{-q}$ for $q \in \mathbb{Z}_{>0}$   
%         & \hyperref[thm:H-qLow]{$\ctw^{o(\ctw)}$}   & \hyperref[thm:GenAlg]{$\tw^{O(\tw)}$} \\ \hline
%         $H_0^x$ 
%         & \hyperref[thm:H0Low]{$\ctw^{o(\ctw)}$}   & \hyperref[thm:GenAlg]{$\tw^{O(\tw)}$} \\ \hline
%         $H_0^y$ 
%         & \hyperref[thm:ForLow]{$2^{o(\ctw)}$}      & \hyperref[thm:ForTWAlg]{$O(64^{\tw})$} or \hyperref[thm:ForPWAlg]{$O(4^{\pw})$}  \\ \hline
%         $H_1$   
%         & \hyperlink{trg:H1Pol}{-}         & \hyperlink{trg:H1Pol}{$n^{O(1)}$} \\ \hline
%         $H_2$   
%         & \hyperref[thm:H2Low]{$O((2-\epsilon)^{\ctw})$}         & \hyperref[thm:H2Upp]{$O(2^{\tw})$} \\ \hline
%         $H_q$ for $q \in \mathbb{Z}_{\geq 3}$   
%                 & \hyperref[thm:HqLow]{$O((q-\epsilon)^{\ctw})$}         & \hyperref[thm:HqUpp]{$O(q^{\tw})$}  \\ \hline
%         %$H_\alpha$ for $\alpha \in \mathbb{C} \setminus \mathbb{Z}$ & \hyperref[thm:HaLow]{$2^{o(\ctw)}$}      & \hyperref[thm:GenAlg]{$O(\tw^{\tw})$} \\ \hline
%     \end{tabular}
%     \caption{An overview of the upper and lower bounds found in this paper. Factors polynomial in $n$ are omitted, and $\epsilon$ is an arbitrarily small positive constant. \textcolor{red}{Remove table?}}
%     \label{fig:Results}
% \end{figure}

\begin{figure}
    \centering
        \begin{tikzpicture}[scale=0.6]
        %\fill[yellow!20] (-3.2,-2.2) rectangle (4.2,4.2) {};
        \foreach\i in {-3,...,4}{
            \draw[gray!40,very thin] (\i,-2.2) -- (\i,4.2);
        }
        \foreach\i in {-2,...,4}{
            \draw[gray!40,very thin] (-3.2, \i) -- (4.2, \i);
        }
    
        %\node[label=right:{\small 1}] at (0,1) {};
        %\node[red,label=right:{\color{purple} \small $x = 1$}] at (1,.5) {};
        %\node[red,label=right:{\color{blue} \small $(x - 1)(y - 1) = 1$}] at (2,2.3) {};
        %\node[red,label=right:{\color{blue} \small $(x - 1)(y - 1) = 1$}] at (-4.7,-.5) {};
    
        \draw[thick,-{Latex[length=2mm, width=2mm]}] (-3.2, 0) -- (4.4, 0) node[right] {$x$};
        \draw[thick,-{Latex[length=2mm, width=2mm]}] (0, -2.4) -- (0, 4.4) node[above] {$y$};
        \draw[thick,domain=-2.2:4.2, smooth, variable=\x, red] plot ({1}, {\x}); %H0x
        \draw[thick,domain=-3.2:4.2, smooth, variable=\x, blue] plot ({\x}, {1}); %H0y
        \draw[thick,domain=-3.2:.687, smooth, variable=\x, Emerald]  plot ({\x}, {1/(\x-1) + 1 }); %H1
        \draw[thick,domain=1.314:4.2, smooth, variable=\x, Emerald]  plot ({\x}, {1/(\x-1) + 1 }); %H1
        \draw[thick,domain=-3.2:.372, smooth, variable=\x, blue]  plot ({\x}, {2/(\x-1) + 1 }); %H2
        \draw[thick,domain=1.628:4.2, smooth, variable=\x, blue]  plot ({\x}, {2/(\x-1) + 1 }); %H2
        \draw[thick,domain=-3.2:.05, smooth, variable=\x, blue]  plot ({\x}, {3/(\x-1) + 1 }); %H3
        \draw[thick,domain=1.95:4.2, smooth, variable=\x, blue]  plot ({\x}, {3/(\x-1) + 1 }); %H3
        \draw[thick,domain=-3.2:.687, smooth, variable=\x, red]  plot ({\x}, {-1/(\x-1) + 1 }); %H-1
        \draw[thick,domain=1.314:4.2, smooth, variable=\x, red]  plot ({\x}, {-1/(\x-1) + 1 }); %H-1
        \draw[thick,domain=-3.2:.372, smooth, variable=\x, red]  plot ({\x}, {-2/(\x-1) + 1 }); %H-2
        \draw[thick,domain=1.628:4.2, smooth, variable=\x, red]  plot ({\x}, {-2/(\x-1) + 1 }); %H-2
        \draw[thick,domain=-3.2:.05, smooth, variable=\x, red]  plot ({\x}, {-3/(\x-1) + 1 }); %H-3
        \draw[thick,domain=1.95:4.2, smooth, variable=\x, red]  plot ({\x}, {-3/(\x-1) + 1 }); %H-3
    
        \node[fill=Emerald,circle,inner sep=0.1em] at (1,1) {};
        \node[fill=Emerald,circle,inner sep=0.1em] at (-1,-1) {};
        \node[fill=Emerald,circle,inner sep=0.1em] at (0,-1) {};
        \node[fill=Emerald,circle,inner sep=0.1em] at (-1,0) {};
    \end{tikzpicture}
    \caption{The \textcolor{red}{red} points have time complexity of the form $k^{O(k)}$, the \textcolor{blue}{blue} points have time complexity of the form $O(c^k)$ for some constant $c$ and the \textcolor{Emerald}{green} points have polynomial time complexity.}
    \label{fig:ResultPlot}
\end{figure}
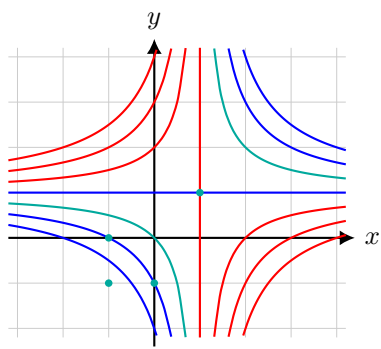

%We initiate the study of the fine-grained parameterized complexity of computing the Tutte Polynomial at a given point. In particular w
Our classification handles points $(x,y)$ differently based on whether $(x-1)(y-1)$ is negative, zero or positive, and reads as follows:
\newcommand{\itemone}{\textcolor{gray}{\textbf{1.}}}
\newcommand{\itemtwo}{\textcolor{gray}{\textbf{2.}}}
\newcommand{\itemthree}{\textcolor{gray}{\textbf{3.}}}

\begin{theorem} \label{thm:main}
Let $G$ be a graph with given tree, path and cut decompositions of width $\tw$, $\pw$ and $\ctw$ respectively. Let $(x,y) \in \mathbb{Z}^2$ be a non-special point, then up to some polynomial factor in $|G|$, the following holds.
\begin{enumerate}
    \item If $(x-1)(y-1) < 0$ or $x = 1$, then $T(G; x, y)$ can be computed in time $\tw^{O(\tw)}$ and cannot be computed in time $\ctw^{o(\ctw)}$ under \textsc{ETH}.
    %\item If $x = 1$, then $T(G; x, y)$ can be computed in time $O(\tw^{\tw})$ and cannot be computed in time $\ctw^{o(\ctw)}$ under \textsc{ETH}.
    \item If $y = 1$, then $T(G; x, y)$ can be computed in time $O(4^{\pw})$ or $O(64^{\tw})$ and cannot be computed in time $2^{o(\ctw)}$ under \textsc{ETH}.
    \item If $(x-1)(y-1) = q > 1$, then $T(G; x, y)$ can be computed in time $O(q^{\tw})$. Furthermore, 
    \begin{enumerate}
    \item  if $x \neq 0$, then $T(G; x, y)$ cannot be computed in time $O((q-\epsilon)^{\ctw})$ under \textsc{SETH}.
    \item  if $x=0$, then $T(G; x, y)$ cannot be computed in time $O((q-\epsilon)^{\pw})$ and $O(q-\epsilon)^{\ctw/2})$ under \textsc{SETH}.
    \end{enumerate}
\end{enumerate}
\end{theorem}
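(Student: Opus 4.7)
We decompose Theorem~\ref{thm:main} into three upper-bound arguments and three lower-bound arguments, one per case, glued together through the standard specialisations of the Tutte polynomial (the $q$-state Potts partition function on each hyperbola $H_q$, the flow polynomial on $x=0$, a weighted count of spanning forests on $y=1$, and the generic partition-tracking DP in the remaining region). For each case the plan is first to move the evaluation to the specialisation best suited to that region, and then either apply a tailored dynamic program or design a cutwidth-preserving reduction from an already-hard specialisation. For case 3, we transport $T(G;x,y)$ on $H_q$ to the $q$-state Potts partition function up to a polynomial prefactor and compute the latter with the textbook $q^{\tw}n^{O(1)}$ bag-colouring DP, giving the $O(q^{\tw})$ bound. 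For case 2 we specialise to a signed count of spanning forests and apply the new rank bound advertised in the abstract: the $0/1$ matrix $M[\pi,\sigma]$ marking whether the union of two forests determined by bag partitions $\pi,\sigma$ is acyclic has rank $2^{O(\tw)}$, so Noble's partition-based DP compresses via representative sets to $O(4^{\pw})$ on a path decomposition and $O(64^{\tw})$ on a tree decomposition. For case 1 we simply keep the uncompressed partition DP, yielding $\tw^{O(\tw)}$.

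For the lower bounds we refine the polynomial-time reductions of Jaeger, Vertigan and Welsh, whose building blocks are the $k$-thickening (replacing every edge by $k$ parallel copies) and the $k$-stretch (subdividing every edge $k$ times); together these walk an evaluation point along its hyperbola. The key refinement is to redo the analysis with cutwidth and pathwidth tracked: a $k$-stretch changes cutwidth only by $O(1)$, while a $k$-thickening scales it by $k$, so the two gadgets must be applied selectively to preserve the base of the exponent where it matters. Composing these gadgets with an appropriate ETH/SETH-hard base problem on bounded cutwidth — a counting CSP giving the $\ctw^{o(\ctw)}$ bounds of case 1 and of case 2 (for the latter using a hard point on the forest line), SETH-hard $q$-CSP / $q$-colouring-like problems for case 3a, and a flow-polynomial variant whose reduction unavoidably doubles the cutwidth and thereby explains the $\ctw/2$ exponent of case 3b — yields the claimed conditional lower bounds.

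The most delicate step, and the one I expect to be the main obstacle, is proving the rank bound for the forest-union matrix that drives case 2. Unlike the classical ``connectivity matrix'' that underlies the $2^{O(\tw)}$ DPs for Hamiltonian Cycle or Steiner Tree, the ``no-cycle'' relation on two bag partitions is not a mere merge-closure condition, so the standard Gaussian-elimination / representative-sets argument does not transfer verbatim. I plan to bound $\rank(M)$ by exhibiting an explicit factorisation of $M$ through vectors indexed by combinatorial spanning structures on the bag (in the spirit of matroid-rank arguments used for counting matroid bases), and to argue that the resulting inner dimension is singly exponential in $|B|$ rather than $|B|^{O(|B|)}$. Once this factorisation is in place, the remainder of Theorem~\ref{thm:main} is largely a bookkeeping exercise that assembles the DPs and the parameter-tracking versions of the JVW gadgets above.
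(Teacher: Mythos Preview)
Your overall architecture matches the paper: reduce along each hyperbola $H_\alpha$ via stretch/thicken gadgets while tracking width, then hit a canonical point on each curve with either a tailored DP (Potts/Ising on $H_q$, partition DP elsewhere) or an existing ETH/SETH lower bound (from $q$-colouring, counting connected edge sets mod $p$, or counting perfect matchings). The paper proceeds exactly this way, so the scaffolding is right.

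The one place your plan genuinely diverges is the rank bound for the forest-compatibility matrix, and here your proposed attack is too vague to be convincing. You suggest an ``explicit factorisation \dots\ through vectors indexed by combinatorial spanning structures on the bag''; this is not a concrete argument and the natural candidates (e.g.\ spanning trees of the bag, or labelled forests on $[k]$) have $k^{\Theta(k)}$ many elements, so a naive factorisation does not beat the trivial bound. The paper's route is quite different and more structural: it shows that the rows indexed by \emph{non-crossing} partitions span the row space of $F_n$, giving $\operatorname{rank}(F_n)\le C_n\in O(4^n)$. The key lemma is an \emph{uncrossing} step: swapping two adjacent elements of the ordering turns at most two blocks into a crossing pair; after contracting their maximal intervals one lands in $F_4$ or $F_5$, where non-crossing partitions are checked by brute force to form a basis, and this local rewrite is lifted back via blowup/projection lemmas. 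Iterating adjacent transpositions reaches every ordering, hence every partition. If you want a factorisation viewpoint, this argument produces one, but the inner index set is the non-crossing partitions rather than any matroidal object, and I do not see how to discover that without the uncrossing idea.

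Two smaller gaps to flag. First, stretch and thicken alone are not enough to move along $H_\alpha$ while controlling cutwidth in all cases: the paper introduces an \emph{insulated $k$-thickening} (subdivide each edge into a path of length $3$, then thicken only the middle edge) precisely so that the parallel bundle sits between two fresh degree-$2$ vertices and raises cutwidth only additively; without this gadget the $|b|=1$, $a\notin\{0,1\}$ subcase of the curve reduction blows up cutwidth multiplicatively. Second, your case~3 lower bound for $q=2$ cannot go through ``$2$-colouring-like'' problems, since $(1-q,0)=(-1,0)$ is a special (polynomial-time) point; the paper instead passes through the Ising partition function, the closed-subgraph generating polynomial, and finally a cutwidth-preserving reduction from counting perfect matchings, for which a $(2-\varepsilon)^{\ctw}$ SETH lower bound is known.
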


This is a fine-grained classification for evaluating the Tutte polynomial at any given integer point, simultaneously for all the parameters treewidth, pathwidth and cutwidth. This is because if a graph has cutwidth $\ctw$, pathwidth $\pw$ and treewidth $\tw$, then $\tw \leq \pw \leq \ctw$. Our result implies that, for evaluating the Tutte polynomial at a given integer point, it does not give a substantial advantage to have small cutwidth instead of small treewidth. This is somewhat surprising since, for example, for computing the closely related chromatic number of a graph there exists a $2^{\ctw} n^{O(1)}$ time algorithm, but any $\pw^{o(\pw)} n^{O(1)}$ time algorithm would contradict the ETH~\cite{LokshtanovMS18}.

%Furthermore, the results can be extended to other (non-integer) points on the curves $H_q$ for integer valued $q$, with some edge cases having slightly worse \textsc{SETH} bounds when $q > 1$.

Of particular interest are the upper bounds in Case \itemtwo{} for the points $\{(x,y): y = 1\}$, which are closely related to the problem of computing the number of forests in the input graph. One reason why this results stands out in particular is that it indicates an inherent asymmetry between the $x$- and $y$-axes, in this parameterized setting. In the general setting, problems related to the Tutte Polynomial often have a natural dual problem, which one can obtain by interchanging the $x$- and $y$-coordinates. For example the chromatic polynomial can be found (up to some computable term $f$) as $\chi_G(\lambda) = f(\lambda) T(1-\lambda, 0)$, while the flow polynomial can be found as $C_G(\lambda) = g(\lambda) T(0, 1-\lambda)$. These two problems are equivalent on planar graphs, in the sense that the chromatic number of a planar graph is equal to the flow number of its dual graph.

We note that for this curve we have an \textsc{ETH} bound, while for the other results of the form $c^{\tw}n^{O(1)}$ we have a stronger \textsc{SETH} bound. We suspect that a $(4-\epsilon)^{\ctw}n^{O(1)}$ lower bound for any $\epsilon >0$, based on SETH, also holds for evaluating $T(G; 2, 1)$, but that it will take significant additional technical effort.

\subparagraph{Techniques} In order to get the classification, our first step follows the method of~\cite{jaeger_vertigan_welsh_1990} to reduce the evaluation of $T(G; x, y)$ for all points in hyperbola ${H_\alpha = \{ (x,y) : (x-1)(y-1)=\alpha \}}$ to the evaluation to a \emph{single} point in $H_\alpha$. This is achieved in~\cite{jaeger_vertigan_welsh_1990} by some graph operations (\emph{stretch} and \emph{thickening}), but these may increase the involved width parameters. We refine these operations in Section~\ref{sec:redcurve} to avoid this.

With this step being made, several cases of Theorem~\ref{thm:main} then follow from a combination of new short separate and non-trivial arguments and previous work (including some very recent work such as~\cite{2015CurtMarx, 2022GNMS}).

However, for the upper bound in Case \itemtwo{} of Theorem~\ref{thm:main}, our proof is more involved. To get our upper bound, we introduced the \emph{forest compatibility matrix}. Its rows and columns are indexed with forests (encoded as partitions indicating their connected components). An entry in this matrix indicates whether the union of the two forests forms a forest itself. This matrix is closely related to matrices playing a crucial role in the Cut and Count method~\cite{talg/CyganNPPRW22} and rank based method~\cite{BodlaenderCKN15} to quickly solve connectivity problems on graphs with small tree-width. However, the previous rank upper bounds do not work for bounding the rank of the forest compatibility matrix over the reals since we check for \emph{acyclicity} instead of \emph{connectivity}. We nevertheless show that this the rank of this matrix is $4^n$; in fact the set of non-crossing partitions forms a basis of this matrix. We prove this via an inductive argument that is somewhat similar to the rank bound of $2^{n/2-1}$ of the matchings connectivity matrix over $GF(2)$ from~\cite{CyganKN18}.  Subsequently, we show how to use this insight to get a $2^{O(\tw)}$ algorithm to evaluate $T(G; 1,2)$ (i.e. counting the number of spanning forests). 

\subsection{Organization}
The remainder of this paper supports Theorem~\ref{thm:main}. In Section~\ref{sec:prel} we describe some preliminaries.
In Section~\ref{sec:redcurve} we show how to reduce the task of computing all points along a hyperbola curve to a single point.
We now describe where each part of Theorem~\ref{thm:main} can be found in the paper.
The lower bound in Case \itemone{}  is given in Theorem~\ref{thm:H-qLow} and~\ref{thm:H0Low}.
The upper bound in Case \itemone{} is given in Theorem~\ref{thm:GenAlg}.
The lower bound in Case \itemtwo{} is by Dell et al.~\cite{Dell_2014}.
The upper bound in Case \itemtwo{} is given in Section~\ref{sec:forests} (specifically, Theorems~\ref{thm:ForPWAlg} and~\ref{thm:ForTWAlg}).
% %
% The lower bound in Case \itemthree{} (for $q=2$) is given in Theorem~\ref{thm:H2Low}.
% %
% The upper bound bound in \itemthree{} (for $q=2$) is given in Theorem~\ref{thm:H2Upp}.
% %
% The lower bound bound in \itemthree{} (for $q>2$) is given in Theorem~\ref{thm:HqLow}.
% %
% The upper bound bound in \itemthree{} (for $q>2$) is given in Theorem~\ref{thm:HqUpp}.
%
The lower bound in Case \itemthree{} is given in Theorem~\ref{thm:H2Low} (for $q=2$) and Theorem~\ref{thm:HqLow} (for $q>2$).
The upper bound bound in \itemthree{} is given in Theorem~\ref{thm:H2Upp} (for $q=2$) and Theorem~\ref{thm:HqUpp} (for $q>2$).

\section{Preliminaries}
\label{sec:prel}
\subparagraph*{Computational Model}
In this paper we frequently have real (and some intermediate lemma's are even stated for complex) numbers as intermediate results of computations. However, as is common in this area we work in the word RAM model in which all basic arithmetic operations with such numbers can be done in constant time, and therefore this does not influence our running time bounds.
%our end results (being evaluations of $T(G; x,y)$ for integers $x,y$) are always integers. Hence, we can represent intermediate results of computations with (polynomially bounded) finite precision and round the end value to the nearest integers, and therefore all our results also hold in the bit complexity model.

\subparagraph*{Interpolation}
Throughout this paper we will use interpolation to derive a polynomial, given a finite set of evaluations of said polynomial. For our purposes it suffices to note that this can be done in polynomial time, for example by solving the system of linear equations given by the Vandermonde matrix and the evaluations (see e.g.~\cite[Section 30.1]{DBLP:books/daglib/0023376}).

\begin{lemma} \label{lem:IntPol}
    Given pairs $(x_0, y_0), \dots, (x_d, y_d)$, there exists an algorithm which computes the unique degree $d$ polynomial $p$ such that $p(x_i) = y_i$ for $i = 0, \dots, d$ and runs in time $O(d^3)$.
\end{lemma}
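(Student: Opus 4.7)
The plan is to follow the standard Vandermonde-based approach indicated in the statement. First, I would encode the unknown degree-$d$ polynomial as $p(z) = \sum_{j=0}^{d} a_j z^j$ with unknown coefficient vector $a = (a_0, \dots, a_d)^\top$. The interpolation conditions $p(x_i) = y_i$ then amount exactly to the linear system $V a = y$, where $V$ is the $(d+1) \times (d+1)$ Vandermonde matrix with entries $V_{ij} = x_i^j$ and $y = (y_0, \dots, y_d)^\top$. Setting up this system takes $O(d^2)$ time (once we compute each row by repeated multiplication starting from $V_{i,0} = 1$).

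Next, I would invoke the classical fact that $\det V = \prod_{0 \le i < j \le d}(x_j - x_i)$, so $V$ is invertible precisely when the $x_i$ are pairwise distinct. By assumption the pairs $(x_i, y_i)$ correspond to evaluations of a polynomial, so the $x_i$ are distinct (otherwise the requirement $p(x_i) = y_i$ would be inconsistent, or redundant and we could drop one pair). Hence the system has a unique solution $a$, which yields the unique degree-$d$ interpolating polynomial.

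Finally, I would run Gaussian elimination on $V a = y$. This performs $O(d^3)$ arithmetic operations, each of which is a constant-time operation in the word RAM model as described in the preliminaries. This gives the claimed $O(d^3)$ bound.

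No step presents a real obstacle, as all ingredients are textbook (cf.\ \cite[Section 30.1]{DBLP:books/daglib/0023376}); the only mild subtlety is pointing out why the $x_i$ may be assumed distinct so that the Vandermonde matrix is nonsingular. One could alternatively give a one-line proof via the Lagrange interpolation formula $p(z) = \sum_{i=0}^{d} y_i \prod_{j \neq i} \frac{z - x_j}{x_i - x_j}$, whose expansion into monomial form also clearly takes $O(d^3)$ arithmetic operations; but the Vandermonde route matches the hint in the statement.
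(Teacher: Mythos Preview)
Your proposal is correct and matches the paper's approach exactly: the paper does not give a detailed proof but simply remarks that one solves the Vandermonde linear system and cites \cite[Section 30.1]{DBLP:books/daglib/0023376}, which is precisely what you spell out. Your write-up is more detailed than the paper's, but the method is identical.
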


\subsection{The Tutte polynomial}
There are multiple ways of defining the Tutte polynomial. In this paper we will only need the following definition
\[
T(G; x,y) = \sum\limits_{A \subseteq E} (x-1)^{k(A) - k(E)} (y-1)^{k(A) + |A| - |V|},
\]
where $k(A)$ denotes the number of connected components of the graph $(V,A)$.
We will often use the following notation
\[
    H_\alpha = \{(x,y) : (x-1)(y-1) = \alpha\}.
\]
Note that these curves form hyperbolas and that for $\alpha = 0$ the hyperbola collapses into two orthogonal, straight lines. We refer to these two lines as separate curves
\begin{align*}
    H_0^x &= \{(x,y) : x = 1 \}, \\
    H_0^y &= \{(x,y) : y = 1 \}.
\end{align*}

Throughout the paper we will refer to the problem of finding the value of $T(G; a,b)$ for an individual point as \emph{computing the Tutte polynomial on $(a,b)$}. We will often restrict the Tutte polynomial to a one-dimensional curve $H_\alpha$. Note that in this case the polynomial can be expressed as a univariate polynomial 
\[
    T_\alpha(G;t) := T\left(G; \frac{\alpha}{t} + 1, t + 1\right).
\]
We will refer to the problem of finding the coefficients of $T_\alpha$ as \emph{computing the Tutte polynomial along $H_\alpha$}.

\hypertarget{trg:H1Pol}{As mentioned in the introduction, the Tutte polynomial is known to be computable in polynomial time on the points
\begin{equation}\label{eq:spec}
(1,1), (-1,-1), (0,-1), (-1,0), (i, -i), (-i, i), (j, j^2), (j^2, j)
\end{equation}
and along the curve $H_1$ and it is $\#P$ to evaluate it on any other point. We call the points listed in~\eqref{eq:spec}, along with the points on the curve $H_1$ \emph{special points}. See \cite{jaeger_vertigan_welsh_1990} for more details.}

\subsection{Width measures}
We consider the width measures \emph{treewidth}, \emph{pathwidth} and \emph{cutwidth} of a graph $G$ (denoted respectively with $tw(G)$, $pw(G)$ and $ctw(G)$), defined as follows:

\subparagraph*{Treewidth and pathwidth}

A \emph{tree decomposition} of a graph is given by a tree $\mathbb{T}$ and a \emph{bag}
 $B_x \subseteq V$ for each $x \in V(\mathbb{T})$, with the following properties.
\begin{itemize}
    \item For every $v \in V(G)$ there is some $x$ such that $v \in B_x$.
    \item For every $uv \in E(G)$ there is some $x$ such that $u,v \in B_x$.
    \item For every $v \in V(G)$, the set $\{x \in V(\mathbb{T}) : v \in B_x\}$ induces a subtree of $\mathbb{T}$.
\end{itemize}
The \emph{width} of such a decomposition is defined as $\max_x(|B_x|) - 1$ and the \emph{treewidth} of a graph is defined as the minimum width among its tree decompositions. The \emph{pathwidth} of a graph is defined in a similar way, except $\mathbb{T}$ has to be a path instead of a tree.

We will often think of these decompositions as being rooted at some node $r$ and will refer to the neighbour $x$ of $y$ on the unique path from $y$ to $r$ as the parent of $y$ and to $y$ as the child of $x$. We will refer to set of nodes $y$ whose unique $y$-$r$ path visits $x$ as the descendants\footnote{Note that under this definition, $x$ is a descendant of itself.} of $x$. The union of the bags corresponding to descendants of $x$ will be denoted $G_x$ and we will refer to it as the part of the graph $G$ that \emph{lies below} $x$.

Finally we note that we may assume that we are given a so called \emph{nice} decomposition. A nice tree decomposition contains only the following types of bags.
\begin{itemize}
    \item Leaf bag: $B_x = \emptyset$ and $x$ has no children.
    \item Node-forget bag: $B_x = B_y \setminus \{v\}$, where $y$ is the unique child of $x$ and $v \in B_y$.
    \item Node-introduce bag: $B_x = B_y \cup \{v\}$, where $y$ is the unique child of $x$ and $v \in V(G) \setminus B_y$.
    \item Join bag: $B_x = B_{y_1} = B_{y_2}$, where $y_1$ and $y_2$ are the two children of $x$.
\end{itemize}
We may also assume that the decomposition has so called \emph{edge-introduce bags}. These have a unique child with the same bag, however they are labeled with an edge between two vertices in its bag. The idea behind this is that we can pretend like an edge doesn't exist, until it gets introduced by some bag. We will assume that edges are always introduced exactly once.

\subparagraph*{Cutwidth}
A \emph{cut decomposition} of a graph $G$ is simply an ordering $v_1, \dots, v_n$ of the vertex set. The width of such a decomposition is defined as the maximum number of edges 'crossing' any cut of the ordering. Formally for an integer $i$ we say an edge $v_jv_l$ crosses the $i$-th cut, if $j \leq i < l$. Again, the \emph{cutwidth} of $G$ is the minimum width among all cut decompositions.

\subsection{Brylawski's tensor product formula}
In Section \ref{sec:redcurve} we will make use of Brylawski's tensor product formula \cite{Brylawski1980} to reduce the computation of $T(G;x,y)$ to that of $T(G;x',y')$ for some other point $(x', y')$. The original formula is formulated in terms of pointed matroids, however we will only need the formulation for (multi)graphs. Before we can state the formula, we first need to introduce some notation.

Given graphs $G$ and $H$, where an edge $e \in E(H)$ is labeled as a special edge, we define the \emph{pointed tensor product}\footnote{Note that this is different from the standard tensor product for graphs.} $G \otimes_e H$ of $G$ and $H$ as the graph given by the following procedure. For every edge $f \in E(G)$ we first create a copy $H_{f}$ of $H$, then identify $f$ with the copy of the edge $e$ in $H_{f}$ and finally remove the edge f (and thus also the edge $e$) from the graph.

Intuitively it might be easier to think of this product as replacing every edge of $G$ with a copy $H \setminus e$, where two of the vertices in $H$ are designated as gluing points. For example one could replace every edge with a path of length $k$ by taking as $H$ the cycle $C_{k+1}$ on $k+1$ vertices, as seen in figure~\ref{fig:pointTensProd}.

\begin{figure}[h]
    \centering
    \begin{tikzpicture}[-, scale = 1.5]
		\tikzstyle{every state} = [inner sep = 0.1mm, minimum size = 5mm]
        \tikzstyle{vertex} = [inner sep = 0.1mm, minimum size = 2mm, draw, circle]
		
		\node[state]	(U)	    at (0,0)	    {$u$};
		\node[state]	(V)	    at (1,1)		{$v$};
		\node[state]	(W)	    at (2,0)		{$w$};

        \node[state]	(U2)	    at (4,0)	    {$u$};
		\node[vertex]	(U3)	    at (4.5,0.5)	{};
		\node[state]	(V2)	    at (5,1)		{$v$};
		\node[vertex]	(W3)	    at (5.5,0.5)	{};
		\node[state]	(W2)	    at (6,0)		{$w$};

        \draw[->, thick]   (2.5,0.7) -- (3.5,0.7);
		
		\path	(W)	    edge        node	    {}	(V)
				(U)     edge        node	    {}	(V)
                (W2)	edge        node	    {}	(W3)
                (W3)	edge        node	    {}	(V2)
				(U2)    edge        node	    {}	(U3)
				(U3)    edge        node	    {}	(V2);
				
		\end{tikzpicture}
    \caption{The pointed tensor product of the left-hand graph with a 3-cycle is given by the right-hand graph.}
    \label{fig:pointTensProd}
\end{figure}
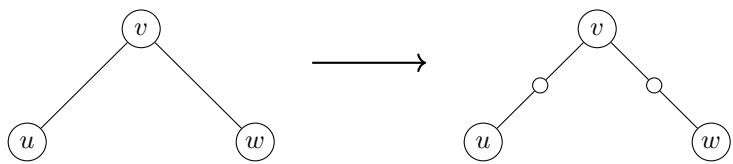

Note that this is not always well-defined, as one can choose which endpoint is identified with which. It turns out that this choice does not affect the graphic matroid of $G \otimes_e H$ and thus it does not affect the resulting Tutte polynomial. In this paper we will only consider graphs $H$ that are symmetric over $e$ and thus the product is actually well-defined.

We are now ready to state Brylawski's tensor product formula. Let $T_C$ and $T_L$ be the unique polynomials that satisfy the following system of equations
\begin{align*}
    (x-1)T_C(H; x, y) + T_L(H; x, y) &= T(H\setminus e; x, y) \\
    T_C(H; x, y) + (y-1)T_L(H; x, y) &= T(H/ e; x, y).
\end{align*}

We define
\begin{align*}
    x' &= \frac{T(H\backslash e; x, y)}{T_L(H; x, y)}
    &y' = \frac{T(H/ e; x, y)}{T_C(H; x, y)}.
\end{align*}

Let $n = |V(H)|$, $m = |E(H)|$ and $k = k(E(H))$. Brylawski's tensor product formula states that
\[
T(G \otimes_e H; x, y) = T_C(H; x, y)^{m - n + k} T_L(H; x, y)^{n - k} T(G; x', y').
\]

\section{Reducing along the curve $H_\alpha$}
\label{sec:redcurve}

In this section we describe how we can lift hardness results from a single point $(a, b) \in H_\alpha$ to the whole curve $H_\alpha$. We summarize the results from this section in the following theorem.

\begin{theorem}\label{thm:redCurveMain}
Let $(a,b) \in \mathbb{C}^2$. Also let $T(G; x, y)$ be the Tutte polynomial of $G$ and $\alpha := (a-1)(b-1)$. There exists a polynomial time reduction from computing $T$ on $(a,b)$ for graphs of given tree-, path- or cutwidth, to computing $T$ along $H_\alpha$ for graphs with the following width parameters.
\begin{itemize}
    \item If $|a| \notin \{0,1\}$ or if $|b| \notin \{0,1\}$ and $a \neq 0$, then the treewidth remains $\tw(G)$. The cutwidth and pathwidth become at most $\ctw(G) + 2$ and $\pw(G) + 2$ respectively.
    \item If $|b| \notin \{0,1\}$ and $a = 0$, then the treewidth remains $\tw(G)$. The pathwidth becomes at most $\pw(G) + 2$ and the cutwidth becomes at most $2\ctw(G)$.
    \item If $|a|, |b| \in \{0,1\}$, then the treewidth remains $\tw(G)$. The pathwidth becomes at most $\pw(G) + 2$ and the cutwidth becomes at most $12\ctw(G)$.
\end{itemize}
\end{theorem}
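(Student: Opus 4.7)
The plan is to follow the Brylawski tensor product strategy of \cite{jaeger_vertigan_welsh_1990} together with polynomial interpolation, but with the gadget $(H,e)$ chosen case by case so that the operation $G \mapsto G \otimes_e H$ inflates each width measure only by the stated amount.

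\textbf{Overall scheme.} Fix $(a,b)$ and $\alpha = (a-1)(b-1)$. For any gadget $(H,e)$, Brylawski's formula expresses $T(G \otimes_e H;a,b)$ as an efficiently computable scalar times $T(G;x'(H), y'(H))$, and the identity $(x'-1)(y'-1)=(a-1)(b-1)$ keeps the image point on $H_\alpha$. Thus for each gadget $H_k$, one query to an oracle that evaluates the Tutte polynomial along $H_\alpha$ on the graph $G \otimes_e H_k$ yields one value $T_\alpha(G; t_k)$ of the univariate polynomial $T_\alpha(G;t)$, whose degree is at most $|E(G)|$. Producing $|E(G)|+1$ gadgets whose $t_k$ are pairwise distinct and invoking Lemma~\ref{lem:IntPol} then recovers $T_\alpha(G;t)$ in full, from which $T(G;a,b)$ is read off by evaluating at $t=b-1$. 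I therefore need, in each case, a family of gadgets producing enough pairwise distinct $t_k$ while respecting the width budget.

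\textbf{Case 1.} If $|a|\notin\{0,1\}$, I use the stretch family $H_k = C_{k+1}$, so $G \otimes_e H_k$ replaces every edge of $G$ by a length-$k$ path. Solving the $2\times 2$ Brylawski system from Section~\ref{sec:redcurve} for $H=C_{k+1}$ gives $x'_k = C(a,b)\cdot a^k$ for a nonzero constant $C$, so the condition on $|a|$ forces pairwise distinct $x'_k$, and hence pairwise distinct $t_k$. If instead $|b|\notin\{0,1\}$ and $a\neq 0$, I dualise and use the thicken family ($k$ parallel edges with $e$ one of them), which analogously produces $y'_k = C'\cdot b^{k-1}$. In either case the width bookkeeping is local: subdivision vertices of an edge $uv$ can be introduced consecutively into bags (or placed contiguously in an ordering) already containing $u,v$, so treewidth is unchanged and pathwidth and cutwidth each grow by at most $2$; for the dual case a structurally analogous constant-cost placement yields the same $+2$ bounds.

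\textbf{Cases 2 and 3.} When $a=0$ the stretch family degenerates ($a^k$ takes at most one value for $k\ge 1$), so I first apply a single fixed thickening step (the two-parallel-edges gadget) to move the evaluation point off the axis $\{x=0\}$ while keeping it on $H_\alpha$; this reduces the task to a Case~1 instance on a graph whose treewidth is preserved, pathwidth grows by at most $2$, and cutwidth at most doubles (each original crossing edge now contributes two copies). When $|a|,|b|\in\{0,1\}$ both stretching (driven by $a^k$) and a single thickening (driven by $b^k$) keep the image point trapped on the unit circle, so I precede Case~1 by a small composite gadget that chains a constant-size thickening with a constant-size stretching to push the image off the unit circle. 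The composite is of constant size, so it preserves treewidth and adds at most $2$ to pathwidth; a direct vertex ordering argument gives cutwidth at most $12\ctw(G)$, where the factor $12$ arises as the product of the elementary per-operation cut blow-ups.

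\textbf{Main obstacle.} The hardest part I expect is Case~3, where two things must be verified simultaneously: (i) a single constant-size composite gadget, independent of $G$, moves every non-special $(a,b)$ with $|a|,|b|\in\{0,1\}$ off the unit circle without landing on another special point from~\eqref{eq:spec}, which is a finite case analysis over the relevant roots of unity on integer hyperbolas; and (ii) the composite admits an explicit vertex ordering realising cutwidth at most $12\ctw(G)$ rather than a larger multiple. The factor $12$ is best understood as the product of the elementary per-operation cutwidth costs of the two composed steps, and shrinking it below what naturally appears from this elementary analysis would require a genuinely new gadget rather than a bookkeeping refinement.
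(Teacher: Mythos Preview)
Your overall scheme is the right one and matches the paper, but Case~1 has a genuine gap in the sub-case $|b|\notin\{0,1\}$, $a\neq 0$. The ordinary $k$-thickening (replacing each edge by $k$ parallel copies) does \emph{not} admit a ``structurally analogous constant-cost placement'' for cutwidth: any edge that crosses a cut in the original ordering now contributes $k$ crossings, so $\ctw(_kG)$ is $k\cdot\ctw(G)$, not $\ctw(G)+2$. Since you need $k$ to range up to $|E(G)|+1$ for interpolation, this destroys the bound entirely. This is precisely why the paper introduces a new gadget, the \emph{insulated $k$-thickening} $_{(k)}G$: first $3$-stretch every edge, then $k$-thicken only the middle edge of each resulting $3$-path. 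Because the two new subdivision vertices can be placed consecutively in the cut ordering, every bundle of $k$ parallel edges crosses a cut that no other edge crosses, yielding $\ctw(_{(k)}G)\le \ctw(G)+k-1$.

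Even with this gadget, one more idea is needed: the insulated thickening is \emph{not} used for the full interpolation. Instead the paper shows (Lemma~\ref{lem:aNotZero}) that a \emph{single} application with $k\in\{2,3\}$ already moves the evaluation point to one with $|a'|\notin\{0,1\}$; a short case analysis on $|a|=1$, $a\neq 1$, $|b|\notin\{0,1\}$ rules out the possibility that both $k=2$ and $k=3$ fail. After that one constant-size step (cutwidth $+2$), the $k$-stretch handles the interpolation with no further cutwidth growth. Your Cases~2 and~3 are essentially correct in outline, and the paper's factor $12$ indeed arises as $6\times 2$: at most a $6$-thickening in Lemma~\ref{lem:abNormOne} followed by the $2$-thickening of Lemma~\ref{lem:aIsZero}.
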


Theorem \ref{thm:redCurveMain} lets us lift both algorithms and lower bounds from a point $(a,b)$ to the whole curve $H_\alpha$. While our main theorem only requires Theorem \ref{thm:redCurveMain} to be stated for integer valued points, we will state it as the most general version we can prove. We note that for Case \itemone{} of Theorem \ref{thm:main}, we do not care too much about constant multiplicative factors in the cutwidth, since we have an ETH bound of the form $\ctw(G)^{o(\ctw(G))}$. For Case \itemtwo{} we only need the bounds on the treewidth and pathwidth. Thus the blowup in the cutwidth is only relevant for Case \itemthree{}. In this case the only integer valued points that fall under the third item of Theorem \ref{thm:redCurveMain} are $(-1, 0), (0,-1)$ and $(-1,-1)$. These are all special points, which means that this item is not relevant for Case \itemthree{}.

In our proofs we will make use of the following transformations.
\begin{definition}[\cite{jaeger_vertigan_welsh_1990}]
    Let $G$ be a simple graph. We define the \emph{$k$-stretch} $^kG$ of $G$ as the graph obtained by replacing every edge by a path of length $k$. We define the \emph{$k$-thickening} $_kG$ of $G$ as the graph obtained by replacing every edge by $k$ parallel edges.
\end{definition}    

A new variant we introduce to keep the cutwidth low is defined as follows:
\begin{definition}
    We define the \emph{insulated $k$-thickening} $_{(k)}G$ as the graph obtained by replacing every edge by a path of length $3$ and then replacing the middle edge in each of these paths by $k$ parallel edges.
\end{definition}

\begin{figure}[h]
    \centering
    \begin{tikzpicture}[-, scale = 1.5]
		\tikzstyle{every state} = [inner sep = 0.1mm, minimum size = 5mm]
		
		\node[state]	(U)	    at (0,0)	    {$u$};
		\node[state]	(W1)	at (1,0)		{};
		\node[state]	(W2)	at (3,0)		{};
		\node[state]	(V)	    at (4,0)		{$v$};
		
		\path	(W1)	edge            	    node	    {}	(U)
						edge[in=150,out=30]	    node  	    {}	(W2)
						edge[in=170,out=10]     node	    {}	(W2)
						edge[in=210,out=-30]	node	    {}	(W2)
						edge[in=190,out=-10]	node	    {}	(W2)
				(W2)    edge	                node	    {}	(V);
				
		\end{tikzpicture}
    \caption{The result of applying the insulated $4$-thickening to an edge between $u$ and $v$.}
    \label{fig:kBundle}
\end{figure}
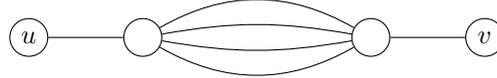

\subsection{Effect on width parameters}
We now give three lemmas that show how these transformations effect the parameters we use.
\begin{lemma} \label{lem:twPres}
     Let $G$ be a graph. Then we have that $\tw(^kG) \leq \tw(G)$, $\tw(_kG) \leq \tw(G)$ and $\tw(_{(k)}G) \leq \tw(G)$.
\end{lemma}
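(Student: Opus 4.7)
The plan is to transform any given tree decomposition of $G$ of width $\tw(G)$ into a tree decomposition for the transformed graph of the same width, by local modifications localized at each original edge. The easiest case is the $k$-thickening $_kG$: the vertex set is unchanged and each of the $k$ parallel copies of an edge has the same endpoints as the edge it replaces, so every bag that covered an original edge already covers all of its parallel copies. The original decomposition works verbatim.

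For $^kG$ and $_{(k)}G$, I would first dispose of the boundary case $\tw(G) \le 1$: here $G$ is a forest, so $^kG$ is also a forest and the underlying simple graph of $_{(k)}G$ is likewise a forest. Since under the decomposition definition used here parallel edges impose no extra constraint beyond the presence of both endpoints in some common bag, this yields $\tw \le 1$ for both transformed graphs in this regime.

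In the main case $\tw(G) \ge 2$, bags of size three are available. For each original edge $e=uv$ of $G$, pick a bag $B_e$ containing both $u$ and $v$ (guaranteed by the edge axiom) and attach to $B_e$ a fresh chain of new bags that houses the subdivision vertices. For the $k$-stretch with internal subdivision vertices $w_1,\dots,w_{k-1}$, the chain
\[
\{u,w_1,v\},\ \{w_1,w_2,v\},\ \dots,\ \{w_{k-2},w_{k-1},v\},\ \{w_{k-1},v\}
\]
does the job: keeping $v$ in every bag preserves the subtree property for $v$, and the $i$-th bag of the chain contains the new edge $w_{i-1}w_i$. For $_{(k)}G$ only two subdivision vertices $w_1,w_2$ appear, and a three-bag chain $\{u,w_1,v\},\{w_1,w_2,v\},\{w_2,v\}$ suffices, with all $k$ parallel copies of the middle edge housed in the middle bag. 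Every new bag has size at most $3 \le \tw(G)+1$.

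The main thing that needs checking is the subtree axiom on the modified decomposition: $u$ appears only in $B_e$ and in the first attached bag; $v$ persists through every bag of the chain; each $w_i$ appears in exactly two consecutive new bags. Chains attached for distinct original edges use disjoint new vertices and are grafted as disjoint subtrees of the new decomposition, so they do not interact, and the resulting width never exceeds $\tw(G)$. No cleverness beyond this local chain construction is needed; the slightly fussy part is just the size-$3$ chain bags, which is why we separated off the $\tw(G)\le 1$ case.
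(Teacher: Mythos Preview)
Your proposal is correct and follows essentially the same approach as the paper: handle the $k$-thickening by observing that parallel edges impose no new bag constraints, reduce $_{(k)}G$ to a $3$-stretch by the same observation, and handle the $k$-stretch by attaching size-$3$ bags for the subdivision vertices after separating off the small-treewidth case. The only cosmetic difference is that the paper argues one subdivision at a time and iterates, whereas you build the entire chain of new bags in one shot.
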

\begin{proof}
    First note that parallel edges do not affect the treewidth of a graph, since any bag covering one of these edges will necessarily cover all of them. This means that the original tree decomposition is also a tree decomposition for the $k$-thickening $_kG$. It also means that for the purposes of finding a tree decomposition, the insulated $k$-thickening is equivalent to a $3$-stretch. It remains to show that the $k$-stretch does not increase the treewidth.

    Note that $\tw(G) = 1$ if and only if $G$ is a tree. Since the $k$-stretch of a tree is also a tree, we find $\tw(^kG) = 1$.

    Now suppose that $\tw(G) \geq 2$. We will show that subdividing an edge does not affect the treewidth of the graph. By repeatedly subdividing edges we then find that the treewidth of the $k$-stretch $^kG$ is at most that of $G$.

    Let $uv \in E(G)$ and let $G'$ be the graph obtained by subdividing $uv$ into $uw$ and $wv$. Let $x$ be some node in the tree decomposition of $G$ such that $u,v \in B_x$. We create a tree decomposition of $G'$ by adding a node $x'$, with a corresponding bag $B_{x'} = \{u, v, w\}$, and connecting $x'$ to $x$. It is easy to see that the resulting decomposition is still a tree decomposition.
\end{proof}

\begin{lemma} \label{lem:pwPres}
     Let $G$ be a graph. Then we have that $\pw(^kG) \leq \pw(G) + 2$, $\pw(_kG) \leq \pw(G)$ and $\pw(_{(k)}G) \leq \pw(G) + 2$.
\end{lemma}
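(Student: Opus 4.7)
The plan is to treat the three operations separately, mirroring Lemma~\ref{lem:twPres} but adapting to the path decomposition setting, where we cannot branch off but must splice new bags into the linear order.

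First, for the $k$-thickening, the same argument as in Lemma~\ref{lem:twPres} applies verbatim: any bag that contains both endpoints of an edge $uv$ already witnesses every parallel copy, so a given nice path decomposition of $G$ is also a valid path decomposition of $_kG$, yielding $\pw(_kG)\le \pw(G)$.

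Second, for the $k$-stretch, I would start from a nice path decomposition of $G$ of width $\pw(G)$ with edge-introduce bags, so that each edge $uv\in E(G)$ is introduced at a unique bag $B$ with $u,v\in B$ and $|B|\le \pw(G)+1$. I would replace this single edge-introduce bag by a sequence of bags that incrementally walks the subdivision path $u-w_1-w_2-\cdots-w_{k-1}-v$: introduce $w_1$ (bag $B\cup\{w_1\}$, edge $uw_1$ now introduced), then for $i=1,\dots,k-2$ introduce $w_{i+1}$ (bag $B\cup\{w_i,w_{i+1}\}$, edge $w_iw_{i+1}$ introduced) and forget $w_i$ (bag $B\cup\{w_{i+1}\}$), and finally introduce the edge $w_{k-1}v$ from the bag $B\cup\{w_{k-1}\}$ before forgetting $w_{k-1}$ to return to $B$. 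Crucially, $u$ and $v$ remain in the bag throughout this insertion because later bags in the original decomposition may still need them; this is exactly the point where we lose the flexibility available in the treewidth proof. The largest bag in the inserted sequence has size $|B|+2 \le \pw(G)+3$, so $\pw({}^kG)\le \pw(G)+2$. Validity of the new decomposition (covering vertices, covering edges, and the subtree condition for each subdivision vertex) is immediate from the construction, since every $w_i$ appears only inside this contiguous insertion.

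Third, for the insulated $k$-thickening, observe that $_{(k)}G$ is obtained from the $3$-stretch $^3G$ by replacing the middle edge of each subdivided path by $k$ parallel copies. By the parallel-edges remark from the $k$-thickening case, these extra parallel edges require no additional bag coverage beyond the bag that already contains both endpoints of the middle edge. Hence I can reuse the $3$-stretch construction from the previous paragraph verbatim; the bag $B\cup\{w_1,w_2\}$ that introduces the (original) middle edge $w_1w_2$ also introduces all $k$ parallel copies at once. The maximum bag size is $|B|+2\le \pw(G)+3$, giving $\pw(_{(k)}G)\le \pw(G)+2$.

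The only real technical subtlety, and the reason the $+2$ slack appears instead of the $+1$ that the treewidth proof enjoys, is that the path decomposition's linear order forbids us from forgetting $u$ or $v$ mid-insertion when they are still needed later, so at the moment a subdivision edge $w_iw_{i+1}$ is introduced the bag must simultaneously host $u,v,w_i$ and $w_{i+1}$ on top of the rest of $B$. Once this is acknowledged, the verification of the three path-decomposition axioms for the modified decomposition is routine.
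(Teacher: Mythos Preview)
Your proof is correct and follows essentially the same approach as the paper: reduce the thickening and insulated thickening to the parallel-edge observation plus the $3$-stretch, and handle the $k$-stretch by splicing into the path decomposition a linear sequence of bags $B\cup\{w_i,w_{i+1}\}$ that walks along each subdivision path. The only cosmetic difference is that the paper triggers this insertion at the vertex-introduce bag for $v$ (handling all edges $uv$ with $u$ already present at once), whereas you trigger it at each edge-introduce bag; the resulting decompositions and the $+2$ width bound are the same.
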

\begin{proof}
    Like in the previous proof, we note that parallel edges do not affect the pathwidth of a graph. This means that the original path decomposition is also a path decomposition for the $k$-thickening $_kG$. Again, this also means that for the purposes of finding a path decomposition, the insulated $k$-thickening is equivalent to a $3$-stretch. It remains to show that the $k$-stretch does not increase the pathwidth by more than an additive factor of 2.

    Suppose we are given a path decomposition of $G$, of width $\pw(G)$. Whenever a new vertex $v$ is introduced in a bag $B_x$, in the path decomposition of $G$, we add the following bags. For each edge $uv \in E(G)$ such that $u \in B_x$ let $w_1^{uv}, \dots, w_k^{uv} = v$ be the path replacing $uv$ in $^kG$. We add the bags $B_x \cup \{w_i^{uv}, w_{i+1}^{uv}\}$ for $i = 1, k$ in order, one path at a time. Clearly the decomposition has width $\pw(G) + 2$ and every vertex and edge of $^kG$ is covered. Since the intermediate vertices on the paths only appear in two consecutive bags and vertex from $G$ is retained until all paths have been covered, no vertices are forgotten and then reintroduced. We find that it is a valid path decomposition and thus $\pw(^kG) \leq \pw(G) + 2$.
\end{proof}

\begin{lemma} \label{lem:ctwPres}
     Let $G$ be a graph. Then we have that $\ctw(^kG) \leq \ctw(G)$, $\ctw(_kG) \leq k\ctw(G)$ and $\ctw(_{(k)}G) \leq \ctw(G) + k - 1$.
\end{lemma}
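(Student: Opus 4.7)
My plan is to prove each of the three inequalities by exhibiting an explicit cut decomposition of the modified graph derived from a fixed cut decomposition $v_1, \dots, v_n$ of $G$ of width $\ctw(G)$, and then to count the edges crossing each cut of the new ordering.

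The middle bound $\ctw(_kG) \leq k\ctw(G)$ is immediate: reusing the ordering $v_1, \dots, v_n$, each edge of $G$ contributes $k$ parallel copies in $_kG$ joining the same endpoints, so the number of edges crossing any cut is multiplied by exactly $k$.

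For $\ctw(^kG) \leq \ctw(G)$, I would build a refined ordering of $V(^kG)$ as follows. Keep the relative order of $v_1, \dots, v_n$, and for each edge $e = v_jv_l$ of $G$ with $j < l$, let the replacement path be $v_j = w_0^e, w_1^e, \dots, w_k^e = v_l$. Directly after $v_j$ insert the block $w_1^e, w_2^e, \dots, w_{k-1}^e$ in that specific path order; when $v_j$ has several right-going edges the corresponding blocks are concatenated in any consistent order. The key point is that for any cut of the new ordering and any original edge $e$ that crosses the corresponding original cut (between $v_i$ and $v_{i+1}$ in the decomposition of $G$), exactly one path-edge of the replacement of $e$ crosses the new cut. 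Indeed, since the intermediate vertices of $e$ are placed contiguously right after one endpoint and are visited in path order, at every moment the partially-placed path has a single ``frontier'' edge in the current cut. Summing over the at most $\ctw(G)$ edges in the original cut yields the bound.

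For $\ctw(_{(k)}G) \leq \ctw(G) + k-1$, I would use the same template: keep $v_1, \dots, v_n$ in place, and for each edge $e = v_jv_l$ insert the pair $(a^e, b^e)$ consecutively right after $v_j$, concatenating such pairs for the right-going edges at $v_j$. The same analysis as above shows that cuts falling between consecutive pairs (or between a pair and an original $v_i$) contribute at most one path-edge per original edge of the corresponding $S_i$, hence at most $\ctw(G)$ edges. The only new situation is a cut falling between $a^e$ and $b^e$: there the single path-edge of $e$ that would have crossed is replaced by the $k$ parallel copies of $a^eb^e$, giving a total of at most $(\ctw(G) - 1) + k = \ctw(G) + k - 1$.

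The only real obstacle is careful bookkeeping at the cuts that fall strictly between the inserted intermediate vertices of a single path; path-order insertion makes this work, since it enforces a unique frontier edge at every moment and thus a one-for-one correspondence between crossings of the new cuts and crossings of the original ones (with the single exception handled in the insulated-thickening case).
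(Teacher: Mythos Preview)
Your proof is correct and follows essentially the same construction as the paper. The only stylistic difference is that the paper handles the $k$-stretch by an inductive single-subdivision step (placing the new vertex immediately after the left endpoint and observing that every cut still contains at most one of the two new edges in place of the old one), whereas you insert all intermediate vertices of each path at once; unrolling the paper's induction produces exactly your ordering. For the insulated $k$-thickening the paper likewise first realizes the $3$-stretch so that the two new vertices of each edge are adjacent in the ordering and then observes that any cut meets at most one bundle, which is precisely your ``cut between $a^e$ and $b^e$'' case giving $(\ctw(G)-1)+k$.
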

\begin{proof}
    In each case we will show that, given a cut decomposition of width $\ctw(G)$, we can construct a decomposition that respects the given bounds.

    Let $\pi = (v_1, \dots, v_n)$ be a cut decomposition of $G$, of width $\ctw(G)$. First note that the given decomposition already gives a decomposition of $_kG$, of width $k\ctw(G)$.

    Next, we will examine the $k$-stretch $^kG$. We will show that subdividing an edge does not increase the cutwidth. By repeatedly subdividing edges we then find that the $k$-stretch does not have larger cutwidth. Let $uv \in E(G)$ such that $u < v$ in a given cut decomposition $\pi = (v_1, \dots, v_n)$ of $G$. We create a new graph $G'$ from $G$, by adding a vertex $w$, edges $uw$ and $wv$, and removing the edge $uv$. We construct a cut decomposition $\pi'$ of $G'$ as follows. If $v_i, v_j \in V(G)$, such that $i < j$ (in $\pi$), then we also set $v_i < v_j$ in $\pi'$. For $w$, we set $w < v_i$ if $u < v_i$ in $\pi$ and $v_i < w$ otherwise. Note that for any cut of the decomposition, we have one of the following situations. (i) The cut appears before $u$ or after $v$, in which case the cut contains the same edges in both decompositions. (ii) The cut appears\footnote{In this case we use the convention that the cut immediately before and the cut immediately after $w$ get associated with the same cut $\pi$.} between $u$ and $v$, in which case it contains $uv$ in $\pi$ and either $uw$ or $wv$ in $\pi'$, but not both. In either case we find that the cut has not increased in width in $\pi'$ and thus the decomposition has the same width.

    Finally we examine the insulated $k$-thickening. As seen before we can subdivide edges without increasing the cutwidth. We will first create the $3$-stretch of $G$, by subdividing each edge twice. We will take special care to fully subdivide an edge before moving on to the next one, so that the two new vertices on the edge appear next to each other in the cut decomposition. We then replace the middle edge of each created $3$-path with $k$ parallel edges. Since the endpoints of any such bundle of edges are next to each other in the cut decomposition, each cut contains at most one bundle and thus we increase the cutwidth by at most $k-1$.
\end{proof}

We remark that the only significant blowup is that of the cutwidth, when applying the $k$-thickening. We will therefore limit our use of this transformation as much as possible.

\subsection{Reductions}

%\begin{theorm}
%Let $G = (V,E)$ be a graph and let $(a,b) \in \mathbb{C}^2$. Also let $T(G; x, y)$ be the Tutte polynomial of $G$ and $\alpha := (a-1)(b-1)$. There exists a polynomial time reduction from computing $T$ on $(a,b)$ to computing $T$ along $H_\alpha$, with the following properties. Let $\tw(G), \pw(G)$ and $\ctw(G)$ be the treewidth, pathwidth and cutwidth respectively.
% \begin{itemize}
%     \item If $|a| \notin \{0,1\}$ or if $|b| \notin \{0,1\}$ and $a \neq 0$, then the treewidth and cutwidth are presereved throughout the reduction. The pathwidth is increased by at most $2$.
%     \item If $|b| \notin \{0,1\}$ and $a = 0$, then the treewidth is preserved throughout the reduction. The pathwidth is increased by at most $2$ and the cutwidth is increased by at most a multiplicative factor of $2$.
%     \item If $|a|, |b| \in \{0,1\}$, then the treewidth is preserved throughout the reduction. The pathwidth is increased by at most $2$ and the cutwidth is increased by at most a multiplicative factor of $12$.
% \end{itemize}
% \end{theorem}

We can now prove Theorem \ref{thm:redCurveMain}. We will split the theorem into multiple separate cases and prove each case as a separate lemma. As such the proof of Theorem \ref{thm:redCurveMain} simply consists of Lemmas \ref{lem:aNotOne}, \ref{lem:aNotZero}, \ref{lem:aIsZero} and \ref{lem:abNormOne}. Note that the last point follows from first applying Lemma \ref{lem:abNormOne} and then one of the other lemmas.

\begin{lemma}\label{lem:aNotOne}
Let $(a,b) \in \mathbb{C}^2$ be a point with $|a| \notin \{0,1\}$. Also let $T(G; x, y)$ be the Tutte polynomial of $G$ and $\alpha := (a-1)(b-1)$. There exists a polynomial time reduction from computing $T$ on $(a,b)$ for graphs of given tree-, path- or cutwidth, to computing $T$ along $H_\alpha$ for graphs with the following with parameters. The treewidth and cutwidth remain $\tw(G)$ and $\ctw(G)$ respectively. The pathwidth becomes at most $\pw(G) + 2$.
\end{lemma}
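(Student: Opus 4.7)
The plan is to combine the $k$-stretch transformation, Brylawski's tensor product formula, and polynomial interpolation to convert $|E(G)|+1$ evaluations of $T(\cdot\,;a,b)$---taken on $k$-stretched copies of $G$ for varying $k$---into the full coefficient vector of the univariate polynomial $T_\alpha(G;t)$, which in particular contains the value $T(G;a,b) = T_\alpha(G;b-1)$. The reverse direction is immediate by substitution, so the two problems are polynomially equivalent up to the claimed width overheads.

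For each $k \in \{1,\ldots,|E(G)|+1\}$, set $G_k := {}^kG = G \otimes_e C_{k+1}$. By Lemmas~\ref{lem:twPres}, \ref{lem:pwPres}, and~\ref{lem:ctwPres}, the $k$-stretch preserves treewidth and cutwidth and increases pathwidth by at most $2$, matching the width bounds in the statement. I would then apply Brylawski's formula with $H = C_{k+1}$: using the easy evaluations $T(C_{k+1}\setminus e;x,y) = x^k$ and $T(C_{k+1}/e;x,y) = y+x+x^2+\cdots+x^{k-1}$, one solves the defining $2\times2$ system for $T_C$ and $T_L$ and substitutes the curve equation $y = 1+\alpha/(x-1)$. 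A short simplification yields the clean identities
\[
T_L(C_{k+1};a,b) = 1, \qquad T_C(C_{k+1};a,b) = \frac{a^k-1}{a-1}, \qquad (x',y') = \left(a^k,\;1+\tfrac{\alpha}{a^k-1}\right) \in H_\alpha,
\]
so that Brylawski gives
\[
T(G_k;a,b) = \frac{a^k-1}{a-1}\cdot T\!\left(G;\,a^k,\,1+\tfrac{\alpha}{a^k-1}\right).
\]

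The hypothesis $|a| \notin \{0,1\}$ enters twice. First, $|a^k| = |a|^k$ is strictly monotonic in $k$, so the values $a^k$ (equivalently, the parameters $t_k := \alpha/(a^k-1)$) are pairwise distinct. Second, $a^k \neq 1$ always, so the prefactor $(a^k-1)/(a-1)$ is nonzero and each oracle query can be divided out to recover $T(G;a^k,b_k)$. Writing
\[
T_\alpha(G;t) = \sum_{A \subseteq E(G)} \alpha^{k(A)-k(E)}\, t^{\,|A|-r(E)}, \qquad r(E) := |V(G)|-k(E),
\]
one sees that $t^{\,r(E)}T_\alpha(G;t)$ is a polynomial in $t$ of degree at most $|E(G)|$, so the $|E(G)|+1$ evaluations at distinct $t_k$ suffice, via Lemma~\ref{lem:IntPol}, to reconstruct it.

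The main technical step is the Brylawski computation above; the collapse of $T_L$ to $1$ on the curve $H_\alpha$ is the identity that keeps the multiplicative prefactor simple and the reduction concrete. A minor edge case is $\alpha = 0$ (equivalently $b = 1$, since $a\neq 1$ is forced by $|a|\neq 1$), where the $t_k$ all collapse to $0$ and interpolation fails; that point lies on the line $H_0^y$ and falls under Case~2 of Theorem~\ref{thm:main}, handled elsewhere.
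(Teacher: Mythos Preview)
Your argument is correct and mirrors the paper's proof: apply the $k$-stretch, invoke Brylawski to express $T({}^kG;a,b)$ as a nonzero multiple of $T(G;a^k,y'_k)$ with $(a^k,y'_k)\in H_\alpha$, observe that the $a^k$ are pairwise distinct because $|a|^k$ is strictly monotone, and interpolate, citing Lemmas~\ref{lem:twPres}--\ref{lem:ctwPres} for the width bounds. One small correction: the Brylawski prefactor should be $\bigl(\tfrac{a^k-1}{a-1}\bigr)^{|E(G)|-|V(G)|+k(G)}$ (the nullity of $G$), not exponent~$1$---the paper's preliminaries contain a typo writing $H$ for $G$ in the exponents---though this is immaterial since the factor is nonzero whenever $|a|\neq 1$.
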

We prove this lemma using essentially the same proof as given in \cite{jaeger_vertigan_welsh_1990}. Note that in our setting we use Lemmas \ref{lem:twPres}, \ref{lem:pwPres} and \ref{lem:ctwPres} to ensure that relevant parameters are not increased by the operations we perform.
\begin{proof}
By Brylawski's tensor product formula \cite{Brylawski1980}, we find the following expression for the $k$-stretch of the graph G
\begin{align} \label{eq:kStretch}
(1 + a + \dots + a^{k-1})^{k(E)}T\left(G; a^k, \frac{b + a + \dots + a^{k-1}}{1 + a + \dots + a^{k-1}}\right) = T(^kG; a, b).
\end{align}
Note that
\[
a^k - 1 = (1 + a + \cdots + a^{k-1})(a-1)
\]
and
\[
\frac{b + a + \dots + a^{k-1}}{1 + a + \dots + a^{k-1}} - 1 = \frac{b -1}{1 + a + \dots + a^{k-1}}.
\]
We find that the point on which we evaluate $T(G)$ in \eqref{eq:kStretch} also lies on $H_\alpha$.

By examining the formula for the Tutte polynomial, we find that for $n = |V(G)|$ the degree of the Tutte polynomial is at most $n^2 + n$. By choosing $k = 0, \dots, n^2 + n$, since $|a| \notin \{0,1\}$, we can find $T(G; x,y)$, for $n^2 + n+1$ different values of $(x, y) \in H_\alpha$. By lemma \ref{lem:IntPol}, we can now interpolate the univariate restriction
\[
T_\alpha(G; t) = T\left(G; \frac{\alpha}{t} + 1, t + 1\right).
\]
of $T(G)$ along $H_\alpha$.

Note that by Lemmas \ref{lem:twPres} and \ref{lem:ctwPres} the $k$-stretch preserves both the cutwidth and the treewidth of the graph and by Lemma \ref{lem:pwPres} the pathwidth increases by a constant factor. We find that any fine-grained parameterized lower bound for $H_\alpha$ extends to points $(a,b)$. 
\end{proof}

The next lemma is proven in a similar way, however it takes a bit more effort to make the numbers line up.

\begin{lemma} \label{lem:aNotZero}
Let $(a,b) \in \mathbb{C}^2$ be a point with $|b| \notin \{0,1\}$ and $a \neq 0$. Also let $T(G; x, y)$ be the Tutte polynomial of $G$ and $\alpha := (a-1)(b-1)$. There exists a polynomial time reduction from computing $T$ on $(a,b)$ for graphs of given tree-, path- or cutwidth, to computing $T$ along $H_\alpha$ for graphs with the following with parameters. The treewidth remains $\tw(G)$. The cutwidth and pathwidth become at most and $\ctw(G) + 2$ and $\pw(G) + 2$ respectively.
%Let $G = (V,E)$ be a graph and let $(a,b) \in \mathbb{C}^2$ be a point with $|b| \notin \{0,1\}$ and $a \neq 0$. Also let $T(G; x, y)$ be the Tutte polynomial of $G$ and $\alpha := (a-1)(b-1)$. There exists an polynomial time reduction from computing $T$ on $(a,b)$ to computing $T$ along $H_\alpha$, that preserves the treewidth and cutwidth.
\end{lemma}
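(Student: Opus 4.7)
I plan to follow the outline of Lemma~\ref{lem:aNotOne}, but replace the $k$-stretch by the insulated $3$-thickening so that the Brylawski point varies in the $y$-coordinate rather than the $x$-coordinate (necessary here because $|a|$ may lie in $\{0,1\}$ but $|b| \notin \{0,1\}$). Since $\ctw({}_{(k)}G) \leq \ctw(G) + k - 1$ (Lemma~\ref{lem:ctwPres}), the cutwidth budget of $+2$ permits only $k=3$, so we cannot vary $k$ and instead work with the single modified graph $G' := {}_{(3)}G$; by Lemmas~\ref{lem:twPres}--\ref{lem:ctwPres} this $G'$ has the widths claimed in the lemma. The idea is then to invoke the $H_\alpha$-oracle once on $G'$ to obtain the full polynomial $T_\alpha(G'; \cdot)$ (viewed as its list of coefficients), and exploit the Brylawski relation between $G'$ and $G$ to reconstruct $T_\alpha(G; \cdot)$.

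The numbers-lining-up step is the following Brylawski computation for the insulated $3$-thickening gadget $H'$ with distinguished edge $e=uv$. Writing $P_k := x + y + \cdots + y^{k-1}$, a direct case analysis on $H'$ gives $T(H'\setminus e; x, y) = x^2 P_3$ and $T(H'/e; x, y) = x P_3 + P_4$, so Cramer's rule on the Brylawski $2\times 2$ linear system yields $T_L = [(x-1)y^3 - P_3]/(\alpha - 1)$. The key identity $(y-1) P_k = y^k + \alpha - 1$---which holds on $H_\alpha$ because $(y-1)(y + \cdots + y^{k-1}) = y^k - y$ telescopes and $(x-1)(y-1) = \alpha$ gives $xy - x - y = \alpha - 1$---then collapses $(y-1)T_L$ to the clean form $y^3 - 1$. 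Brylawski's new $y$-coordinate is therefore $y' = 1 + (y^3 - 1)/T_C(x,y)$, and parameterizing $H_\alpha$ by $t := y - 1$ gives the explicit rational function $s(t) := ((t+1)^3 - 1)/T_C(\alpha/t + 1, t + 1)$. Since the numerator is a cubic in $t$ while $T_C$ grows only quadratically, $s$ is non-constant.

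To finish, I would evaluate the polynomial $T_\alpha(G'; \cdot)$ at $N = O(n^2)$ pairwise-distinct values $t_1, \dots, t_N$ (avoiding the finitely many zeros of the Brylawski prefactor $E$ and of $T_C$) and extract $T_\alpha(G; s(t_i)) = T_\alpha(G'; t_i)/E(t_i)$ from the identity $T_\alpha(G'; t) = E(t) \cdot T_\alpha(G; s(t))$. Since $s$ is a rational function of bounded degree, for $N$ large enough the $s(t_i)$ contain at least $n^2 + n + 1 \geq 1 + \deg T_\alpha(G; \cdot)$ distinct values; Lemma~\ref{lem:IntPol} then recovers $T_\alpha(G; \cdot)$, and evaluating at $t = b - 1$ yields $T(G; a, b)$. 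The hypotheses $a \neq 0$ and $|b| \notin \{0,1\}$ handle degeneracies: at $a = 0$ one computes $T_C(0, b) = 1 + b + b^2$ and verifies $(x', y') = (0, b) = (a, b)$, so the reduction would produce no new evaluation; and $b \in \{0, \pm 1\}$ would land on a special curve or at a singular value of the parameterization. The main obstacle I expect is carrying out the Brylawski arithmetic cleanly enough to reach $(y-1)T_L = y^3 - 1$, on which the whole reduction hinges.
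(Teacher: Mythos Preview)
Your reduction is pointing in the trivial direction. You assume an $H_\alpha$-oracle (returning the coefficients of $T_\alpha(\cdot\,;\cdot)$), call it on $G' = {}_{(3)}G$, and then unwind Brylawski's formula along the curve to recover $T_\alpha(G;\cdot)$ and finally $T(G;a,b)$. But with such an oracle you can simply call it on $G$ itself---which already meets all the stated width bounds---and read off $T_\alpha(G;\,b-1) = T(G; a, b)$. No gadget, no Brylawski computation, no interpolation is needed, and the hypotheses $a\neq 0$, $|b|\notin\{0,1\}$ are irrelevant. Your explanation of why these hypotheses matter (``at $a=0$ \ldots the reduction would produce no new evaluation'') is correspondingly beside the point: in this direction they play no role.

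The paper's proof, consistently with the stated goal of lifting algorithms and lower bounds \emph{from} a single point \emph{to} the whole curve, establishes the converse: given only an oracle for the single value $T(\cdot;a,b)$, recover the full polynomial $T_\alpha(G;\cdot)$. Here one cannot vary the parameter $t$; one must vary the \emph{graph} handed to the oracle. The paper first disposes of $|a|\notin\{0,1\}$ via Lemma~\ref{lem:aNotOne}, and of $a=1$ by a direct $k$-stretch computation. In the remaining case $|a|=1$, $a\neq 1$ it applies the insulated $k$-thickening \emph{once}, for a fixed $k\in\{2,3\}$, to obtain $T({}_{(k)}H; a,b) = (\text{explicit factor})\cdot T(H; A, B)$ for every graph $H$, where $(A,B)\in H_\alpha$ is a new fixed point; the crux is the case analysis showing that one of $k\in\{2,3\}$ forces $|A|\notin\{0,1\}$---this is precisely where $a\neq 0$ and $|b|\notin\{0,1\}$ enter---after which Lemma~\ref{lem:aNotOne} applied at $(A,B)$ finishes via $j$-stretches (which preserve cutwidth). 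Your Brylawski arithmetic for $k=3$ is correct and indeed matches the paper's expression for the new $y$-coordinate $B$, but the paper's argument hinges on the $x$-coordinate $A$, and uses the gadget to shift the \emph{oracle point} rather than to reparametrize the curve.
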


\begin{proof}
By Lemma \ref{lem:aNotOne} we may assume that $|a| = 1$. In the case that $a = 1$ we can still use the $k$-stretch, since
\begin{align*}
     (1 + a + \dots + a^{k-1})^{-k(E)}T(^kG; a, b) &= T\left(G; a^k, \frac{b + a + \dots + a^{k-1}}{1 + a + \dots + a^{k-1}}\right) \\
     &= T\left(G; 1, \frac{b + k - 1}{k}\right) \\
     &= T\left(G; 1, \frac{b - 1}{k} + 1\right).
\end{align*}

Where the first equality is \eqref{eq:kStretch}. Since $b \neq 1$ we can find arbitrarily many points on the curve $H_0^x$ this way. By lemma \ref{lem:IntPol} we can interpolate to find the $T(G)$ on the whole curve.

In the remaining case, i.e. $0 \neq a \neq 1$, we use the insulated $k$-thickening.

This results in the following transformation.
\[
T(_{(k)}G; a, b) = ((a+1)(1 + b + \dots + b^{k-1}) + a^2)^{k(E)}(1 + b + \dots + b^{k-1})^{|V| - k(E)}T\left(G; A, B \right)
\]
where
\begin{align*}
    A &= a^2\left(1 + \frac{a-1}{1 + b + \dots + b^{k-1}}\right) = a^2\left(\frac{a + b + \dots + b^{k-1}}{1 + b + \dots + b^{k-1}}\right) \\
    B &= 1 + \frac{b^k - 1}{(a + 1)(1 + b + \dots + b^{k-1}) + a^2}.
\end{align*}
Which allows us to move to a point with $|A| \notin \{0,1\}$, assuming that $|b| \notin \{ 0,1\}$ and $1 \neq a \neq 0$. Note that by Lemma~\ref{lem:ctwPres} this transformation only increases the cutwidth by an additive factor of $k-1$.

It is not too dificult to see that it suffices to take either $k = 2$ or $k = 3$. Suppose that $k = 2$ does not work, then we have
\[
\left|\frac{a + b}{1 + b}\right| = 1.
\]
From this we can deduce that $b = c \cdot a^{1/2}$ for some $c \in \mathbb{R}$. Now suppose that in this case $k = 3$ also does not work. We then find that
\[
\left|\frac{a + c\cdot a^{1/2} + c^2\cdot a}{1 + c\cdot a^{1/2} + c^2\cdot a}\right| = \left|\frac{a^{1/2} + c + c^2\cdot a^{1/2}}{1 + c\cdot a^{1/2} + c^2\cdot a}\right| = 1.
\]
By squaring this term and simplifying the resulting equations we find $a = 1$ and note that this case was handled previously.
\end{proof}

In the case that $a = 0$ (and $b \neq 0$), we first use the $2$-thickening to compute
\[
(1 + b)^{|V| - k(E)}T\left(G; \frac{a + b}{1 + b}, b^2\right) = T(_2G; a, b).
\]
and then continue with other transformations. Note that this approach increases the cutwidth by a factor of $2$ and thus for any lower bound $f(\ctw)$ we would get on the curve $H_\alpha$, we find a lower bound of $f(\ctw/2)$ for the point $(0,1 - \alpha)$. We summarize this in the following Lemma.
\begin{lemma} \label{lem:aIsZero}
Let $(a,b) \in \mathbb{C}^2$ be a point with $|b| \notin \{0,1\}$ and $a = 0$. Also let $T(G; x, y)$ be the Tutte polynomial of $G$ and $\alpha := (a-1)(b-1)$. There exists a polynomial time reduction from computing $T$ on $(a,b)$ for graphs of given tree-, path- or cutwidth, to computing $T$ along $H_\alpha$ for graphs with the following with parameters. The treewidth remains $\tw(G)$. The pathwidth becomes at most $\pw(G) + 2$ and the cutwidth becomes at most $2\ctw(G)$.
%Let $G = (V,E)$ be a graph and let $(a,b) \in \mathbb{C}^2$ be a non-special point with $|b| \notin \{0,1\}$ and $a = 0$. Also let $T(G; x, y)$ be the Tutte polynomial of $G$ and $\alpha := (a-1)(b-1)$. There exists a polynomial time reduction from computing $T$ on $(a,b)$ to computing $T$ along $H_\alpha$, that preserves the treewidth and preserves the cutwidth up to a factor of $2$.
\end{lemma}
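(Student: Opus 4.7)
My plan is to chain the Brylawski tensor product formula for the $2$-thickening with the reduction from Lemma~\ref{lem:aNotOne}. Concretely, I use the $2$-thickening to move from the pair $(G, (0, b))$ (whose first coordinate has magnitude $0$, so Lemma~\ref{lem:aNotOne} does not apply directly) to a new pair $({}_2 G, (-\sqrt{b}, \sqrt{b}))$, where the evaluation point still lies on $H_\alpha$ but now has first coordinate of magnitude $\sqrt{|b|} \notin \{0,1\}$. That brings us into the scope of Lemma~\ref{lem:aNotOne}, which finishes the reduction.

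For the Brylawski step, I substitute $a' = -\sqrt{b}$, $b' = \sqrt{b}$ into the identity in the statement of the lemma,
\[
T({}_2 G;\, a', b') \;=\; (1 + b')^{|V| - k(E)}\, T\!\left(G;\, \tfrac{a' + b'}{1 + b'},\, (b')^2\right),
\]
which specializes to
\[
T(G;\, 0, b) \;=\; \frac{T({}_2 G;\, -\sqrt{b}, \sqrt{b})}{(1 + \sqrt{b})^{|V| - k(E)}}.
\]
The denominator is nonzero because $|b| \neq 1$ forces $\sqrt{b} \neq -1$, and a direct computation $(-\sqrt{b}-1)(\sqrt{b}-1) = -(b-1) = \alpha$ confirms that the new evaluation point lies on $H_\alpha$. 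Since $|{-\sqrt{b}}| = \sqrt{|b|} \notin \{0,1\}$, Lemma~\ref{lem:aNotOne} applies to $({}_2 G,\, -\sqrt{b},\, \sqrt{b})$ and reduces its computation to evaluating $T$ along $H_\alpha$ on graphs whose treewidth and cutwidth match those of ${}_2 G$, and whose pathwidth is at most $\pw({}_2 G) + 2$. Combining this with the bounds $\tw({}_2 G) \leq \tw(G)$, $\pw({}_2 G) \leq \pw(G)$, and $\ctw({}_2 G) \leq 2\,\ctw(G)$ from Lemmas~\ref{lem:twPres}, \ref{lem:pwPres}, and~\ref{lem:ctwPres} yields precisely the claimed bounds.

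The main subtlety is the factor-of-two blow-up in the cutwidth. All operations used in the earlier lemmas (the $k$-stretch and the insulated $k$-thickening) keep the first coordinate stuck at $0$ whenever it starts at $0$, because the $x$-coordinate on the right-hand side of their Brylawski identities takes the form $a^k$ or $a^2 \cdot (\cdots)$. The ordinary $k$-thickening is the only operation in our toolkit that sends $a = 0$ to a nonzero value, but by Lemma~\ref{lem:ctwPres} it incurs a multiplicative factor $k$ in the cutwidth. Taking $k = 2$ is therefore both necessary and minimal, and this is the origin of the $2\,\ctw(G)$ bound in the statement.
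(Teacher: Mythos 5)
Your proposal applies the Brylawski identity \emph{in reverse}: you start from $T(G; 0, b)$ and express it as a constant times $T({}_2 G; -\sqrt b, \sqrt b)$, i.e.\ you solve for a preimage $(a', b')$ of $(0,b)$ under the $2$-thickening map, then invoke Lemma~\ref{lem:aNotOne} on $({}_2 G, -\sqrt b, \sqrt b)$. This reads Lemma~\ref{lem:aNotOne} as ``given the ability to compute $T$ along $H_\alpha$, one can evaluate $T$ at a fixed point'' --- but under that reading the whole reduction is vacuous: you could simply query $T_\alpha(G; b-1)$ on the original graph $G$ with no thickening and no width increase, so the $\sqrt b$ detour and the $2\ctw(G)$ bound do not reflect anything. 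The substance of Lemma~\ref{lem:aNotOne} (and of its proof by interpolation) is the \emph{opposite} direction: given an oracle that evaluates $T(\cdot; a, b)$ at a fixed point, one can recover all the coefficients of $T_\alpha(G; \cdot)$ by querying the oracle on the graphs ${}^k G$ and interpolating. That is the direction needed to push a hardness result on the curve $H_\alpha$ down to the single point $(a,b)$, and it is why the cutwidth increase of the queried graphs matters.

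The paper's proof uses the tensor formula in the \emph{forward} direction: $T({}_2 H; 0, b) = (1+b)^{|V(H)|-k(E(H))}\, T\bigl(H; \tfrac{b}{1+b}, b^2\bigr)$, so that a query $T({}_2 H; 0, b)$ supplies the value $T\bigl(H; \tfrac{b}{1+b}, b^2\bigr)$ for free. Since $b^2 \neq 0$ and $|b^2| \neq 1$ and $\tfrac{b}{1+b} \neq 0$, this new point satisfies the hypothesis of Lemma~\ref{lem:aNotOne} (when $|\tfrac{b}{1+b}| \notin\{0,1\}$, which holds in particular whenever $\re(b)\neq -\tfrac12$, and so always for integer $b$) or Lemma~\ref{lem:aNotZero} (otherwise). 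One then runs that interpolation argument with the derived oracle, so that the graphs fed to the assumed $(0,b)$-oracle are of the form ${}_2({}^k G)$ (or ${}_2$ of an insulated thickening), which have treewidth at most $\tw(G)$, pathwidth at most $\pw(G)+2$, and cutwidth at most $2\ctw(G)$ (up to an additive constant in the Lemma~\ref{lem:aNotZero} branch). Your observation that $a' = -\sqrt b$ always has $|a'| = \sqrt{|b|}\notin\{0,1\}$ is a nice way to avoid that case split --- but because you are solving for a preimage rather than computing an image, the points $(-\sqrt b,\sqrt b)$ never become accessible: the only oracle available evaluates at $(0,b)$, and your inverse identity does not convert a $(0,b)$-evaluation into a $(-\sqrt b, \sqrt b)$-evaluation on a smaller graph. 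To repair the argument, replace the inverse substitution by the forward one $T({}_2 H; 0, b) = (1+b)^{\cdots}\, T(H; \tfrac{b}{1+b}, b^2)$ and then compose with Lemma~\ref{lem:aNotOne} or~\ref{lem:aNotZero} exactly as the paper does.
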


The remaining case concerns points where $|a|, |b| \in \{0,1\}$. We show that we can reduce non-special points of this type to a point that is covered by one of the previous lemmas.
\begin{lemma} \label{lem:abNormOne}
Let $(a,b) \in \mathbb{C}^2$, such that $|a|, |b| \in \{0,1\}$. If $(a,b)$ is not one of the $8$ special points or on $H_1$, then there exists some transformation $f$ and a computable function $g$, such that $g(a,b) \neq 0$ and
\[
T(f(G); a, b) = g(a,b)T(G; a', b')
\]
where either $|a'| \notin \{0,1\}$ or $|b'| \notin \{0,1\}$ and such that $\tw(f(G)) \leq \tw(G)$, ${\pw(f(G)) \leq \pw(G) + 2}$ and $\ctw(f(G)) \leq 6\ctw(G)$. %the cutwidth is increased by at most a multiplicative factor of $6$. Furthermore $f$ does not increase the treewidth of the graph.
\end{lemma}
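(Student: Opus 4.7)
The plan is to do a case analysis on $(a,b)$ inside the region $|a|,|b|\in\{0,1\}$, using the hypothesis that $(a,b)$ is neither one of the eight special points nor on $H_1$. In each subcase I would exhibit a concrete composition of the three transformations ${}^kG$, ${}_kG$, ${}_{(k)}G$ of Section~\ref{sec:redcurve}, with small parameter $k$, so that Lemmas~\ref{lem:twPres}, \ref{lem:pwPres} and \ref{lem:ctwPres} deliver the claimed width bounds; note that cutwidth is preserved by stretches and insulated thickenings add only $k-1$, so the only operation charging a multiplicative $k$ to cutwidth is the ordinary $k$-thickening, constraining us to at most a $6$-thickening or a composition whose thickening factors multiply to at most $6$. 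Brylawski's formula then yields a nonzero scalar $g(a,b)$ and a new evaluation point $(a',b')$, for which we need $|a'|\notin\{0,1\}$ or $|b'|\notin\{0,1\}$.

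First I would handle $a=0$. Non-speciality and being off $H_1$ force $b\notin\{0,-1\}$, so $|b|=1$. Applying the $2$-thickening produces $(a',b')=\bigl(\tfrac{b}{1+b},\,b^{2}\bigr)$. Since $|b|=1$ we have $|a'|=1/|1+b|$, and a direct computation on the unit circle shows $|1+b|=1$ iff $b\in\{j,j^{2}\}$. For all other $b$ this finishes the case. For the two remaining values I would first apply a short stretch to land on a different point of the same hyperbola $H_{1-b}$ whose first coordinate is no longer a cube root of unity, and then apply the $2$-thickening; stretches do not increase cutwidth, so the budget is preserved.

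For $b=0$ the $k$-thickening is trivial ($b^{k}=0$, first coordinate unchanged) so I would use a $k$-stretch: it sends $(a,0)$ to $\bigl(a^{k},\,\tfrac{a+a^{2}+\dots+a^{k-1}}{1+a+\dots+a^{k-1}}\bigr)$. The analysis mirrors Lemma~\ref{lem:aNotZero}: $k=2$ works except when $a\in\{j,j^{2}\}$, in which case the insulated thickening is also unavailable because $(a+1)+a^{2}=1+a+a^{2}=0$ makes the denominator vanish. For these corner points I would compose a $2$-stretch with a $3$-thickening, for a cumulative cutwidth blowup of at most $2\cdot 3=6$. Finally, for $|a|=|b|=1$ (and $a,b$ both nonzero) I would apply a small $k$-stretch or $k$-thickening and verify, by writing out the explicit formulas and tracking when $\bigl|\tfrac{b+a+\dots+a^{k-1}}{1+a+\dots+a^{k-1}}\bigr|=1$ can hold for all $k\in\{2,3\}$ simultaneously, that the only $(a,b)$ for which every such transformation stays on the unit torus are the listed special points or points on $H_1$.

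The main obstacle will be precisely the cube-roots-of-unity edge cases: the identity $1+j+j^{2}=0$ makes the denominators in several of the $k$-stretch, $k$-thickening and insulated $k$-thickening formulas vanish at the natural choices of $k$, so the transformation has to be selected (and sometimes composed with a second one) to avoid these $0/0$ singularities. The cutwidth budget of $6\ctw(G)$ turns out to be just tight enough to accommodate either a single $k$-thickening with $k\le 6$ or a $2$-stretch-then-$3$-thickening style composition, and verifying that this budget covers \emph{every} exceptional point off $H_1$ and outside the list of special points is the finicky verification at the heart of the lemma.
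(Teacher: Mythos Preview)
Your overall strategy---case split on $a=0$, $b=0$, or $|a|=|b|=1$, handled by $k$-stretches and $k$-thickenings with $k\in\{2,3\}$, with the cube roots of unity isolated as the delicate subcases---is the paper's approach (the paper phrases it contrapositively, assuming every such transformation stays in $\{0,1\}$-norm and deducing that the point is special, but the content is the same).

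There is one concrete gap, in your $a=0$ subcase. The $k$-stretch is the \emph{identity} at any point with $a=0$: the formula gives $a'=a^{k}=0$ and $b'=(b+a+\cdots+a^{k-1})/(1+a+\cdots+a^{k-1})=b/1=b$. So ``first apply a short stretch to land on a different point'' does nothing at $(0,j)$ or $(0,j^{2})$; you are still at the same point, and the subsequent $2$-thickening lands back on the torus exactly as before. (Your phrase ``whose first coordinate is no longer a cube root of unity'' already signals a mix-up: the first coordinate here is $0$, not a cube root of unity.) The paper reverses the order: apply the $2$-thickening \emph{first} to leave the line $a=0$, reaching $(-j^{2},j^{2})$ resp.\ $(-j,j)$, and then invoke the $|a|=|b|=1$ analysis to apply a further $3$-stretch or $3$-thickening. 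The conservative worst case among such compositions is a $2$-thickening followed by a $3$-thickening, which is the origin of the factor $6$ in the cutwidth bound. (Your $b=0$ case is fine; note only that the $2$-stretch there contributes a factor $1$ to cutwidth, not $2$, so that composition actually costs $3$ rather than the $2\cdot 3=6$ you state.)
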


\begin{proof}
We adapt a proof due to \cite{jaeger_vertigan_welsh_1990}.

Suppose that for every such transformation $f$ we have either $|a'|, |b'| \in \{0,1\}$ or $a'$ and $b'$ are not well-defined. We will show that in this case $(a,b)$ must be one of the $8$ special points or on $H_1$.

First assume that $|a| = |b| = 1$. Note that applying the $2$-stretch gives
\[
(a', b') = \left(a^2, \frac{b+a}{1+a}\right).
\]
By assumption, we have either $a = -1$ or
\[
\left|\frac{b+a}{1+a}\right| \in \{0,1\},
\]
which implies $b = -a$, $b = a^2$ or $b = 1$.
Using the $2$-thickening we find $b = -1$, $a = -b$, $a = b^2$ or $a = 1$. This reduces the list of possible points to
\begin{align*}
    (a,b) \in 
    &\{ (1,1), (-1,-1), (j, j^2), (j^2, j), (-1, i), (-1, -i), \\
    &(i, -1), (-i, -1)\} \cup \{ (a, -a): |a| = 1 \}
\end{align*}
This list can be further reduced by applying the $3$-stretch and $3$-thickening to find that only
\begin{align*}
    (a,b) \in 
    &\{ (1,1), (-1,-1), (j, j^2), (j^2, j), (-i, i), (i, -i) \}
\end{align*}
remain.

Now suppose that $|b| = 0$. Note that, again applying the 2-stretch gives
\[
(a', b') = \left(a^2, \frac{a}{1+a}\right).
\]
By assumption we have either $a = -1$, $a = 0$ or $|1 + a| = 1$. In the last case we must have $a \in \{0, j, j^2\}$. If $a \in \{0, j, j^2\}$, then $(a', b') \in \{(j, -j), (j^2, -j^2\}$ and we may apply a $3$-stretch or $3$-thickening by the previous case. We find that the only points left are $(-1,0)$ and $(0,0)$, which lies on $H_1$.

Using the $2$-thickening and a further $3$-stretch or $3$-thickening we find that the only points of the form $(0,b)$ are $(0,0)$, $(0,-1)$.

Note that the worst blowup in the cutwidth occurs when we apply a $2$-thickening, followed by a $3$-thickening, which effectively results in a $6$-thickening and thus a multiplicative blowup in the cutwidth of $6$.
\end{proof}

\section{Counting forests}
\label{sec:forests}
In this section we consider the problem of counting the number of forests in a graph. This problem corresponds to the point $(2,1)$ and thus by Theorem~\ref{thm:redCurveMain} any bounds found for this problem can be lifted to the whole curve $H_0^y$.

We trivially get the following lower bound from existing bounds on the non-parameterized version of the problem~\cite{Dell_2014}.
\begin{theorem} \label{thm:ForLow}
    Computing the Tutte polynomial along the curve $H_0^y$ cannot be done in time $2^{o(\ctw(G))} n^{O(1)}$, unless \textsc{\#ETH} fails.
\end{theorem}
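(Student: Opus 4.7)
I would prove Theorem~\ref{thm:ForLow} as an essentially immediate consequence of the non-parameterized \textsc{\#ETH}-based lower bound of Dell et al.~\cite{Dell_2014} for evaluating the Tutte polynomial at the point $(2,1)$ (i.e.\ counting the forests of a graph). Since $(2,1)$ lies on the curve $H_0^y = \{(x,y) : y = 1\}$, any algorithm that computes the Tutte polynomial along $H_0^y$ in particular evaluates $T(G;2,1)$. Hence a lower bound at this single point immediately transfers to the whole curve; no reduction between points is needed here and Theorem~\ref{thm:redCurveMain} is not invoked.

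The substantive step is to translate the Dell et al.\ bound, which is phrased in the edge-count parameter $m$, into a cutwidth bound. Their result rules out $2^{o(m)}\, n^{O(1)}$ time algorithms for computing $T(G;2,1)$ under \textsc{\#ETH}, and the hard instances inherit sparsity $m = O(n)$ from the sparsification step baked into the reduction from \textsc{\#3-SAT}. For every graph we trivially have $\ctw(G) \leq m$, since any single cut in a linear layout contains at most $m$ edges; on the Dell et al.\ hard family this forces $\ctw(G) = O(n)$, so $2^{o(\ctw(G))} \subseteq 2^{o(n)}$ on these instances.

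Combining the two ingredients, a hypothetical $2^{o(\ctw(G))}\, n^{O(1)}$ algorithm for computing the Tutte polynomial along $H_0^y$ would, applied to this family, evaluate $T(G;2,1)$ in time $2^{o(n)}\, n^{O(1)}$, contradicting \textsc{\#ETH}. The only point that requires mild care is checking the sparsity of the hard family in~\cite{Dell_2014}, and specifically that its cutwidth is $O(n)$ rather than some weaker bound such as $O(n \log n)$; this is the one nontrivial verification in the argument and it follows by tracking the sparsification step through the chain of reductions, which only ever produces graphs of bounded average degree.
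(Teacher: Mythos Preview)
Your proof is correct and takes essentially the same approach as the paper, which simply states that the bound follows trivially from the non-parameterized lower bound of Dell et al.~\cite{Dell_2014}; you have just spelled out the (indeed trivial) translation via the observation $\ctw(G)\le m$. One minor simplification: since $\ctw(G)\le m$ holds for \emph{every} graph, a $2^{o(\ctw(G))}n^{O(1)}$ algorithm is already a $2^{o(m)}n^{O(1)}$ algorithm (for nondecreasing $o(\cdot)$), so the detour through sparsity of the hard family and the ``nontrivial verification'' you flag can be dropped.
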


To complement this lower bound, we give an algorithm to count the number of forests in a graph $G$ in $c^{\tw(G)}$ time. The algorithm uses a rank based approach, the runtime of which depends on the rank of the so called \emph{forest compatibility matrix}. We start by introducing this matrix and examining its rank.
\subsection{Notation}

We will use the notation $[n] = \{1, \dots, n\}$. Unless stated otherwise, we will assume the set $[n]$ to be ordered. For sets $A, B \subseteq [n]$, we will write $A < B$ to indicate that $a < b$ for all $a \in A$ and $b \in B$.

We write $\pi \vdash S$ to indicate that $\pi$ is a partition of $S$. We write $\pi|_S$ for the partition given by restricting elements of $\pi$ to the set $S \subseteq [n]$. Given two partitions $\pi_1 \vdash S$ and $\pi_2 \vdash S$, we say that $\pi_1$ is coarser than $\pi_2$, written $\pi_1 \geq \pi_2$, if every element of $\pi_2$ is subset of on element of $\pi_1$. Given two partitions $\pi \vdash S$ and $\rho \vdash S'$ we define the join $\pi \sqcup \rho \vdash S \cup S'$ of the partitions as the finest partition of $S \cup S'$ such that both $(\pi \sqcup \rho)|_S \geq \pi$ and $(\pi \sqcup \rho)|_{S'} \geq \rho$. Intuitively put $\pi$ and $\rho$ together and merge and overlapping elements.

We will consider matrices indexed by partitions. We will write $M[\pi, \rho]$ for the element in the row corresponding to $\pi$ and the column corresponding to $\rho$. We will write $M[\pi]$ for the vector containing all elements in the row corresponding to $\pi$.

\subsection{Rank bound}
% counting fvs?
%\todo{Define notation for matrix rows}
%\todo{Notation for ordering?}

In this section we prove the following theorem, for the so called  \emph{forest compatibility matrix} $F_n$.

\begin{theorem} \label{thm:ForRank}
The rank of $F_n$ is at most $C_n$, the $n^{\text{th}}$ Catalan number. In particular $\rank(F_n) = O(4^n n^{-3/2})$
\end{theorem}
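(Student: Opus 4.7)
The plan is to prove the bound by showing that the rows of $F_n$ indexed by non-crossing partitions span the entire row space; since there are $C_n = \binom{2n}{n}/(n+1) = \Theta(4^n n^{-3/2})$ such partitions of $[n]$, this yields both the combinatorial bound and its asymptotic.

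First I would fix a local characterization of the entries. A direct edge-count (any spanning forest of $\pi$ has $n - |\pi|$ edges, and the connected components of the edge-union are exactly the blocks of $\pi \sqcup \rho$) shows that $F_n[\pi, \rho] = 1$ if and only if $|\pi \sqcup \rho| = |\pi| + |\rho| - n$, equivalently iff the bipartite multigraph $H(\pi, \rho)$---with one vertex per block of $\pi$ and per block of $\rho$, and one edge per $i \in [n]$ joining the $\pi$-block and $\rho$-block containing $i$---is a forest. This description has the crucial feature that changing $\pi$ only at positions in some subset $S \subseteq [n]$ changes $H(\pi, \rho)$ only in the edges incident to positions in $S$, keeping the rest of the row's value under strict control.

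The core of the proof is an induction on a well-founded crossing measure, for instance
\[
\chi(\pi) \;=\; \bigl|\bigl\{(a,b,c,d) : a<b<c<d,\ \{a,c\}\subseteq X,\ \{b,d\}\subseteq Y,\ X \neq Y \in \pi\bigr\}\bigr|,
\]
for which $\chi(\pi) = 0$ characterizes non-crossing $\pi$. For the inductive step, fix a crossing of $\pi$ at $a<b<c<d$ in blocks $X \ni a,c$ and $Y \ni b, d$, and establish an uncrossing identity
\[
F_n[\pi] \;=\; \sum_{j} c_j\, F_n\bigl[\pi^{(j)}\bigr],
\]
where each $\pi^{(j)}$ is obtained from $\pi$ by a local surgery on $X \cup Y$ (candidates: merge $X$ and $Y$; replace $\{X, Y\}$ by one of the two non-crossing resolutions ``parallel'' or ``nested'' of $X \cup Y$; or refine by splitting off individual elements of $X$ or $Y$) and satisfies $\chi(\pi^{(j)}) < \chi(\pi)$. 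Since the surgeries only touch $X \cup Y$, the identity can be verified column by column through a finite case analysis on how a generic $\rho$ intersects $X \cup Y$, using the bipartite forest characterization above. With the identity in hand, the induction hypothesis places each $F_n[\pi^{(j)}]$ in the non-crossing span, hence so does $F_n[\pi]$.

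The main obstacle will be pinning down the precise uncrossing identity. The natural four-term guess, ``$F_n[\pi]$ equals the merge plus twice the discrete refinement minus the nested and parallel resolutions,'' already handles many columns (for $n = 4$ and $\pi = \{\{1,3\},\{2,4\}\}$, a short calculation shows it works for every $\rho$ with three or more blocks) but breaks on columns whose own blocks cross $X \cup Y$, so additional correction terms involving refinements of $X$ or $Y$ must be included. Carrying out this bookkeeping and verifying the full identity is the technical crux of the argument; it is closely analogous to---and at least as delicate as---the inductive uncrossing used for the matchings connectivity matrix in~\cite{CyganKN18}, with the added complication that our argument works over $\mathbb{R}$ rather than $GF(2)$ and hence cannot exploit the cancellation shortcuts available there.
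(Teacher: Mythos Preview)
Your high-level strategy matches the paper's: show that the rows indexed by non-crossing partitions span the row space, then invoke the Catalan count and its standard asymptotic. The gap is exactly where you say it is---you have not actually produced an uncrossing identity, and your own calculation shows the natural four-term candidate fails on some columns. Without that identity the inductive step is missing, and nothing you have written guarantees that suitable correction terms exist or that all the resulting $\pi^{(j)}$ have strictly smaller $\chi$ (merging $X$ and $Y$, for instance, can create new crossings with third blocks and need not decrease your measure).

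The paper sidesteps this obstacle by a different induction. Rather than decreasing a crossing count, it observes that every partition is non-crossing with respect to \emph{some} linear order $p$ on $[n]$, and then walks from $p$ to the standard order by adjacent transpositions $(i,i{+}1)$. The key lemma (Lemma~\ref{lem:PartSwap}) shows that if $\pi$ is non-crossing for $p$, then $F_n[\pi]$ lies in the span of rows non-crossing for $p \circ (i,i{+}1)$. The payoff of working one adjacent swap at a time is that \emph{all} crossings in the new order necessarily involve the two blocks $A \ni i$ and $B \ni i{+}1$; restricting to $A \cup B$ and contracting its maximal monochromatic intervals collapses the question to a partition on either $4$ or $5$ points. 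For $F_4$ and $F_5$ one verifies by direct (computer) calculation that the non-crossing rows form a basis, and two lifting lemmas (interval blowup, Lemma~\ref{lem:BlockUp}; projection to a subset, Lemma~\ref{lem:PartProj}) transport the resulting linear relation back to $F_n$ while preserving non-crossingness (Lemmas~\ref{lem:CrossInterval} and~\ref{lem:CrossProj}). So the paper never writes down a closed-form uncrossing identity at all; it trades your open-ended search for correction terms for a finite verification at $n'\le 5$ together with structural lemmas that localize and lift.
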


Before we can define the forest compatibility matrix, we first need the following definitions.

\begin{definition}
    We say that a boundaried graph $G = ([n] \cup V, E)$, with boundary $[n]$, is a \emph{representative forest} for a partition $\pi \vdash [n]$, if for every $S \in \pi$ there is some connected component $C \subseteq V(G)$ such that $C \cap [n] = S$. 
    
    Given two boundaried graphs $G$ and $H$, both with boundary $B$, we define the \emph{glue} $G \oplus H$ of $G$ and $H$ as follows. First take the disjoint union of $G$ and $H$. Then identify each $v \in B$ in $G$ with its analogue in $H$.
\end{definition}

This definition shows how one can relate forests and partitions. Throughout the section we will mostly consider partitions as they capture only the information we need. The following definition elaborates on this by lifting the concept of cycles in a clue of two trees to a cycle inducet by two partitions.

\begin{definition} \label{def:RepFor}
    Let $\pi, \rho \vdash [n]$ and let $G_\pi$ and $G_\rho$ be representative forests of $\pi$ and $\rho$ respectively. We say that $\pi$ and $\rho$ \emph{induce a cycle} if $G_\pi \oplus G_\rho$ contains a cycle.
\end{definition}
It is not hard to see that it does not matter which representatives $G_\pi$ and $G_\rho$ we choose, since one only needs to know the connected components on $[n]$. This means that this definition is indeed well-defined. For this same reason, in the following definition, we only need a row and column for each partition of the separator.
\begin{definition}
    We define the \emph{forest compatibility matrix} $F_S$ of a set $S$ by
    \[
        F_S[\pi, \rho] := 
            \begin{cases}
                0   & \text{if $\pi$ and $\rho$ induce a cycle} \\
                1   & \text{otherwise}
            \end{cases}
    \]
    for any $\pi, \rho \vdash S$. We will write $F_n := F_{[n]}$.
\end{definition}

Finally we will need the following definition to bound the rank of the forest compatibility matrix.

\begin{definition}
    We say that two sets $A, B \in \pi$ are \emph{crossing} on an ordering $<$, if there are ${a_1, a_2 \in A}$ and $b_1, b_2 \in B$ such that $a_1 < b_1 < a_2 < b_2$ or $b_1 < a_1 < b_2 < a_2$. If a partition contains two crossing sets, we refer to it as a \emph{crossing partition}.
\end{definition}

Throughout this section it will sometimes be convenient to think of the ordering as a permutation. %When relevant we will use the notation $<_p$ for the ordering corresponding to the permutation $p$, i.e. for $a, b \in [n]$ we have $a <_p b$ if and only if $p(a) < p(b)$.

The general idea behind the proof of Theorem \ref{thm:ForRank} is to show that any partition can be `uncrossed', i.e. its row in $F_n$ can be written as a linear combination of rows, corresponding to non-crossing partitions. 

\subsubsection{Manipulating partitions}
For the proof of Theorem \ref{thm:ForRank} we will need the following operations, which will allow us to manipulate partitions by contracting an expanding intervals and projecting down to subsets of the ground set.

\begin{definition}
    %\todo{More explicit, define terms interval etc., term 'contracting  is not nescesary, \_i nescesary?}
    An \emph{interval} is a subset $I \subseteq [n]$ of consecutive numbers, i.e. there is no $b \notin I$ such that $a_1 < b < a_2$ for some $a_1, a_2 \in I$.
    Given an interval $I$ and a partition $\pi$ of $[n]$, we define the \emph{contraction} $\pi -_i I$ of $\pi$ by $I$ as the partition of the set $[n] -_i I := ([n]\cup \{i\})\setminus I$ given by we merging all sets that intersect $I$ and replacing $I$ by a single element $i$, i.e.
    \[
    \pi -_i I := \{S \in \pi : S \cap I = \emptyset\} \cup \left\{ \left( \bigcup\{ S \in \pi : S \cap I \neq \emptyset \} \cup \{i\} \right) \setminus I \right\}.
    \]
    If we have an ordering on $[n]$, we place $i$ in the same place in the ordering as $I$, that is for any $a \in [n] \setminus I$ and $b \in I$, we have $a < b$ if and only if $a < i$.
    
    %given by contracting the points in $I$ to a single point $i$. We will place this point in the same place in the ordering as $I$.\footnote{Note that this is well-defined, since $I$ is an interval.} If $I$ intersects multiple sets of $\pi$, we combine all of those sets into one. 
    
    We define the \emph{blowup} $\pi +_i I$ of $\pi$ by $I$ as the partition of the set $[n] +_i I := ([n]\cup I)\setminus \{i\}$, given by adding all elements of $I$ to the set that contains $i$ and then removing $i$, i.e.
    \[
    \pi +_i I := \{S \in \pi : i \notin S \} \cup \{(S \setminus \{i\}) \cup I : i \in S\}.
    \]
    Again we place $I$ in the same place in the ordering as $i$.
\end{definition}

%If it is clear from context or if the name of the element is irrelevant we may drop the subscript $i$, when applying a contraction or blowup. 
We will sometimes abuse notation and refer to $[n] -_i I$ as simply $[n']$ for $n' = n - |I| + 1$.

We now turn our attention to a number of useful lemmas. The first lemma intuitively says that if we contract an interval contained in some partition, then any decomposition of the resulting smaller partition gives the same decomposition of the larger partition.

\begin{lemma} \label{lem:BlockUp}
    Let $\pi$ be a partition of $[n]$ and let $I$ be an interval such that $I \subseteq S \in \pi$. We set $n' = n - |I| + 1$. Suppose that for some set of partitions $\mathcal{R}$ of $[n']$, we have $F_{n'}[\pi -_i I] = \sum_{\rho \in \mathcal{R}} a_\rho F_{n'}[\rho]$. Then ${F_n[\pi] = \sum_{\rho \in \mathcal{R}} a_\rho F_n[\rho +_i I]}$.
\end{lemma}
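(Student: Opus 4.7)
The plan is to prove the identity column by column: I fix an arbitrary $\sigma \vdash [n]$ and show that $F_n[\pi,\sigma] = \sum_{\rho \in \mathcal{R}} a_\rho F_n[\rho +_i I, \sigma]$. The key structural observation is that $I$ is contained in a single block of $\pi$ (by hypothesis) and in a single block of every $\rho +_i I$ (by definition of the blowup). This puts strong restrictions on how these particular rows can interact with any fixed column $\sigma$.

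I would split the verification into two cases depending on how $\sigma$ sees $I$. In the first (trivial) case, two distinct elements of $I$ lie in the same block of $\sigma$. Then in each of $\pi$ and $\rho +_i I$ these two elements are already joined through the common $I$-containing block, so the glue of their representative forests with any representative of $\sigma$ must contain a cycle. Hence $F_n[\pi,\sigma] = 0$ and $F_n[\rho +_i I,\sigma] = 0$ for every $\rho \in \mathcal{R}$, and the identity holds trivially.

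The remaining case is that the elements of $I$ lie in $|I|$ distinct blocks of $\sigma$. Set $\sigma' := \sigma -_i I \vdash [n']$. The heart of the argument is the claim that for every $\rho' \vdash [n']$,
\[
F_n[\rho' +_i I,\sigma] \;=\; F_{n'}[\rho',\sigma'].
\]
Granting this claim, applying it with $\rho' = \pi -_i I$ and with $\rho' = \rho$ for each $\rho \in \mathcal{R}$ and then invoking the row-level hypothesis at column $\sigma'$ immediately yields the desired identity. To prove the claim itself, I build a representative forest of $\rho' +_i I$ from any representative $G_{\rho'}$ of $\rho'$ by deleting $i$, introducing the $|I|$ vertices of $I$ with an arbitrary spanning tree on them, and reattaching every former neighbor of $i$ in $G_{\rho'}$ to one fixed element of $I$. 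In the glue $G_{\rho' +_i I} \oplus G_\sigma$ the $|I|-1$ tree edges spanning $I$ can be contracted without affecting whether the graph is a forest. By the second-case hypothesis no edge of $G_\sigma$ has both endpoints in $I$, so this contraction creates no self-loops; the resulting graph is (after relabeling) precisely $G_{\rho'} \oplus G_{\sigma'}$, where one checks that collapsing $I$ to a single vertex $i$ in $G_\sigma$ produces a valid representative forest of $\sigma' = \sigma -_i I$.

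The main technical subtlety is verifying that the described contraction really does preserve the cycle/forest status of the entire glue, independently of the chosen representatives; the second-case hypothesis is precisely what rules out the self-loops that would otherwise corrupt the equivalence. The remaining bookkeeping is direct unwinding of the definitions of the join $\sqcup$, the contraction $-_i$ and the blowup $+_i$, together with the standard fact that contracting forest edges does not change whether a graph is a forest.
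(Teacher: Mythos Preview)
Your proposal is correct and follows essentially the same route as the paper: argue column by column, dispose of the case where two elements of $I$ share a block of the column partition (both sides vanish), and in the remaining case establish the key identity $F_n[\rho +_i I,\sigma] = F_{n'}[\rho,\sigma -_i I]$ before invoking the hypothesis. The only cosmetic difference is that you prove the key identity by an explicit representative-forest contraction, whereas the paper argues by tracking how a putative cycle passes through $I$ (and singles out an extra subcase---cycles that ``only require $I$''---which your contraction argument absorbs automatically).
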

\begin{proof}
    Let $\chi$ be some partition of $[n]$. Note that if $|S' \cap I| \geq 2$ for some $S' \in \chi$, we have that $F_n[\pi, \chi] = F_n[\rho +_i I, \chi] = 0$. Thus we may assume that $\chi$ contains no such sets. Also note that if there is some cycle that only requires $I$ and not the rest of $S$, then again we have that $F_n[\pi, \chi] = F_n[\rho +_i I, \chi] = 0$. Thus we may assume that any cycle induced by $\chi$ and $\pi$ that has a set that intersects $I$, also requires a set that intersects $S \setminus I$, but not $I$.

    We now claim that for $\chi$ with the above assumptions we have ${F_n[\rho +_i I, \chi] = F_n[\rho, \chi-_i I]}$ for any $\rho$. This would immediately imply that for such $\chi$ \[
    F_n[\pi, \chi] = F_n[\pi -_i I, \chi -_i I] = \sum_{\rho \in R} a_\rho F_n[\rho, \chi -_i I] = \sum_{\rho \in R} a_\rho F_n[\rho +_i I, \chi],
    \]
    which proves the lemma.
    
    First note that if $\rho$ and $\chi-_i I$ induce a cycle, that does not involve $i$, then $\rho +_i I$ and $\chi$ also induce that same cycle and vice versa.
    
    Now suppose that $\rho +_i I$ and $\chi$ induce a cycle involving $I$, then there is some $S'$ in the cycle that intersects $I$. By assumption there is also some set $S'' \in \chi$ in the cycle, that intersects $S \setminus I$, but not $I$. W.l.o.g. the cycle does not loop back on itself and thus these sets are the only two in the cycle that intersect $S$. Note that $S'$ gets merged into the set containing $i$, but $S''$ does not. Since the rest of the cycle does not involve $I$, it is unaffected and thus the cycle remains intact after contraction.
    
    In the reverse direction we assume that $\rho$ and $\chi-_i I$ induce a cycle involving $i$, then it is clear to see that this cycle survives after blowing up $i$, using one of the sets in $\chi$ that intersect $I$. This proves the claim and thus the lemma.
\end{proof}

This next lemma intuitively says that if we project our partition to a subset of the ground set, then any decomposition of the resulting smaller partition gives the same decomposition of the larger partition.

\begin{lemma} \label{lem:PartProj}
    Let $\pi$ be a partition of $[n]$ and let $n' < n$. Suppose that for some set of partitions $\mathcal{R}$ of $[n']$, we have $F_{n'}[\pi|_{[n']}] = \sum_{\rho \in \mathcal{R}} a_\rho F_{n'}[\rho]$, then ${F_n[\pi] = \sum_{\rho \in \mathcal{R}} a_\rho F_n[\rho \sqcup \pi|_{[n]\setminus[n']}]}$.
\end{lemma}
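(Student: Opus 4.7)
The plan is to prove the identity column by column: fix $\chi \vdash [n]$ and show that
\[
F_n[\pi,\chi] \;=\; \sum_{\rho\in\mathcal{R}} a_\rho\, F_n[\rho\sqcup \pi|_B,\chi],
\]
where $B:=[n]\setminus [n']$. The key intermediate step will be to construct, from $\chi$ and $\pi|_B$ alone, a partition $\chi^\ast\vdash [n']$ satisfying
\[
F_n[\tau\sqcup \pi|_B,\chi] \;=\; F_{n'}[\tau,\chi^\ast]
\quad\text{for every } \tau\vdash [n']. \qquad(\star)
\]
Granting $(\star)$, I would apply it with $\tau=\pi|_{[n']}$ on the left and with $\tau=\rho$ on each summand of the right-hand side; the column-$\chi^\ast$ instance of the hypothesis $F_{n'}[\pi|_{[n']}] = \sum_\rho a_\rho F_{n'}[\rho]$ then collapses directly to the desired column-$\chi$ identity on $[n]$.

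To construct $\chi^\ast$, pick representative forests $G_\chi$ (boundary $[n]$) and $G_{\pi|_B}$ (boundary $B$) and form $G^\ast := G_{\pi|_B}\oplus G_\chi$. Split into two cases. If $G^\ast$ contains a cycle, then $G_{\tau\sqcup\pi|_B}\oplus G_\chi$ contains $G^\ast$ as a subgraph and inherits the cycle for every $\tau$; moreover one can choose a representative forest for $\pi$ whose restriction to $B$ is $G_{\pi|_B}$, so $G_\pi\oplus G_\chi$ also contains this $B$-cycle. Hence both sides of the lemma vanish and equality holds trivially. If instead $G^\ast$ is a forest, define $\chi^\ast\vdash[n']$ by declaring $a\sim_{\chi^\ast} b$ iff $a,b$ lie in the same connected component of $G^\ast$; then $G^\ast$ itself (with $B$ and all other non-$[n']$ vertices regarded as internal) is a valid representative forest for $\chi^\ast$. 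Consequently $G_{\tau\sqcup\pi|_B}\oplus G_\chi$ coincides with $G_\tau\oplus G^\ast$, which is just a gluing of representative forests for $\tau$ and $\chi^\ast$, and $(\star)$ follows.

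To make that last step rigorous I would use the identities $\#\text{cycles}= |E|-|V|+\#\text{components}$ and $\#\text{components}(G_\mu\oplus G_\nu)=c(\mu\vee\nu)$. Writing $\eta:=\pi|_B^+\vee \chi$ (where $\pi|_B^+$ extends $\pi|_B$ by singletons on $[n']$) for the partition induced by the components of $G^\ast$, and $c^B$ for the number of components of $G^\ast$ that avoid $[n']$, one computes $c(\eta)=c(\chi^\ast)+c^B$ and $(\tau\sqcup\pi|_B)\vee\chi = \tau^+\vee\eta$, from which $c((\tau\sqcup\pi|_B)\vee\chi) = c(\tau\vee\chi^\ast)+c^B$. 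In case~(ii), the forestness of $G^\ast$ gives $c(\eta)=c(\pi|_B)+c(\chi)-|B|$, so substituting makes the $c^B$ term cancel and reduces the forest condition for $G_{\tau\sqcup\pi|_B}\oplus G_\chi$ to the single equation $c(\tau\vee\chi^\ast)=c(\tau)+c(\chi^\ast)-n'$, which is exactly the forest condition for $F_{n'}[\tau,\chi^\ast]=1$.

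The main obstacle I anticipate is precisely the $c^B$-bookkeeping in case~(ii): the components of $G^\ast$ entirely inside $B\cup\text{internal}$ inflate the component count of $G_{\tau\sqcup\pi|_B}\oplus G_\chi$ but contribute nothing to $G_\tau\oplus G_{\chi^\ast}$, and getting the cancellation right via the two identities $c(\eta)=c(\chi^\ast)+c^B$ and $c(\tau^+\vee\eta)=c(\tau\vee\chi^\ast)+c^B$ is the technical heart of the argument. A secondary subtlety is ensuring $F_n[\pi,\chi]$ (rather than $F_n[\pi|_{[n']}\sqcup\pi|_B,\chi]$) appears on the left; in the intended uses within the proof of Theorem~\ref{thm:ForRank} the set $[n']$ is chosen so that $\pi=\pi|_{[n']}\sqcup\pi|_B$, and any residual cross-boundary blocks can be first absorbed with Lemma~\ref{lem:BlockUp}, after which the identity above completes the proof.
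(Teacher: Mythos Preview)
Your approach is essentially the paper's: both fix a column $\chi$, dispose of the degenerate case where $\chi$ and $\pi|_B$ already induce a cycle, and then define the same auxiliary partition $\chi^\ast=\chi'$ (the join $\chi\vee\pi|_B^+$ restricted to $[n']$) in order to establish the transfer identity $(\star)$. Your Euler-characteristic bookkeeping is just a different packaging of the paper's direct cycle-tracing for $(\star)$, and you are right to flag the $\pi$ versus $\pi|_{[n']}\sqcup\pi|_B$ subtlety---the paper's displayed chain of equalities tacitly uses $\pi=\pi|_{[n']}\sqcup\pi|_B$, which holds in the only application (Lemma~\ref{lem:PartSwap}) but not for arbitrary $\pi$.
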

\begin{proof}
    Let $\chi$ be some partition of $[n]$. If $\chi$ and $\pi|_{[n]\setminus[n']}$ induce a cycle, then the statement trivially holds. In the rest of the proof we will therefore assume that any cycle induced by $\chi$ and $\pi$ requires the use of $\pi|_{[n']}$.
    
    We first define an equivalence relation $\sim$ on $[n]$ by defining two elements to be equivalent if they are either in the same set of $\chi$ or in the same set of $\pi|_{[n]\setminus[n']}$. We then complete this to a full equivalence relation. We now define the partition $\chi'$ of $[n']$ as the set of equivalence classes of $\sim$, restricted to $[n']$.

    We claim that $F_n[\rho \sqcup \pi|_{[n]\setminus[n']}, \chi] = F_{n'}[\rho, \chi']$ for any $\rho$, which would immediately imply that
    \[
    F_n[\pi, \chi] = F_{n'}[\pi|_{[n']}, \chi'] = \sum_{\rho \in \mathcal{R}} a_\rho F_{n'}[\rho, \chi'] = \sum_{\rho \in \mathcal{R}} a_\rho F_n[\rho \sqcup \pi|_{[n]\setminus[n']}, \chi]
    \]
    which proves the lemma.
    
    Suppose that $\rho \sqcup \pi|_{[n]\setminus[n']}$ and $\chi$ induce some cycle. Since the cycle must pass through $[n']$, there must be some path from one element of $[n']$ to another, induced by $\rho \sqcup \pi|_{[n]\setminus[n']}$ and $\chi$. Since all elements in this path are equivalent, this path must lie entirely inside of a set $S' \in \chi'$ and thus replacing such a path with $S'$ results in a cycle induced by $\rho$ and $\chi'$. Note that if a cycle only requires sets from $\pi|_{[n']}$, this operation results in a single set $S'$ from $\chi'$ in the new cycle. However, since any set involved in the old cycle must contain at least two elements in the path, that set together with $S'$ induces a cycle.
    
    Similarly, in the reverse direction we take a cycle induced by $\rho$ and $\chi'$ and blow up any sets of $\chi'$ into a path in the corresponding connected component to find a cycle induced by $\rho \sqcup \pi|_{[n]\setminus[n']}$ and $\chi$.
\end{proof}

The following two lemmas help ensure that our operations do not introduce new crossings. The first of the two lemmas shows us that we can safely contract an interval, so long as it is contained in a set of the partition.

\begin{lemma} \label{lem:CrossInterval}
    Let $I \subseteq [n]$ be an interval of $[n]$. Let $\pi$ be a non-crossing partition of $[n] -_i I$. Then $\pi +_i I$ is also non-crossing.
\end{lemma}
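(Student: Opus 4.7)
The plan is to argue by contradiction: suppose $\pi +_i I$ has two crossing parts $A, B$, witnessed by elements $a_1 < b_1 < a_2 < b_2$ in the order on $[n]$ with $a_1, a_2 \in A$ and $b_1, b_2 \in B$; I will transfer this crossing to $\pi$ on $[n] -_i I$, contradicting the non-crossing hypothesis.

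First I would set up the natural weakly order-preserving surjection $\phi : [n] \to [n] -_i I$ defined by $\phi(x) = x$ for $x \notin I$ and $\phi(x) = i$ for $x \in I$. This map is precisely the inverse of the rule used by the blowup to place $I$ in the slot occupied by $i$ in the ordering of $[n] -_i I$. Since the blowup operation changes only the part of $\pi$ that contains $i$, at most one of $A, B$ can intersect $I$. By the symmetric form of the definition of a crossing, I may assume without loss of generality that $B \cap I = \emptyset$, so in particular $b_1, b_2 \notin I$.

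The core technical step is the claim that $\phi(a_1) < \phi(b_1) < \phi(a_2) < \phi(b_2)$ in $[n] -_i I$, with all four images distinct. The key tool is the interval hypothesis: every element of $[n] \setminus I$ lies strictly below $\min I$ or strictly above $\max I$, and $\phi$ places $i$ exactly between these two groups in the order on $[n] -_i I$. With this in hand I run through cases on which of $a_1, a_2$ lie in $I$. The case $a_1, a_2 \in I$ is immediately impossible, since the intermediate $b_1$ with $a_1 < b_1 < a_2$ would then have to lie in the consecutive block $I$, contradicting $b_1 \notin I$. In each remaining case, whenever $a_j \in I$ and $b_k \notin I$ with $a_j < b_k$, the interval property forces $b_k > \max I$ and hence $\phi(b_k) = b_k > i = \phi(a_j)$; the symmetric conclusion holds when $b_k < a_j$. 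The inequalities and distinctness follow from this analysis.

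Finally, let $A', B' \in \pi$ be the parts that blow up to $A, B$; concretely $B' = B$ and $A'$ equals $A$ if $A \cap I = \emptyset$, otherwise $A' = (A \setminus I) \cup \{i\}$. By construction $\phi(a_j) \in A'$ and $\phi(b_j) \in B'$, and $A \neq B$ forces $A' \neq B'$. The four strict inequalities just established then witness a crossing between $A'$ and $B'$ in $\pi$, which is the desired contradiction. The main obstacle is essentially the bookkeeping in the middle paragraph — correctly positioning $\phi(a_j)$ relative to $\phi(b_k) = b_k$ in every case of $a_j \in I$ versus $a_j \notin I$ — and the interval hypothesis on $I$ is precisely what makes this bookkeeping go through cleanly.
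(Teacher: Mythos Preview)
Your argument is correct and essentially the same as the paper's: both proceed by contradiction, observe that only the single blown-up part can meet $I$, use the interval property to rule out two of the witnessing elements lying in $I$, and then replace the (at most one) element of $I$ by $i$ to recover a crossing in $\pi$. Your map $\phi$ just formalizes the paper's ``replace this element by $i$'' step.
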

\begin{proof}
    Suppose that there are $C, D \in \pi +_i I$ that are crossing. W.l.o.g. there are $c_1, c_2 \in C$ and $d_1, d_2 \in D$ such that $c_1 < d_1 < c_2 < d_2$. Since $\pi$ is non-crossing, this crossing does not exist in $\pi$ and thus at least one of these elements is in $I$. By definition of a blowup, we must have either $I \subseteq C$ or $I \subseteq D$. Since $I$ is an interval, it then follows that exactly one of the previously mentioned elements is in $I$. We still find a crossing in $\pi$ by replacing this element by $i$. For example, if $d_1 \in I$, we find a crossing $c_1 < i < c_2 < d_2$ in $\pi$. This again contradicts the assumption that $\pi$ is non-crossing. We conclude that $\pi +_i I$ is also non-crossing.
\end{proof}

 This next lemma shows us that, in our setting, projection is safe, as long as we do not forget any elements of sets that cross one another.

\begin{lemma} \label{lem:CrossProj}
    Let $\pi \vdash [n]$ be a partition such that only $A,B \in \pi$ cross each other and all other pairs of sets in $\pi$ are non-crossing. Then for a non-crossing partition $\rho$ of $A \cup B$ we have that $\rho \cup \pi|_{[n] \setminus (A \cup B)}$ is non-crossing.
\end{lemma}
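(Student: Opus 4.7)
The plan is to verify that no two distinct parts of the partition $\rho \cup \pi|_{[n] \setminus (A \cup B)}$ cross. Two parts both coming from $\rho$ are non-crossing by assumption on $\rho$, and two parts both coming from $\pi|_{[n] \setminus (A \cup B)}$ are parts of $\pi$ distinct from $A$ and $B$, hence non-crossing by the hypothesis on $\pi$. So the only case that needs work is a pair $(C, D)$ with $C \in \pi|_{[n] \setminus (A \cup B)}$ and $D \in \rho$.

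Suppose towards a contradiction that $C$ and $D$ cross. The definition of crossing is symmetric in the two sets, so after relabeling we may assume there exist $c_1 < d_1 < c_2 < d_2$ with $c_1, c_2 \in C$ and $d_1, d_2 \in D$. Since $D \subseteq A \cup B$, each of $d_1, d_2$ lies in $A$ or in $B$. If both lie in $A$ (respectively $B$), then already $\{c_1, c_2\} \subseteq C$ witnesses a crossing of $C$ with $A$ (respectively $B$), contradicting the assumption that $(A, B)$ is the only crossing pair of $\pi$. So by symmetry I may assume $d_1 \in A$ and $d_2 \in B$.

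The key observation is that a two-element subset $\{c_1, c_2\}$ disjoint from a set $S$ crosses $S$ if and only if $S$ meets both the open interval $(c_1, c_2)$ and its complement. Now $d_1 \in A$ lies in $(c_1, c_2)$, while $C$ (and in particular $\{c_1, c_2\}$) does not cross $A$; hence $A \subseteq (c_1, c_2)$. Symmetrically, $d_2 \in B$ lies outside $(c_1, c_2)$ and $C$ does not cross $B$, so $B \cap (c_1, c_2) = \emptyset$. But then any two elements of $A$ lie entirely inside $(c_1, c_2)$ while any two elements of $B$ lie entirely outside, which precludes the alternating pattern needed for $A$ and $B$ to cross. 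This contradicts the hypothesis that $A$ and $B$ cross.

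The main obstacle is this mixed subcase, where $d_1, d_2$ come from different parts of the crossing pair; the trick is to translate the non-crossing conditions on $(C, A)$ and $(C, B)$ into containment statements about the interval $(c_1, c_2)$, after which the impossibility of $A$ and $B$ crossing is immediate.
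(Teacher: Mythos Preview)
Your proof is correct and follows essentially the same strategy as the paper: reduce to the mixed case (one block from $\rho$, one from $\pi|_{[n]\setminus(A\cup B)}$) and derive a contradiction with the fact that $A$ and $B$ cross. The only technical difference is the choice of separating interval: the paper works with the span $I$ of $A\cup B$ and argues that the $\pi$-block must then cross $A$ or $B$, whereas you work with the interval $(c_1,c_2)$ determined by the crossing witnesses and show directly that $A$ and $B$ end up on opposite sides of it; your version is a bit more explicit in its case analysis.
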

\begin{proof}
    Suppose there are sets $C, D \in \rho \sqcup \pi|_{[n] \setminus (A \cup B)}$ that cross each other. By assumption $\pi|_{[n] \setminus (A \cup B)}$ is non-crossing and thus w.l.o.g. $C \in \rho$. Also note that since $\rho$ is non-crossing, this implies that $D \in \pi|_{[n] \setminus (A \cup B)}$.
    
    Let $I$ be the interval spanned by $A \cup B$, then since $D$ crosses $C \subseteq I$, we find that $D \cap I \neq \emptyset$ and $D$ is not an interval itself. We claim that this implies that, in $\pi$, $D$ crosses either $A$ or $B$. This would contradict the assumption that the only crossings in $\pi$ are between $A$ and $B$, which would then imply the lemma.
    
    Note that $D \cap I$ cannot include either the rightmost or the leftmost element of the interval, since these must be elements of $A \cup B$. Therefore if neither $A$ nor $B$ crosses $D$, we must have that w.l.o.g. $A < D \cap I < B$. This is not possible, since $A$ and $B$ must cross at least once.
\end{proof}

\subsubsection{Proof of the rank bound}

With Lemmas \ref{lem:BlockUp}, \ref{lem:PartProj}, \ref{lem:CrossInterval} and \ref{lem:CrossProj} in hand, we are now ready to describe the main uncrossing operation.

\begin{lemma} \label{lem:PartSwap}
    Let $\pi$ be a non-crossing partition on an ordering $p$. In time $O(n)$ we can find constants $c_\rho$, such that $F_n[\pi] = \sum_{\rho \in \mathcal{N}} c_\rho F_n[\rho]$, where $\mathcal{N}$ is the set of partitions that are non crossing on $p \circ (i, i+1)$.
\end{lemma}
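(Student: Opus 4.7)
The plan is to case-split depending on whether the adjacent transposition $(i,i{+}1)$ actually creates a crossing in $\pi$ with respect to the new ordering $p'$. The key preliminary observation is that the only relative order that changes between $p$ and $p'$ is that of the two elements $x := p(i)$ and $y := p(i{+}1)$. Writing $A$ and $B$ for the parts of $\pi$ containing $x$ and $y$ respectively, the only pair of parts whose mutual crossing status can change is therefore $(A,B)$. Hence if $A=B$, or $|A|=1$, or $|B|=1$, or the swap otherwise leaves the $(A,B)$ interleaving non-crossing, $\pi$ is still in $\mathcal N$ and we output $c_\pi=1$ with all other coefficients zero.

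Otherwise $A\neq B$, both $A$ and $B$ have at least two elements, and the unique new crossing is between them. Because $\pi$ was non-crossing in $p$, this crossing has a very restricted local structure: in $p'$ the $A$- and $B$-labels exhibit a single $ABAB$ alternation centered at positions $i,i{+}1$, with the remaining $A$- and $B$-elements lying in a non-crossing pattern on either side (or nested). Combining Lemma~\ref{lem:CrossProj} (which guarantees that any non-crossing decomposition on $A\cup B$ lifts to a non-crossing one on $[n]$) with a restriction-to-$A\cup B$ version of Lemma~\ref{lem:PartProj} and with Lemma~\ref{lem:BlockUp} (to contract any interval lying fully inside a single part of $\pi$), the task reduces to exhibiting an explicit identity for $F_m[\{A,B\}]$ on $A\cup B$ with the minimal crossing. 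Natural candidate non-crossing \emph{repair} partitions on $p'|_{A\cup B}$ include
\[
\{A\setminus\{x\},\,B\cup\{x\}\},\; \{A\cup\{y\},\,B\setminus\{y\}\},\; \{(A\setminus\{x\})\cup\{y\},\,(B\setminus\{y\})\cup\{x\}\},\; \{A\cup B\},
\]
each easily checked to be non-crossing in $p'$ by the minimality of the crossing, together with possibly a few of their singleton refinements.

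The main obstacle will be pinning down the exact integer linear identity that expresses $F_m[\{A,B\}]$ as a combination of these (or of a slightly larger family of) rows. I plan to verify the identity column by column, analyzing when $G_{\{A,B\}}\oplus G_\chi$ is a forest via the bipartite interaction graph between the blocks of $\pi$ and those of $\chi$, with cases depending on how the blocks of $\chi$ split across the $AB$-alternation; the structural observations underlying Lemmas~\ref{lem:BlockUp} and~\ref{lem:PartProj} should make many of these cases collapse. Once the base identity is established, its support has bounded size, so the coefficients and the involved non-crossing partitions can be produced in $O(n)$ time given $\pi$, $p$ and $i$, yielding the stated runtime.
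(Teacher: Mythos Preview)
Your reduction scheme is exactly the paper's: restrict to $A\cup B$ via (the natural generalisation of) Lemma~\ref{lem:PartProj}, contract the maximal $A$- and $B$-intervals via Lemma~\ref{lem:BlockUp}, solve the resulting small base case, and lift back, using Lemmas~\ref{lem:CrossInterval} and~\ref{lem:CrossProj} to certify that non-crossingness is preserved throughout. Where you diverge is precisely at the step you yourself flag as ``the main obstacle'' and leave open.

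The paper dispatches that obstacle with a sharper structural bound than your ``single $ABAB$ alternation''. Because $A$ and $B$ were non-crossing on $p$, the sequence of maximal $A/B$-intervals on $A\cup B$ has length at most three there; the adjacent swap sits at one of the two boundaries and splits it, so on $p'$ this sequence has length exactly $4$ or $5$. (Your description only explicitly covers the $4$-interval case; when one of $A,B$ was nested inside the other on $p$, the swap produces a $BABAB$ pattern of five intervals.) Contracting each interval to a single point therefore lands in $F_{n'}$ with $n'\in\{4,5\}$, and the paper then simply verifies by a finite explicit computation (a short script) that the non-crossing partitions span the row space of $F_4$ and of $F_5$. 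This immediately yields the coefficients $c_\rho$ without any need to discover or verify a specific linear identity among your candidate repair partitions, and it is the piece your proposal is missing.
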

\begin{proof}
    Throughout the proof, we will consider the partition $\pi$ on the ordering $p \circ (i, i+1)$. We first note that since $\pi$ is non-crossing on $p$, any crossing of $\pi$ must involve both $i$ and $i+1$. Let $i \in A \in \pi$ and $i+1 \in B \in \pi$. If $A = B$, then $\pi$ is non-crossing and thus we may assume that $A \neq B$. Note that $\pi|_{A \cup B}$, when viewed as a partition of $A \cup B$, consists of either $4$ or $5$ intervals which alternate between $A$ and $B$. Define $\pi'$ as the partition given by contracting these intervals. We find that $\pi'$ is a partition on $n'$ elements, where either $n'=4$ or $n'=5$ elements, with intervals of size $1$ (see Figure~\ref{fig:Uncross}).

    \begin{figure}[h]
        \centering
        \includegraphics[width = \textwidth]{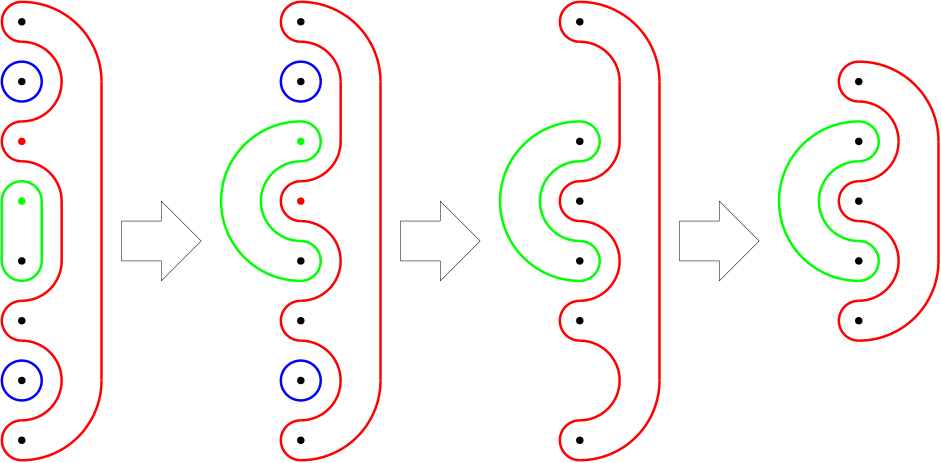}
        \caption{From left to right, these are examples of $\pi$ before the swap, $\pi$ after the swap, $\pi|_{A \cup B}$ and $\pi'$.}
        \label{fig:Uncross}
    \end{figure}
    
    We can explicitly construct the forest compatibility matrices for $n'\in \{4,5\}$ and check that the non-crossing partitions give a basis; with this manuscript we provided a MATLAB script that checks this. Thus we can write 
    \[
    F_{n'}[\pi'] = \sum_{\rho \in \mathcal{R}}c_{\rho}F_{n'}[\rho],
    \]
    where $\mathcal{R}$ is the set of non-crossing partitions of $[n']$. By Lemma \ref{lem:BlockUp} we find that
    \[
    F_{A\cup B}[\pi|_{A \cup B}] = \sum_{\rho \in \mathcal{R}}c_{\rho}F_{A\cup B}[\rho+_{i_1} I_1 + \dots +_{i_{n'}} I_{n'}].
    \]
    By Lemma \ref{lem:CrossInterval} each $\rho+_{i_1} I_1 + \dots +_{i_{n'}} I_{n'}$ is still non-crossing. By Lemma \ref{lem:PartProj} we find
    \[
    F_n[\pi] = \sum_{\rho \in \mathcal{R}}c_{\rho}F_{A\cup B}[(\rho +_{i_1} I_1 + \dots +_{i_{n'}} I_{n'}) \cup \pi|_{[n] \setminus (A \cup B)}].
    \]
    By Lemma \ref{lem:CrossProj} each $(\rho+_{i_1} I_1 + \dots +_{i_{n'}} I_{n'}) \cup \pi|_{[n] \setminus (A \cup B)}$ is still non-crossing. We conclude that $F_n[\pi]$ can be written as a linear combination of rows corresponding to non-crossing partitions.

    Note that we can construct $\pi'$ in $O(n)$ time. We then find the $c_\rho$ in $O(1)$ time and reconstruct the $(\rho+_{i_1} I_1 + \dots +_{i_{n'}} I_{n'}) \cup \pi|_{[n] \setminus (A \cup B)}$ in $O(n)$ time.
\end{proof}

By repeatedly applying Lemma \ref{lem:PartSwap}, we can prove the following theorem.

\begin{theorem} \label{thm:CompRank}
    The rows corresponding to non-crossing partitions span a row basis of the forest compatibility matrix $F_n$.
\end{theorem}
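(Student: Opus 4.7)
The plan is to prove that every row of $F_n$ lies in the linear span of rows indexed by partitions that are non-crossing on the standard ordering, using a ``bubble-sort'' argument over orderings. The key observation is that the matrix $F_n$ itself depends only on the combinatorial structure of partitions (via induced cycles) and \emph{not} on any ordering; the ordering is used only to classify partitions as crossing or non-crossing. Consequently, for a fixed partition $\pi$ the row $F_n[\pi]$ is a single fixed vector, and we are free to choose a favorable ordering to drive the argument.

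Given any $\pi \vdash [n]$, I would first pick an ordering $q$ of $[n]$ on which $\pi$ is non-crossing. This is always possible: arrange the elements so that each block of $\pi$ forms a consecutive interval, which leaves $\pi$ with no crossings on $q$. Next, let $\sigma_1, \dots, \sigma_t$ be any sequence of adjacent transpositions whose composition takes $q$ into the standard ordering, and apply Lemma~\ref{lem:PartSwap} iteratively. Invariantly, after $j$ steps, the current expression for $F_n[\pi]$ is a linear combination of rows $F_n[\rho]$ over partitions $\rho$ non-crossing on $q \circ \sigma_1 \circ \cdots \circ \sigma_j$; one more application of Lemma~\ref{lem:PartSwap} to each such $\rho$ advances the ordering by one more swap while preserving this invariant. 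After all $t$ steps, $F_n[\pi]$ is expressed as a linear combination of rows indexed by partitions non-crossing on the standard ordering. This establishes the spanning property, and in particular yields the bound $\rank(F_n) \leq C_n$ required for Theorem~\ref{thm:ForRank}.

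To upgrade spanning to the basis claim, the remaining task is to verify that the $C_n$ non-crossing rows are linearly independent. The natural strategy is to exhibit a triangular square submatrix of size $C_n \times C_n$: fix a linear extension of the refinement order on non-crossing partitions and, for each non-crossing $\pi$, produce a companion column $\chi_\pi$ such that $F_n[\pi, \chi_\pi] = 1$ while $F_n[\rho, \chi_\pi] = 0$ for every non-crossing $\rho$ strictly later in the order. The main obstacle is pinpointing the companion $\chi_\pi$ and verifying the required zero pattern; the MATLAB check inside the proof of Lemma~\ref{lem:PartSwap} already establishes base cases at $n' \in \{4,5\}$, so an inductive construction that glues small companions along block boundaries should reduce the general case to those base cases.
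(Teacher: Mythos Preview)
Your spanning argument is essentially the paper's own proof: both pick an ordering on which $\pi$ is non-crossing, decompose the permutation taking that ordering to the standard one into adjacent transpositions, and apply Lemma~\ref{lem:PartSwap} at each swap to stay inside the span of rows indexed by non-crossing partitions on the current ordering. The paper phrases it as the containment $B_{p \circ (i,i+1)} \subseteq \spa(B_p)$ for each adjacent transposition, but the content is identical.

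Where you diverge is in the final paragraph. You read the word ``basis'' literally and try to supply linear independence via a triangular submatrix. The paper does not do this: its proof of Theorem~\ref{thm:CompRank} establishes only spanning, and the proof of Theorem~\ref{thm:ForRank} uses only the inequality $\rank(F_n) \leq C_n$, which follows from spanning alone. So your independence sketch is addressing a gap in the \emph{statement} rather than in the paper's actual argument, and the sketch itself is admittedly incomplete (you identify the obstacle without resolving it; the MATLAB check for $n'\in\{4,5\}$ does not by itself seed an inductive independence proof, since those checks live in $F_{n'}$ rather than in a submatrix of $F_n$). For the purposes of matching the paper, you can simply stop after the spanning argument.
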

\begin{proof}
    Let $\pi$ be a partition of $[n]$ such that we can turn it into a non-crossing partition by swapping two consecutive elements $i$ and $i+1$ in the order of $[n]$. By Lemma \ref{lem:PartSwap} we can write the row $F_n[\pi]$ corresponding to $\pi$ as a linear combination of rows corresponding to non-crossing partitions of $[n]$. This shows that, for $B_p$ the set of rows corresponding to non-crossing partitions on $p$, we have $B_{p \circ (i, i+1)} \subseteq \spa(B_p)$. Since every partition is non-crossing for some permutation and every permutation can be decomposed into $2$-cycles on consecutive elements, this implies that every row can be written as a linear combination of rows corresponding to non-crossing partitions on some fixed ordering $p$.
\end{proof}

From this we immediately find a proof for  Theorem~\ref{thm:ForRank}.

\begin{proof}[Proof of Theorem \ref{thm:ForRank}]
    By Theorem~\ref{thm:CompRank} the non-crossing partitions form a basis of $F_n$. Since there are $C_n$ such partitions we find $\rank(F_n) \leq C_n$.
\end{proof}

\subsection{Algorithm}

We will now describe the algorithm for counting forests. We first define the dynamic programming table and the notion of representation. We then handle each type of node in the tree/path decomposition separately and summarize at the end.

\begin{definition}
    Let $G$ be a graph and let $(\mathbb{T}, (B_x)_{x \in V(D)})$ be a tree/path decomposition of $G$. Recall that $G_x$ is defined as the graph induced by the union of all bags, whose nodes are descendants of $x$ in $\mathbb{T}$. We define the dynamic programming table $\tau$ by
    \begin{align*}
    \tau_x[\pi] := |\{&X \subseteq E(G_x): (V,X) \text{ is acyclic }, \\&\forall u,v \in B_x \text{ there is a path in $(V, X)$ from $u$ to $v$ iff } \exists S \in \pi \text{ s.t. } u, v \in S\}|
    \end{align*}
\end{definition}
In other words, the table entry $\tau_x[\pi]$ counts the number of forests in $G_x$ whose connected components agree with $\pi$. In the rest of this section, we will refer to the number of nonzero entries $\tau_x[\pi]$ in a 'row' $\tau_x$ of the dynamic programming table as the support of $\tau_x$, written $\supp(\tau_x)$. Our aim will be to ensure that the support of our rows remains contained in the entries corresponding to non-crossing partitions for some ordering on the bag $B_x$. This is captured in the following definition.

\begin{definition}
    We say a vector $a$, indexed by partitions, is \emph{reduced on an ordering $p$}, if $a_\pi = 0$ for any partition $\pi$ that is crossing for $p$.
\end{definition}

In order to ensure that we do not lose any relevant information we will reduce our rows, while retaining the following property for the matrix $F_{B_x}$.

\begin{definition}
    Given a matrix $M$, we say that a vector $a$ \emph{$M$-represents} a vector $b$ if $Ma = Mb$.
\end{definition}

We now describe how the algorithm behaves on the various types of nodes. In each case we repeatedly apply one step of a naive dynamic programming algorithm and then reduce the table if it becomes too big. For ease of notation we will write $\pi \sim \rho$ if the partitions $\pi, \rho \vdash [n]$ are compatible, i.e. they do not induce a cycle.

\subsubsection*{Leaf node}

If $x$ is a leaf node we set $\tau_x'[\emptyset] := \tau_x[\emptyset] = 1$. We trivially find that $\tau_x'$ is reduced and $F_0$-represents $\tau_x$.

\subsubsection*{Vertex-introduce node}

\begin{lemma} \label{lem:ForVInt}
    Let $x$ be a vertex-introduce node with a child node $y$. Suppose that $\tau_y'$ is reduced and $F_{B_y}$-represents $\tau_y$. We can compute a row $\tau_x'$ that is reduced and $F_{B_x}$-represents $\tau_x$, in time $O(\rank(F_{B_x}))$.
\end{lemma}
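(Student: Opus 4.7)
The plan is to define $\tau_x'[\pi] := \tau_y'[\pi|_{B_y}]$ whenever $\{v\}$ is a class of $\pi$, and $\tau_x'[\pi] := 0$ otherwise. The motivation is that no edges are introduced at a vertex-introduce bag, so $v$ has no incident edges in $G_x$ and is therefore an isolated vertex of every forest $X \subseteq E(G_x)$. Consequently $\tau_x[\pi] = 0$ unless $\{v\} \in \pi$, and in that case $\tau_x[\pi] = \tau_y[\pi|_{B_y}]$.

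For the reduced property and the running time I would extend the ordering of $B_y$ on which $\tau_y'$ is reduced by appending $v$ as the largest element. A singleton placed at the rightmost position cannot participate in any crossing, so $\pi' \cup \{\{v\}\}$ is non-crossing on $B_x$ iff $\pi'$ is non-crossing on $B_y$. Hence $\tau_x'$ is supported on non-crossing partitions, i.e.\ reduced. Computing $\tau_x'$ consists in iterating over $\supp(\tau_y')$ and copying the values to the corresponding entries of $\tau_x'$; this runs in time $O(|\supp(\tau_y')|) \leq O(C_{|B_y|}) \leq O(\rank(F_{B_x}))$ using Theorem~\ref{thm:CompRank}.

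The main obstacle is the representation property $F_{B_x}\tau_x' = F_{B_x}\tau_x$, which I would derive from the pointwise identity
\[
F_{B_x}[\chi,\, \pi' \cup \{\{v\}\}] \;=\; F_{B_y}[\chi|_{B_y},\, \pi'] \qquad \text{for all } \chi \vdash B_x,\; \pi' \vdash B_y.
\]
Given this identity, fix $\chi \vdash B_x$ and compute
\[
(F_{B_x}\tau_x')[\chi] \;=\; \sum_{\pi' \vdash B_y} F_{B_x}[\chi,\, \pi' \cup \{\{v\}\}]\,\tau_y'[\pi'] \;=\; \sum_{\pi' \vdash B_y} F_{B_y}[\chi|_{B_y},\, \pi']\,\tau_y'[\pi'] \;=\; (F_{B_y}\tau_y')[\chi|_{B_y}].
\]
The same calculation with $\tau_y$ and $\tau_x$ in place of $\tau_y'$ and $\tau_x'$ gives $(F_{B_x}\tau_x)[\chi] = (F_{B_y}\tau_y)[\chi|_{B_y}]$, and the hypothesis $F_{B_y}\tau_y' = F_{B_y}\tau_y$ then yields the required equality.

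To prove the pointwise identity, I would exploit that whether two partitions induce a cycle depends only on the partitions and not on the chosen representatives. I would pick a representative $G_\chi$ of $\chi$ in which $v$ is a leaf of the tree on its class $S \ni v$ (or stays isolated when $\{v\} \in \chi$), and take the representative of $\pi' \cup \{\{v\}\}$ to be a representative of $\pi'$ together with $v$ as an isolated vertex. In $G_\chi \oplus G_{\pi' \cup \{\{v\}\}}$ the vertex $v$ then has degree at most $1$, so no cycle can use it. Deleting $v$ from the glued graph leaves exactly $G_{\chi|_{B_y}} \oplus G_{\pi'}$, because removing the leaf $v$ from the tree on $S$ gives a tree on $S \setminus \{v\}$, which is a legitimate representative of $\chi|_{B_y}$. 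Hence the two glued graphs have the same set of cycles, which gives the identity and completes the proof.
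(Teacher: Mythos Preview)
Your proof is correct and follows essentially the same approach as the paper: the same definition of $\tau_x'$, the same reducedness argument (you are just more explicit about placing $v$ at the end of the ordering), and the same representation argument via the equivalence $\pi'\cup\{\{v\}\}\sim\rho \iff \pi'\sim\rho|_{B_y}$. The only difference is presentational: the paper uses this equivalence implicitly inside a chain of equalities, whereas you isolate it as a pointwise identity and justify it directly with representative forests.
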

\begin{proof}
    If $x$ is a vertex-introduce node, introducing $v$. We set
    \[
    \tau_x'[\pi \cup \{\{v\}\}] := \tau_y'[\pi]
    \]
    and
    \[
    \tau_x'[\pi] := 0
    \]
    for any $\pi$ in which $v$ does not appear as a singleton. Clearly for any non-crossing partition $\pi$, we have that $\pi \cup \{v\}$ is still non-crossing and thus $\tau_x'$ is reduced.

    Note that by definition, we need to show that $F_{B_x}\tau_x' = F_{B_x}\tau_x$. In the following derivation we show that this equality holds at the entry corresponding to any arbitrary partition $\rho \vdash B_x$.
    \begin{align*}
        \sum_{\pi \sim \rho} \tau_x[\pi]
        &= \sum_{\substack{\pi \sim \rho \\ \{v\} \in \pi}} \tau_y[\pi \setminus \{\{v\}\}] \\
        &= \sum_{\pi' \sim \rho|_{B_y}} \tau_y[\pi'] \\
        &= \sum_{\pi' \sim \rho|_{B_y}} \tau_y'[\pi'] \\
        &= \sum_{\substack{\pi \sim \rho \\ \{v\} \in \pi}} \tau_y'[\pi \setminus \{\{v\}\}] \\
        &= \sum_{\pi \sim \rho} \tau_x'[\pi]
    \end{align*}
\end{proof}

\subsubsection*{Vertex-forget node}

\begin{lemma} \label{lem:ForVFor}
    Let $x$ be a vertex-forget node with a child node $y$. Suppose that $\tau_y'$ is reduced and $F_{B_y}$-represents $\tau_y$. We can compute a row $\tau_x'$ that is reduced and $F_{B_x}$-represents $\tau_x$, in time $O(\rank(F_{B_x}))$.
\end{lemma}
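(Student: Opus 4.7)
The plan is to apply one step of the naive forget operation to $\tau_y'$: for each partition $\pi \vdash B_x$, set
\[
\tau_x'[\pi] \;:=\; \sum_{\pi' \vdash B_y :\ \pi'|_{B_x} = \pi} \tau_y'[\pi'] \;=\; \tau_y'[\pi \cup \{\{v\}\}] + \sum_{S \in \pi} \tau_y'\bigl[(\pi \setminus \{S\}) \cup \{S \cup \{v\}\}\bigr],
\]
where $\pi'|_{B_x}$ denotes the partition of $B_x = B_y \setminus \{v\}$ obtained by deleting $v$ from its block (removing the block entirely if $\{v\}$ was a singleton). Operationally it is cheaper to sweep once through the $C_{|B_y|} = O(\rank(F_{B_x}))$ non-crossing partitions $\pi' \vdash B_y$ in the support of $\tau_y'$ and accumulate $\tau_y'[\pi']$ into the bucket $\tau_x'[\pi'|_{B_x}]$, meeting the claimed running time.

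For \textbf{reducedness}, I would argue that restriction cannot break a crossing: four witnesses $a_1 < b_1 < a_2 < b_2$ in $B_x$ sitting in two blocks $A,B$ of $\pi$ remain in the blocks of any lift $\pi' \vdash B_y$ containing $A$ and $B$ (possibly enlarged by $v$), so $\pi'$ is still crossing on $B_y$ and contributes $0$. Hence $\tau_x'[\pi] = 0$ for every crossing $\pi \vdash B_x$.

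The main obstacle is the representation claim $F_{B_x}\tau_x' = F_{B_x}\tau_x$. For each $\rho \vdash B_x$, set $\rho' := \rho \cup \{\{v\}\} \vdash B_y$; I would then prove the combinatorial equivalence
\[
\pi' \sim \rho' \;\Longleftrightarrow\; \pi'|_{B_x} \sim \rho \qquad \text{for every } \pi' \vdash B_y.
\]
Because $v$ is isolated on the $\rho'$-side of the glue $G_{\pi'} \oplus G_{\rho'}$, every cycle touching $v$ must enter and leave via two $\pi'$-edges, thus visiting two distinct neighbours $u_1, u_2$ of $v$ which share $v$'s block of $\pi'$ and therefore lie together in one block of $\pi'|_{B_x}$; shortcutting $u_1 \to u_2$ inside that block of a representative forest of $\pi'|_{B_x}$ yields a cycle of $G_{\pi'|_{B_x}} \oplus G_\rho$, and cycles avoiding $v$ transfer directly. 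Conversely, any cycle of $G_{\pi'|_{B_x}} \oplus G_\rho$ survives after attaching $v$ to its block in a representative forest of $\pi'$ and adding $v$ as an isolated vertex on the $\rho'$-side. Using this equivalence to rewrite sums over preimages of the restriction map, and invoking the hypothesis $F_{B_y}\tau_y' = F_{B_y}\tau_y$ at row $\rho'$, I would chain
\[
\sum_{\pi \sim \rho} \tau_x[\pi] \;=\; \sum_{\pi' \sim \rho'} \tau_y[\pi'] \;=\; \sum_{\pi' \sim \rho'} \tau_y'[\pi'] \;=\; \sum_{\pi \sim \rho} \tau_x'[\pi],
\]
which is exactly the entry-wise equality of $F_{B_x}\tau_x$ and $F_{B_x}\tau_x'$ at row $\rho$, completing the proof.
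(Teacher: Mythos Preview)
Your proposal is correct and follows essentially the same approach as the paper: define $\tau_x'[\pi] := \sum_{\pi'|_{B_x}=\pi}\tau_y'[\pi']$, observe that restriction preserves non-crossingness (you phrase this as the contrapositive), and prove $F_{B_x}$-representation via the key equivalence $\pi' \sim (\rho \cup \{\{v\}\}) \iff \pi'|_{B_x} \sim \rho$, which the paper states in one line while you spell out the cycle-shortcutting argument explicitly. Your chain of equalities is identical to the paper's, and your sweep over the support of $\tau_y'$ is the natural implementation of the stated running time.
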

\begin{proof}
    Let $x$ be a vertex-forget node, forgetting $v$. We set
    \[
    \tau_x'[\pi] := \sum_{\pi'|_{B_x} = \pi} \tau_y'[\pi']
    \]
    Clearly for any non-crossing partition $\pi'$, we have that $\pi'|_{B_x}$ is still non-crossing and thus $\tau_x'$ is reduced.

    Again we now show that $F_{B_x}\tau_x' = F_{B_x}\tau_x$, by focussing on the entry of the vector at coordinate $\rho$.
    \begin{align*}
        \sum_{\pi \sim \rho} \tau_x[\pi]
        &= \sum_{\pi \sim \rho} \sum_{\pi'|_{B_x} = \pi} \tau_y[\pi'] \\
        \intertext{Note that $\pi'$ projects down to a partition that is compatible with $\rho$ if and only if ${\pi' \sim (\rho \cup \{\{v\}\})}$. We therefore find that}
        &= \sum_{\pi' \sim (\rho \cup \{\{v\}\})} \tau_y[\pi'] \\
        &= \sum_{\pi' \sim (\rho \cup \{\{v\}\})} \tau_y'[\pi'] \\
        &= \sum_{\pi \sim \rho} \sum_{\pi'|_{B_x} = \pi} \tau_y'[\pi'] \\
        &= \sum_{\pi \sim \rho} \tau_x'[\pi]
    \end{align*}
\end{proof}

\subsubsection*{Edge-introduce node}

\begin{lemma} \label{lem:ForEdge}
    Let $x$ be an edge-introduce node with a child node $y$. Suppose that $\tau_y'$ is reduced and $F_{B_y}$-represents $\tau_y$. We can compute a row $\tau_x'$ that is reduced and $F_{B_x}$-represents $\tau_x$, in time $O(\rank(F_{B_x})|B_x|^2)$.
\end{lemma}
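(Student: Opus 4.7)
The plan is to mirror the definition of $\tau_x$ with a direct update of $\tau_y'$, and then invoke the uncrossing machinery of Theorem~\ref{thm:CompRank} to restore reducedness. Since $B_x = B_y$ and $G_x = G_y \cup \{uv\}$, every forest $X$ of $G_x$ either excludes $uv$ (yielding a forest of $G_y$ with the same induced partition) or contains it (in which case $X \setminus \{uv\}$ is a forest of $G_y$ whose partition $\pi'$ has $u \not\sim_{\pi'} v$, and the partition induced by $X$ on $B_x$ is $\pi'/uv$, the result of merging the blocks of $u$ and $v$). This motivates the intermediate row
\[
\hat\tau_x[\pi] := \tau_y'[\pi] + \sum_{\substack{\pi' \vdash B_x \\ u \not\sim_{\pi'} v,\; \pi'/uv = \pi}} \tau_y'[\pi'],
\]
which I would compute by iterating over the non-zero entries of $\tau_y'$ and distributing each value in $O(|B_x|)$ time to the two relevant coordinates of $\hat\tau_x$.

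To verify that $F_{B_x}\hat\tau_x = F_{B_x}\tau_x$, I would evaluate $\sum_{\pi \sim \rho}\hat\tau_x[\pi]$ for arbitrary $\rho \vdash B_x$. Swapping the order of summation turns the merge contribution into $\sum_{\pi' : u \not\sim_{\pi'} v,\; \pi'/uv \sim \rho} \tau_y'[\pi']$. If $u \sim_\rho v$, any representative $G_\rho$ already contains a $u$--$v$ path, so adding $uv$ to $G_{\pi'/uv} \oplus G_\rho$ necessarily closes a cycle and the sum is empty. If $u \not\sim_\rho v$, set $\tilde\rho := \rho/uv$; choosing the legitimate representatives $G_{\tilde\rho} = G_\rho \cup \{uv\}$ and $G_{\pi'/uv} = G_{\pi'} \cup \{uv\}$ (valid under the respective non-adjacency conditions) gives $G_{\pi'/uv} \oplus G_\rho = G_{\pi'} \oplus G_{\tilde\rho}$, so the joint condition on $\pi'$ collapses to $\pi' \sim \tilde\rho$. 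Hence $\sum_{\pi \sim \rho}\hat\tau_x[\pi] = \sum_{\pi \sim \rho}\tau_y'[\pi] + \mathbf{1}[u \not\sim_\rho v]\sum_{\pi' \sim \tilde\rho}\tau_y'[\pi']$; the identical rewriting of $\sum_{\pi \sim \rho}\tau_x[\pi]$ together with the hypothesis that $\tau_y'$ $F_{B_y}$-represents $\tau_y$ then gives equality.

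The entries $\pi'/uv$ may be crossing on the fixed order $p$ of $B_x$, so a reduction step is needed. For each crossing $\sigma \in \supp(\hat\tau_x)$ I would run the procedure implicit in the proof of Theorem~\ref{thm:CompRank}: pick an ordering $p'$ on which $\sigma$ is non-crossing (which can be reached from $p$ by a sequence of adjacent transpositions) and walk $p'$ back to $p$ one swap at a time, applying Lemma~\ref{lem:PartSwap} at each step and merging equal partitions in the running coefficient vector so that its support stays within the $\rank(F_{B_x})$ partitions that are non-crossing on the current ordering. Adding the final decompositions into $\tau_x'$ preserves the $F_{B_x}$-representation property from the previous step and produces a vector reduced on $p$. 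The main obstacle is the efficiency analysis, which hinges on the observation that $\sigma$ differs from the non-crossing $\pi'$ only by a single block merge: this restricted structure limits both the permutation distance between $p$ and $p'$ and the way the $O(1)$ new terms produced per swap interact with the running tally, so that each uncrossing should go through in $O(|B_x|^2)$ time, yielding the claimed overall bound of $O(\rank(F_{B_x})\,|B_x|^2)$.
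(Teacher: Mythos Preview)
Your update formula and the $F_{B_x}$-representation argument are correct and essentially coincide with the paper's: the paper writes your $\pi'/uv$ as $\pi'\sqcup\pi_{uv}$ (with $\pi_{uv}=\{\{u,v\}\}\cup\{\{w\}:w\neq u,v\}$) and packages your case split on $u\sim_\rho v$ into the auxiliary Lemma~\ref{lem:PartComm}, but the content is the same.

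The gap is in the reduction step, and it is exactly the obstacle you flag. You apply the edge update on the original ordering $p$ and then try to uncross each newly crossing $\sigma=\pi'/uv$ individually by walking some $p'$ back to $p$. Your assertion that each such walk costs $O(|B_x|^2)$ is not justified: the walk needs $\Theta(|B_x|)$ adjacent transpositions in general (take the rainbow partition $\pi'=\{\{1,2m\},\{2,2m-1\},\ldots,\{m,m+1\}\}$ with $u=1$, $v=m$; then $\sigma$ crosses $\Theta(m)$ blocks on $p$), and along the way the support of the running coefficient vector grows from $1$ towards $\rank(F_{B_x})$, so the later swaps cost $O(\rank(F_{B_x})\,|B_x|)$ each. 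That gives $O(\rank(F_{B_x})\,|B_x|^2)$ per $\sigma$ and hence $O(\rank(F_{B_x})^2\,|B_x|^2)$ overall, a factor $\rank(F_{B_x})$ off the stated bound. The ``restricted structure'' of $\sigma$ does not curb this: once a single Lemma~\ref{lem:PartSwap} step has been applied, the resulting partitions are no longer single-merge perturbations of anything.

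The paper sidesteps this by reversing the order of operations. It first swaps the \emph{whole} vector $\tau_y'$ to a representative $\tau_y''$ that is reduced on an ordering $p'$ in which $u$ and $v$ are adjacent (at most $|B_x|$ global applications of Lemma~\ref{lem:PartSwap}, each costing $O(\rank(F_{B_x})\,|B_x|)$), and only then performs the edge update. The point is that when $u$ and $v$ are neighbours in the ordering, $\pi'\mapsto\pi'\sqcup\pi_{uv}$ preserves non-crossingness, so the updated vector is already reduced on $p'$ and no per-entry uncrossing is needed at all. This is the missing idea that recovers the $O(\rank(F_{B_x})\,|B_x|^2)$ bound.
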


Before we prove this lemma, we introduce the following technical lemma. This lemma will be useful to show that representation is preserved after applying the dynamic programming step.

\begin{lemma} \label{lem:PartComm}
    Let $\pi, \chi, \rho \vdash [n]$ be partitions such that $\pi \sim \chi$ and $\rho \sim \chi$. We have that $\pi \sqcup \chi \sim \rho$ if and only if $\pi \sim \rho \sqcup \chi$.
\end{lemma}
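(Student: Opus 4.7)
The plan is to prove the lemma by combining associativity of the ``glue'' operation $\oplus$ with the fact, noted just after Definition~\ref{def:RepFor}, that whether two partitions induce a cycle depends only on the partitions themselves, not on the particular choice of representative forests. Concretely, the claim will reduce to rewriting a three-way glue and parsing it either way.

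First I would observe that when $\pi \sim \chi$, the graph $G_\pi \oplus G_\chi$ is itself a forest, and moreover its connected components when restricted to $[n]$ are exactly the blocks of $\pi \sqcup \chi$. The latter follows by unfolding the definition of join: if $u$ and $v$ lie in the same block of $\pi$ (resp.\ of $\chi$) they are connected in $G_\pi$ (resp.\ $G_\chi$), hence in the glue; conversely, any path between boundary vertices in $G_\pi \oplus G_\chi$ alternates segments lying inside $G_\pi$ and $G_\chi$, forcing its endpoints into the same block of the finest common coarsening. Thus $G_\pi \oplus G_\chi$ is a representative forest for $\pi \sqcup \chi$; symmetrically, when $\rho \sim \chi$, the graph $G_\rho \oplus G_\chi$ is a representative forest for $\rho \sqcup \chi$.

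Next, using the representative-independence of the induce-a-cycle relation, I would translate both sides of the biconditional into statements about boundaried graphs:
\[
\pi \sqcup \chi \sim \rho \iff (G_\pi \oplus G_\chi) \oplus G_\rho \text{ is a forest},
\]
\[
\pi \sim \rho \sqcup \chi \iff G_\pi \oplus (G_\chi \oplus G_\rho) \text{ is a forest}.
\]
Since $\oplus$ is literally disjoint union followed by identification of boundary vertices, it is associative as a graph operation, so the two glued graphs coincide; hence both right-hand sides are identical, giving the equivalence.

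The main technical point to get right is the verification that $G_\pi \oplus G_\chi$ really is a representative forest for $\pi \sqcup \chi$ (and similarly for the other side), since without this identification we could not invoke representative-independence. This is an unfolding of definitions rather than a deep step, but one must be a little careful in the ``conversely'' direction, since a path between two boundary vertices in the glued graph may visit other boundary vertices and so witnesses membership in the same block of the join only via the transitive closure of ``same block in $\pi$ or same block in $\chi$'', which is exactly how $\pi \sqcup \chi$ is defined. Once this is in place, associativity of $\oplus$ closes the argument.
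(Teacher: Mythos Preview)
Your proposal is correct and follows essentially the same approach as the paper: both arguments pick representative forests $G_\pi$, $G_\chi$, $G_\rho$, observe that $G_\pi \oplus G_\chi$ is a representative forest of $\pi \sqcup \chi$ (and likewise for $\rho \sqcup \chi$), and then use associativity of $\oplus$ to conclude. Your write-up is in fact slightly more careful than the paper's in justifying why the glued forest represents the join, but the underlying idea is identical.
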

\begin{proof}
    Recall the definition of a representative forest \ref{def:RepFor}. Let $G_\pi$, $G_\chi$ and $G_\rho$ be representative forests of $\pi$, $\chi$ and $\rho$ respectively. 
    
    Suppose that $\pi \sqcup \chi \sim \chi$. Since $\pi \sim \chi$, $G_\pi \oplus G_\chi$ is a forest. Moreover it is a representative forest of $\pi \sqcup \chi$. By the same reasoning we find that $G_\rho \oplus G_\chi$ is a representative forest of $\rho \sqcup \chi$. Since $\pi \sqcup \chi \sim \rho$, we find that $(G_\pi \oplus G_\chi) \oplus G_\rho = G_\pi \oplus (G_\chi \oplus G_\rho)$ is a forest and thus $\pi \sim \rho \sqcup \chi$.

    The reverse direction follows from a similar argument.
\end{proof}
\begin{proof}[Proof of Lemma \ref{lem:ForEdge}]
    Let $x$ be an edge-introduce node for edge $uv$. It is not hard to see that if $u$ and $v$ are adjacent in the vertex ordering of $B_x$, then $\pi \sqcup \pi_{uv}$ is non-crossing if and only if $\pi$ is non-crossing. We will aim find a $F_{B_y}$-representative $\tau_y''$ of $\tau_y'$, that is reduced on an ordering $p'$ in which $u$ and $v$ are adjacent.

    By applying Lemma \ref{lem:PartSwap} to each entry of $\tau_y'$ we can find a a $F_{B_y}$-representative of $\tau_y'$, that is reduced on $p \circ (i, i+1)$, that is we can swap two consecutive elements. Using at most $|B_y|$ of these swaps we can ensure that $u$ and $v$ are adjacent. Each such swap costs $|B_y|$ time per non-zero entry of the current vector. Since any reduced vector has at most $\rank(F_{B_y})$ non-zero entries, we find a runtime of $O(\rank(F_{B_y})|B_y|^2)$.

    We can now compute the desired $\tau_x'$. We first define
    \[
    \pi_{uv} := \{\{w\} : w \in B_y \setminus \{u,v\}\} \cup \{\{u, v\}\}
    \]
    and set
    \[
    \tau_x'[\pi] := \tau_y''[\pi] + \sum_{\pi' \sqcup \pi_{uv} = \pi} F_n[\pi', \pi_{uv}] \tau_y''[\pi']
    \]
    which is still reduced on $p'$, since $u$ and $v$ are adjacent. Finally we again show that $F_{B_x}\tau_x' = F_{B_x}\tau_x$.
    \begin{align*}
        \sum_{\pi \sim \rho} \tau_y[\pi]
        &= \sum_{\pi \sim \rho} \left( \tau_x[\pi] + \sum_{\pi' \sqcup \pi_{uv} = \pi} F_n[\pi', \pi_{uv}] \tau_x[\pi'] \right) \\
        &= \sum_{\pi \sim \rho} \left( \tau_x[\pi] \right) + \sum_{\pi \sim \rho} \left( \sum_{\pi' \sqcup \pi_{uv} = \pi} F_n[\pi', \pi_{uv}] \tau_x[\pi'] \right)
        \intertext{Since $\tau_x'$ $F_n$-represents $\tau_x$ we find}
        &= \sum_{\pi \sim \rho} \left( \tau_x'[\pi] \right) + \sum_{\pi \sim \rho} \left( \sum_{\pi' \sqcup \pi_{uv} = \pi} F_n[\pi', \pi_{uv}] \tau_x[\pi'] \right) \\
        &= \sum_{\pi \sim \rho} \left( \tau_x'[\pi] \right) + \sum_{\pi' \sqcup \pi_{uv} \sim \rho} F_n[\pi', \pi_{uv}] \tau_x[\pi']
        \intertext{If $\rho \sim \pi_{uv}$, by Lemma \ref{lem:PartComm} we find}
        &= \sum_{\pi \sim \rho} \left( \tau_x'[\pi] \right) + \sum_{\pi' \sim \rho \sqcup \pi_{uv}} \left( \tau_x[\pi'] \right) \\
        &= \sum_{\pi \sim \rho} \left( \tau_x'[\pi] \right) + \sum_{\pi' \sim \rho \sqcup \pi_{uv}} \left( \tau_x'[\pi'] \right) \\
        &= \sum_{\pi \sim \rho} \left( \tau_x'[\pi] \right) + \sum_{\pi \sim \rho} \left( \sum_{\pi' \sqcup \pi_{uv} = \pi} F_n[\pi', \pi_{uv}] \tau_x'[\pi'] \right) \\
        &= \sum_{\pi \sim \rho} \left( \tau_x'[\pi] + \sum_{\pi' \sqcup \pi_{uv} = \pi} F_n[\pi', \pi_{uv}] \tau_x'[\pi'] \right) \\
        &= \sum_{\pi \sim \rho} \tau_y'[\pi]
        \intertext{Otherwise we find $\pi' \sqcup \pi_{uv} \not\sim \rho$ for any $\pi'$ and thus}
        \sum_{\pi \sim \rho} \tau_y[\pi]
        %&= \sum_{\pi \sim \rho} \left( \tau_x'[\pi] \right) + \sum_{\pi \sim \rho} \left( \sum_{\pi' \sqcup \pi_{uv} = \pi} F_n[\pi', \pi_{uv}] \tau_x[\pi'] \right) \\
        &= \sum_{\pi \sim \rho} \left( \tau_x'[\pi] \right) \\
        &= \sum_{\pi \sim \rho} \left( \tau_x'[\pi] \right) + \sum_{\pi \sim \rho} \left( \sum_{\pi' \sqcup \pi_{uv} = \pi} F_n[\pi', \pi_{uv}] \tau_x'[\pi'] \right) \\
        &= \sum_{\pi \sim \rho} \tau_y'[\pi]
    \end{align*}
\end{proof}

\subsubsection*{Join node}

\begin{lemma} \label{lem:ForJoin}
    Let $x$ be a join node with child nodes $y_1$ and $y_2$. Suppose that $\tau_{y_i}'$ is reduced and $F_{B_{y_i}}$-represents $\tau_{y_i}$ for $i = 1, 2$. We can compute a row $\tau_x'$ that is reduced and $F_{B_x}$-represents $\tau_x$, in time $O(\rank(F_{B_x})^3|B_x|^3)$.
\end{lemma}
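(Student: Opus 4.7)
The proof follows the same template as the edge-introduce case in Lemma \ref{lem:ForEdge}: combine the two child rows via the natural join DP update, verify the representation property using Lemma \ref{lem:PartComm}, and then uncross the result to restore reducedness. The first step is to align orderings. Since $\tau_{y_1}'$ and $\tau_{y_2}'$ may be reduced on different orderings, I would apply Lemma \ref{lem:PartSwap} to each of their nonzero entries to transform them into vectors that are both reduced on a single common ordering $p$ of $B_x$. By the same argument used in the edge-introduce case, this costs at most $O(\rank(F_{B_x}) |B_x|^2)$ per child row.

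Next, perform the natural join update
\[
\tau_x''[\pi] := \sum_{\substack{\pi_1 \sqcup \pi_2 = \pi \\ \pi_1 \sim \pi_2}} \tau_{y_1}'[\pi_1] \cdot \tau_{y_2}'[\pi_2].
\]
Since each child row has at most $\rank(F_{B_x})$ nonzero entries, there are at most $\rank(F_{B_x})^2$ pairs to consider, each taking $O(|B_x|)$ time for the compatibility test and the join computation. To verify that $\tau_x''$ is an $F_{B_x}$-representative of $\tau_x$, I would expand $\sum_{\pi \sim \rho} \tau_x[\pi]$ using the definition of $\tau_x$ at a join node (forests in $G_x$ correspond to pairs of forests in $G_{y_1}$ and $G_{y_2}$ agreeing on the shared boundary whose union is acyclic). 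Repeated application of Lemma \ref{lem:PartComm} lets one rewrite the condition $\pi_1 \sqcup \pi_2 \sim \rho$ as the separate conditions $\pi_1 \sim \rho$ and $\pi_2 \sim \rho \sqcup \pi_1$, decoupling the inner sum so that the inductive hypotheses on $\tau_{y_1}'$ and $\tau_{y_2}'$ can each be applied.

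The vector $\tau_x''$ may still be supported on crossing partitions, since the join of two non-crossing partitions need not be non-crossing. To reduce it, I would apply the uncrossing procedure of Theorem \ref{thm:CompRank} to each crossing entry, rewriting $\tau_x''$ as a linear combination $\tau_x'$ of rows corresponding to partitions that are non-crossing on $p$. The cleanest implementation is to precompute (once per join node) the inverse of the $\rank(F_{B_x}) \times \rank(F_{B_x})$ submatrix of $F_{B_x}$ indexed by non-crossing partitions, in $O(\rank(F_{B_x})^3)$ time, and then convert $\tau_x''$ to $\tau_x'$ by a single matrix-vector product against the relevant row projection of $F_{B_x}$.

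The main obstacle is the runtime bookkeeping for the uncrossing step: a direct iterated application of Lemma \ref{lem:PartSwap} to every crossing entry of $\tau_x''$ can blow up because each swap may introduce new entries that themselves require further swaps. The linear-algebraic shortcut described above circumvents this by replacing all of the bookkeeping with one matrix inversion and one matrix-vector product, and the extra factor of $|B_x|^3$ in the claimed runtime absorbs the per-operation costs of evaluating an $F_{B_x}$ entry (each requiring an $O(|B_x|)$-time acyclicity check on the glue of two representative forests), applied across the $O(\rank(F_{B_x})^2)$ support of $\tau_x''$.
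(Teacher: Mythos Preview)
Your proposal is correct and follows the paper closely for the join update and the representation argument via Lemma~\ref{lem:PartComm}. The preliminary alignment of the two child orderings is harmless but unnecessary: the paper simply computes $\tau_x''$ directly and only uses that each $\tau_{y_i}'$ has support at most $\rank(F_{B_x})$, which holds regardless of whether the two orderings agree.

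The one genuine difference is in how you restore reducedness. You solve a linear system against the non-crossing submatrix of $F_{B_x}$; this is valid since Theorem~\ref{thm:CompRank} guarantees the system is consistent, and it fits comfortably in the stated time. The paper instead avoids the blowup you worried about by a combinatorial trick: for each nonzero entry $\pi$ of $\tau_x''$ it first picks an ordering $p'$ on which that particular $\pi$ is non-crossing, treats $\tau_x''[\pi]\cdot e_\pi$ as a reduced vector on $p'$, and then walks $p'$ back to $p$ using at most $|B_x|^2$ adjacent swaps via Lemma~\ref{lem:PartSwap}. Because the intermediate vector is reduced on some ordering after every swap, its support never exceeds $\rank(F_{B_x})$, so there is no cascading growth. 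Your linear-algebraic shortcut trades this swap bookkeeping for a single Gaussian elimination; both reach the $O(\rank(F_{B_x})^3|B_x|^3)$ bound, with your route arguably cleaner and the paper's route staying purely within the swap machinery already developed.
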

\begin{proof}
    We begin by setting
    \[
    \tau_x''[\pi] := \sum_{\pi_1 \sqcup \pi_2 = \pi} F_n[\pi_1, \pi_2] \tau_{y_1}'[\pi_1] \tau_{y_2}'[\pi_2]
    \]
    We will first prove that $\tau_x''$ $F_{B_x}$-represents $\tau_x$ and then reduce it afterwards.
    \begin{align*}
        \sum_{\pi \sim \rho}\tau_x[\pi]
        &= \sum_{\pi \sim \rho} \sum_{\pi_1 \sqcup \pi_2 = \pi} F_n[\pi_1, \pi_2] \tau_{y_1}[\pi_1] \tau_{y_2}[\pi_2]
        \intertext{By changing the order in which we pick $\pi$, $\pi_1$ and $\pi_2$ we can rewrite this expression as}
        &= \sum_{\pi \sim \rho} \sum_{\pi_1 \leq \pi} \tau_{y_1}[\pi_1] \sum_{\pi_1 \sqcup \pi_2 = \pi} F_n[\pi_1, \pi_2] \tau_{y_2}[\pi_2] \\
        &= \sum_{\pi_1} \tau_{y_1}[\pi_1] \sum_{\substack{\pi \sim \rho \\ \pi_1 \leq \pi}} \sum_{\pi_1 \sqcup \pi_2 = \pi} F_n[\pi_1, \pi_2] \tau_{y_2}[\pi_2]
        \intertext{We can now merge the two inner sums into one, which results in}
        &= \sum_{\pi_1} \tau_{y_1}[\pi_1] \sum_{\pi_1 \sqcup \pi_2 \sim \rho} F_n[\pi_1, \pi_2] \tau_{y_2}[\pi_2]
        \intertext{If $\rho \sim \pi_1$, by Lemma \ref{lem:PartComm} we find}
        \sum_{\pi_1 \sqcup \pi_2 \sim \rho} F_n[\pi_1, \pi_2] \tau_{y_2}[\pi_2]
        &= \sum_{\pi_2 \sim \rho \sqcup \pi_1} \tau_{y_2}[\pi_2] \\
        &= \sum_{\pi_2 \sim \rho \sqcup \pi_1} \tau_{y_2}''[\pi_2] \\
        &= \sum_{\pi_1 \sqcup \pi_2 \sim \rho} F_n[\pi_1, \pi_2] \tau_{y_2}''[\pi_2] \\
        \intertext{Otherwise we find}
        \sum_{\pi_1 \sqcup \pi_2 \sim \rho} F_n[\pi_1, \pi_2] \tau_{y_2}[\pi_2]
        &= 0 
        = \sum_{\pi_1 \sqcup \pi_2 \sim \rho} F_n[\pi_1, \pi_2] \tau_{y_2}''[\pi_2] \\
        \intertext{Either way we find}
        \sum_{\pi \sim \rho}\tau_x[\pi]
        &= \sum_{\pi_1} \tau_{y_1}[\pi_1] \sum_{\pi_1 \sqcup \pi_2 \sim \rho} F_n[\pi_1, \pi_2] \tau_{y_2}''[\pi_2]
        \intertext{By applying the same operations as before, but in reverse, we find}
        &= \sum_{\pi_1} \tau_{y_1}[\pi_1] \sum_{\substack{\pi \sim \rho \\ \pi_1 \leq \pi}} \sum_{\pi_1 \sqcup \pi_2 = \pi} F_n[\pi_1, \pi_2] \tau_{y_2}''[\pi_2] \\
        &= \sum_{\pi \sim \rho} \sum_{\pi_1 \leq \pi} \tau_{y_1}[\pi_1] \sum_{\pi_1 \sqcup \pi_2 = \pi} F_n[\pi_1, \pi_2] \tau_{y_2}''[\pi_2] \\
        &= \sum_{\pi \sim \rho} \sum_{\pi_1 \sqcup \pi_2 = \pi} F_n[\pi_1, \pi_2] \tau_{y_1}[\pi_1] \tau_{y_2}''[\pi_2]
        \intertext{By a symmetric argument as given so far, we find}
        &= \sum_{\pi \sim \rho} \sum_{\pi_1 \sqcup \pi_2 = \pi} F_n[\pi_1, \pi_2] \tau_{y_1}''[\pi_1] \tau_{y_2}''[\pi_2] \\
        &= \sum_{\pi \sim \rho} \tau_x''[\pi]
    \end{align*}

    We now describe how we reduce $\tau_x''$ to find $\tau_x'$. For each partition $\pi$ such that $\tau_x''[\pi] \neq 0$, we first determine an ordering $p'$ for which $\pi$ is non-crossing. Note that we can transform $p'$ into $p$ by performing at most $|B_x|^2$ swaps, where we swap the order of two consecutive elements. Again by applying Lemma \ref{lem:PartSwap} to each entry of $\tau_x''$ we can find a a $F_{B_x}$-representative of $e_{\pi} \cdot \tau_x''[\pi]$, that is reduced on $p' \circ (i, i+1)$.

    We perform at most $O(|B_x|^2)$ such swaps each costing at most $O(\rank(F_{B_x})|B_x|)$, since the support of the vector cannot exceed $\rank(F_{B_x})$. After we have done this for every such $\pi$, we sum the resulting vectors to find an $F_{B_x}$-representative $\tau_x'$ of $\tau_x''$. Finding the vectors takes $O(\rank(F_{B_x})|B_x|^3)$ per non-zero entry of $\tau_x''$ and thus takes $O(\supp(\tau_x'')\rank(F_{B_x})|B_x|^3)$ time in total. Summing all the vectors takes at most $O(\rank(F_{B_x})\supp(\tau_x'))$ time. Since we assumed $\tau_{y_1}'$ and $\tau_{y_2}'$ to be reduced, we find that $\supp(\tau_x'') \leq \rank(F_{B_x})^2$ and thus the algorithm runs in time $O(\rank(F_{B_x})^3|B_x|^3)$.
\end{proof}

\subsubsection*{Algorithmic results}
The previous lemmas together give the following algorithms.
\begin{theorem} \label{thm:ForPWAlg}
    There exists an algorithm that, given a graph $G$ with a path decomposition of width $\pw(G)$, computes the number of forests in the graph in time $4^{\pw(G)} n^{O(1)}$.
\end{theorem}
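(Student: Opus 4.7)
The plan is to perform standard dynamic programming over a nice path decomposition of $G$, but instead of storing the full table $\tau_x$ (which could have as many entries as there are partitions of $B_x$) we only maintain a reduced representative $\tau_x'$ at each node, as afforded by Lemmas~\ref{lem:ForVInt}, \ref{lem:ForVFor}, and~\ref{lem:ForEdge}. A path decomposition contains only leaf, vertex-introduce, vertex-forget, and edge-introduce bags (no join bags), so Lemma~\ref{lem:ForJoin} is not needed here.

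First, I would preprocess the input path decomposition into a nice one (with edge-introduce bags) in polynomial time and without increasing its width; this is standard. I would then process the bags from the leaf upwards, initializing $\tau_{\text{leaf}}'[\emptyset] := 1$ and, at each subsequent node $x$ with child $y$, computing $\tau_x'$ from $\tau_y'$ by invoking the appropriate lemma among Lemmas~\ref{lem:ForVInt}, \ref{lem:ForVFor}, \ref{lem:ForEdge}. By induction, each $\tau_x'$ is reduced on some ordering of $B_x$ and $F_{B_x}$-represents $\tau_x$. At the root $r$, the bag is $B_r = \emptyset$, the only partition of $\emptyset$ is $\pi = \emptyset$, the matrix $F_\emptyset$ is the $1\times 1$ identity, and so $\tau_r'[\emptyset] = \tau_r[\emptyset]$ is exactly the number of acyclic edge subsets of $G$, i.e.\ the number of (spanning) forests.

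For the running-time analysis, Theorem~\ref{thm:ForRank} guarantees that any reduced vector indexed by partitions of $B_x$ has support at most $C_{|B_x|} = O(4^{|B_x|} |B_x|^{-3/2})$. Since $|B_x| \le \pw(G)+1$, this is $O(4^{\pw(G)})$ up to polynomial factors in $\pw(G)$. Consulting Lemmas~\ref{lem:ForVInt}, \ref{lem:ForVFor}, and~\ref{lem:ForEdge}, the work at each bag is bounded by $O(\rank(F_{B_x})\cdot |B_x|^{O(1)}) = 4^{\pw(G)} \cdot (\pw(G))^{O(1)}$. A nice path decomposition has $O(n + |E(G)|) = n^{O(1)}$ bags, so summing over all bags yields the claimed $4^{\pw(G)} n^{O(1)}$ total running time.

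I do not anticipate a major obstacle: the rank bound is already in hand (Theorem~\ref{thm:ForRank}), and the per-bag correctness and timing estimates are already packaged as Lemmas~\ref{lem:ForVInt}, \ref{lem:ForVFor}, \ref{lem:ForEdge}. The only minor point requiring attention is making sure the orderings of the bags passed between consecutive lemma invocations remain consistent; as in the proof of Lemma~\ref{lem:ForEdge}, any required reordering can be achieved by a sequence of at most $|B_x|^2$ applications of Lemma~\ref{lem:PartSwap}, each of which costs $O(\rank(F_{B_x})\cdot |B_x|)$ and therefore does not affect the asymptotic bound.
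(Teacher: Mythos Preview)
Your proposal is correct and follows essentially the same approach as the paper: process a nice path decomposition bag by bag, applying Lemmas~\ref{lem:ForVInt}, \ref{lem:ForVFor}, and~\ref{lem:ForEdge} to maintain a reduced $F_{B_x}$-representative at each node, and read off the answer at the empty root bag. The paper's own proof is virtually identical, only more terse; your extra remark about ordering consistency is unnecessary (that is already absorbed into Lemma~\ref{lem:ForEdge}) but does no harm.
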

\begin{proof}
    W.l.o.g. we assume we are given a nice path decomposition, where the first and last nodes correspond to empty bags. As mentioned in the section of leaf nodes we can directly compute a representative solution on the first node. By applying Lemma's \ref{lem:ForVInt}, \ref{lem:ForVFor} and \ref{lem:ForEdge}, we can compute representative solutions for all nodes. The row corresponding to the last node will contain a single entry, which gives the number of forests in the graph.

    By Lemma's \ref{lem:ForVInt}, \ref{lem:ForVFor} and \ref{lem:ForEdge}, each step in the dynamic program takes at most $O(\rank(F_{\pw(G)})\pw(G)^2) = O(4^{\pw(G)}\pw(G)^{1/2})$ time. Since there are $O(n^2)$ steps we find a total running time of $O(4^{\pw(G)}\pw(G)^{1/2}n^2) = 4^{\pw(G)} n^{O(1)}$.
\end{proof}
\begin{theorem} \label{thm:ForTWAlg}
    There exists an algorithm that, given a graph $G$ with a tree decomposition of width $\tw(G)$, computes the number of forests in the graph in time $64^{\tw(G)} n^{O(1)}$.
\end{theorem}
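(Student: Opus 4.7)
The plan is to mirror the path-decomposition argument of Theorem~\ref{thm:ForPWAlg}, but now run the dynamic program on a nice tree decomposition that additionally contains join nodes (as well as the leaf, vertex-introduce, vertex-forget, and edge-introduce nodes used before). Standard preprocessing allows us to assume the decomposition has $O(n\cdot\tw(G))$ nodes without increasing its width. We maintain at every node $x$ a reduced vector $\tau_x'$ that $F_{B_x}$-represents the true table $\tau_x$ defined at the start of Section~\ref{sec:forests}.

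The bottom-up computation proceeds as follows. At leaf nodes we initialize $\tau_x'[\emptyset]:=1$, which is trivially reduced and $F_0$-representative. Vertex-introduce, vertex-forget, and edge-introduce nodes are handled exactly by Lemmas~\ref{lem:ForVInt}, \ref{lem:ForVFor}, and \ref{lem:ForEdge} respectively; each costs $O(\rank(F_{B_x})\cdot|B_x|^2)$ time. The new ingredient is the join node: here we invoke Lemma~\ref{lem:ForJoin}, which produces a reduced $F_{B_x}$-representative $\tau_x'$ in time $O(\rank(F_{B_x})^3\,|B_x|^3)$. At the root, which has an empty bag, the unique entry of $\tau_r'$ equals $\tau_r[\emptyset]$, i.e.\ the number of spanning forests of $G$.

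The running time is dominated by the join-node step. By Theorem~\ref{thm:ForRank} we have $\rank(F_{B_x})\le C_{|B_x|}=O(4^{\tw(G)+1}(\tw(G)+1)^{-3/2})$, so each join node costs
\[
O\!\left(\rank(F_{B_x})^3\cdot|B_x|^3\right)\;=\;O\!\left(64^{\tw(G)}\cdot\tw(G)^{-3/2}\right),
\]
and the cheaper node types are asymptotically absorbed. Multiplying by the $O(n\cdot\tw(G))$ nodes of the nice tree decomposition yields the claimed $64^{\tw(G)}\, n^{O(1)}$ total running time.

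I do not anticipate a substantive obstacle: all of the heavy lifting (the rank bound $C_n$, the uncrossing procedure, and correctness of the representative updates including the join) has been established earlier in Section~\ref{sec:forests}. The only thing to verify carefully is that applying Lemma~\ref{lem:ForJoin} once per join node suffices and that the reduced vectors stay reduced across all node types, which is immediate from the statements of Lemmas~\ref{lem:ForVInt}--\ref{lem:ForJoin}.
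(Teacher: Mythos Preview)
Your proposal is correct and follows essentially the same approach as the paper: both run the reduced-representative dynamic program over a nice tree decomposition, invoke Lemmas~\ref{lem:ForVInt}--\ref{lem:ForJoin} at the respective node types, observe that the join step dominates with cost $O(\rank(F_{B_x})^3|B_x|^3)=O(64^{\tw(G)}\tw(G)^{-3/2})$ per node, and multiply by a polynomial bound on the number of nodes. The only cosmetic difference is that you use the sharper $O(n\cdot\tw(G))$ node count whereas the paper uses a looser $O(n^4)$, but both are absorbed into $n^{O(1)}$.
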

\begin{proof}
    W.l.o.g. we assume we are given a nice tree decomposition, where the leaf  nodes correspond to empty bags. We will root this decomposition in of the leaf nodes. As mentioned in the section of leaf nodes we can directly compute a representative solution on the first node. By applying Lemma's \ref{lem:ForVInt}, \ref{lem:ForVFor}, \ref{lem:ForEdge} and \ref{lem:ForJoin}, we can compute representative solutions for all nodes. The row corresponding to the root node will contain a single entry, which gives the number of forests in the graph.

    By Lemma's \ref{lem:ForVInt}, \ref{lem:ForVFor}, \ref{lem:ForEdge} and \ref{lem:ForJoin}, each step in the dynamic program takes at most $O(\rank(F_{\tw(G)})^3\tw(G)^3) = O(64^{\tw(G)}\tw(G)^{-3/2})$ time. Since there are $O(n^4)$ steps we find a total running time of $O(64^{\tw(G)}\tw(G)^{-3/2}n^4) = 64^{\tw(G)} n^{O(1)}$.
\end{proof}

\section{Other cases}

In this section we handle the remaining cases mentioned in Theorem \ref{thm:main}.

\subsection{The curve $H_2$}
The curve $H_2$ is equivalent to the partition function of the Ising model. Both our proofs for the upper and lower bound on the complexity will make use of this fact.
%\subsubsection{Lower bound}
\begin{theorem} \label{thm:H2Low}
Computing the Tutte polynomial along the curve $H_2$ cannot be done in time $(2-\epsilon)^{\ctw(G)} n^{O(1)}$, unless \textsc{SETH} fails.
\end{theorem}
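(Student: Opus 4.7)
The plan is to reduce from a SETH-hard concrete problem, namely $\#\mathrm{SAT}$, via the Ising-model interpretation of the Tutte polynomial on $H_2$, and then to lift the resulting lower bound from a single non-special integer point on $H_2$ to the whole curve using Theorem~\ref{thm:redCurveMain}. The non-special integer points on $H_2$ are $(2,3)$ and $(3,2)$; choosing $(3,2)$ ensures $|a|=3\notin\{0,1\}$, so Lemma~\ref{lem:aNotOne} applies and preserves cutwidth with only an additive $O(1)$ loss that is absorbed into the $(2-\epsilon)^{\ctw}$ statement. Consequently it suffices to prove a $(2-\epsilon)^{\ctw(G)}n^{O(1)}$ lower bound for evaluating $T(G;3,2)$ under SETH.

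The main technical step is to construct, from any CNF formula $\phi$ on $n$ variables and $m$ clauses, a graph $G_\phi$ with cutwidth $n+O(1)$ such that $T(G_\phi;3,2)$ is a polynomial-time-recoverable nonzero multiple of $\#\mathrm{SAT}(\phi)$. At $(3,2)$ we have $q=(x-1)(y-1)=2$ and $v=y-1=1$, and the corresponding $2$-state Potts (Ising) partition function is
\[
Z(G;2,1)=\sum_{\sigma:V\to\{0,1\}} 2^{m(\sigma)},
\]
where $m(\sigma)$ counts the monochromatic edges; $T(G;3,2)$ is obtained from $Z$ by a known explicit factor that depends only on $|V|$ and on the number of connected components. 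I would place variable vertices $v_1,\dots,v_n$ whose spins encode the truth assignment along a single line, so that the variable-only subgraph already has cutwidth $0$, and attach an $O(1)$-sized local gadget for each clause $C_j$, built from auxiliary vertices and parallel edges, that contributes a fixed nonzero constant to the partition function when $C_j$ is satisfied by the local spins and $0$ otherwise. Summing over $\sigma$ then yields a recoverable multiple of $\#\mathrm{SAT}(\phi)$.

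The hard part is twofold. First, one must realize the clause indicator using only monochromatic-edge weights: this requires a small algebraic computation to verify that the local partition function of the gadget takes the intended values on satisfying and unsatisfying assignments (working over $\mathbb{Z}[1/c]$ for some fixed nonzero normalizer $c$ is enough, since we divide out in postprocessing). Second, and more delicately, we must maintain cutwidth $n+O(1)$ even though clauses may involve variables at arbitrary positions in the ordering; the standard tool, in the spirit of the Lokshtanov--Marx--Saurabh framework, is to route $O(1)$ propagation ``wires'' along the cut that carry copies of each variable's spin to the clause gadget and are immediately retracted after the clause is resolved, processing clauses one at a time so that wire slots can be reused. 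Designing the insulation of these wires so that at every cut position only $O(1)$ extra edges cross on top of the $n$ variable lines is the principal obstacle.

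Granted the reduction, a hypothetical $(2-\epsilon)^{\ctw(G)}n^{O(1)}$ algorithm for evaluating $T$ along $H_2$ would, via Theorem~\ref{thm:redCurveMain} together with the interpolation in Lemma~\ref{lem:IntPol}, evaluate $T(G_\phi;3,2)$ in time $(2-\epsilon)^{n+O(1)}n^{O(1)} = (2-\epsilon)^n n^{O(1)}$, and thus compute $\#\mathrm{SAT}(\phi)$ within the same time bound, contradicting SETH and completing the proof.
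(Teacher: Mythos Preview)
Your proposal has a genuine gap at the gadget step. At the point $(3,2)$ you have $q=2$ and $v=y-1=1$, so the Potts/Ising partition function is
\[
Z(G;2,1)=\sum_{\sigma:V\to\{0,1\}}\ \prod_{uv\in E}\bigl(1+[\sigma(u)=\sigma(v)]\bigr)=\sum_{\sigma}2^{m(\sigma)},
\]
a sum of strictly positive terms. Consequently the \emph{local} partition function of any gadget, obtained by fixing the boundary spins and summing over the internal ones, is a positive integer for every boundary configuration. It is therefore impossible to build a clause gadget that contributes $0$ on unsatisfying assignments, which is exactly what your plan requires. At best you could make satisfying boundary configurations contribute more than unsatisfying ones, but then a single evaluation no longer yields $\#\mathrm{SAT}(\phi)$; recovering it would need an additional layer of interpolation over a family of instances, and you would have to argue that this extra layer does not destroy the $n+O(1)$ cutwidth bound. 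Your sketch of the wire-routing for cutwidth $n+O(1)$ is also underspecified, but the positivity obstruction is the fundamental one.

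The paper avoids this entirely by a different chain of reductions. It uses that computing $T$ \emph{along} $H_2$ (i.e.\ as a univariate polynomial) yields the full generating function $C_G(z)=\sum_k c_k z^k$ of closed (even-degree) subgraphs, and hence in particular the number of \emph{maximum} closed subgraphs. In a graph in which every vertex has odd degree, an edge set $F$ is a perfect matching if and only if $E\setminus F$ is a maximum closed subgraph, so the two counts coincide. To force all degrees odd, pairs of even-degree vertices are connected by small $3$-star gadgets inserted locally in the cut order, increasing the cutwidth by at most~$2$. The lower bound then follows directly from the known $(2-\epsilon)^{\ctw}$ SETH lower bound for counting perfect matchings due to Curticapean and Marx.
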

\begin{proof}
The Tutte polynomial on this curve specializes to the partition function of the Ising model on $G = (V,E)$ \cite{1982Baxter}. Computing this function in its entirety is equivalent to computing the generating function
\[
C_G(z) = \sum \limits _{k=0}^\infty c_kz^k
\]
of the closed subgraphs in $G$ \cite{1967Kasteleyn}. Here $c_k$ gives the the number of closed subgraphs with $k$ edges, i.e. the number of edgesets $A \subseteq E$ such that every vertex has even degree in $(V, A)$ and $|A| = k$. Computing all coefficients of $C_G$ is clearly not easier than computing the number of closed subgraphs of maximum cardinality.

We finally show that computing the number of perfect matchings reduces to computing the number of maximum closed subgraphs, using a reduction from \cite{1987Jerrum} which can be slightly altered to only increase the cutwidth by an additive factor of $2$. There is a lower bound of $2^{\ctw(G)}$ for counting perfect matchings, due to \cite{2015CurtMarx}, which finishes the proof. It remains to show that we can reduce \#\textsc{PerfectMatchings} to \#\textsc{MaximumClosedSubgraphs}, while increasing the cutwidth by at most 2.

First note that if every vertex in $G$ has odd degree, then $F$ is a perfect matching if and only if $E \setminus F$ is a maximum closed subgraph and thus the number of perfect matchings on $G$ is equal to the number of maximum closed subgraphs. We will now construct a graph $G'$ that has the same number of perfect matchings as $G$, but has only vertices with odd degree. Using the above remark we then find a reduction from \#\textsc{PerfectMatchings} to \#\textsc{MaximumClosedSubgraphs}. Also note that we can determine whether a graph has at least one perfect matching in polynomial time and thus we may assume that $G$ has at least one perfect matching.

Let $v_1, v_2, \dots, v_n$ be a cut decomposition of $G$ of width $\ctw(G)$. Now let $v_1^e, v_2^e, \dots, v_l^e$ be the vertices with even degree, in order of appearance in the cut decomposition. Since $G$ has at least one perfect matching we find that $n$ is even. Since the number of odd degree vertices in a graph is always even we also find that $l$ is even. We now connect $v_{2i-1}^e$ to $v_{2i}^e$ using a $3$-star, see figure \ref{fig:3Star}, and call the resulting graph $G'$. Note that every vertex in $G'$ has odd degree and that in a perfect matching the 'dangling' vertex $d_i$ of a $3$-star has to be matched to the center $c_i$ of the $3$-star. We find that the $3$-stars have no effect on the number of perfect matchings of the graph.

\begin{figure}[h]
    \centering
    \begin{tikzpicture}[-, scale = 1.5]
		\tikzstyle{every state} = [inner sep = 0.1mm, minimum size = 5mm]
		
		\node[state]	(D)	at (0,0)		    {$d_i$};
		\node[state]	(U)	at (-1,-0.5)		{$u$};
		\node[state]	(V)	at (1,-0.5)			{$v$};
		\node[state]	(C)	at (0,1)		    {$c_i$};
		
		\path	(D)	    edge	node	{}	(U)
						edge	node	{}	(V)
						edge	node	{}	(C);
		\draw   (U) -- ($(U)+(-0.4,0)$);
		\draw   (V) -- ($(V)+(+0.4,0)$);
				
		\end{tikzpicture}
    \caption{A $3$-star between $u$ and $v$.}
    \label{fig:3Star}
\end{figure}
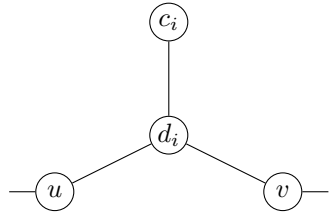

We find a new cut decomposition by simply inserting the vertices $c_i$ and $d_i$ directly after $v_{2i-1}^e$ in the cut decomposition.
\end{proof}

%\subsubsection{Upper bound}
We now prove the following matching upper bound.
\begin{theorem} \label{thm:H2Upp}
    Let $G$ be a graph with a given tree decomposition of width $\tw(G)$. There exists an algorithm that computes $T(G; a, b)$, for $(a,b) \in H_2$, in time $2^{\tw(G)}n^{O(1)}$.
\end{theorem}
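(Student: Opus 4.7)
The plan is to use the Fortuin--Kasteleyn relation between the Tutte polynomial on $H_q$ and the $q$-state Potts partition function, and then to evaluate the $q=2$ Potts partition function by a straightforward dynamic program over spin assignments to each bag of the tree decomposition.

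Setting $y-1=v$ and $x-1=q/v$ in the subgraph expansion defining $T(G;x,y)$ and simplifying exponents gives the identity
\[
T(G;x,y) = v^{k(G)-|V|}\, q^{-k(G)} \sum_{A \subseteq E} q^{k(A)} v^{|A|}.
\]
For any positive integer $q$, expanding the product in
\[
\sum_{\sigma : V \to [q]} \prod_{uv \in E} \bigl(1 + v\,[\sigma_u = \sigma_v]\bigr) = \sum_{A \subseteq E} v^{|A|} q^{k(A)}
\]
confirms equality, since for each fixed $A$ the inner sum counts the $q$-colorings constant on the components of $(V,A)$, of which there are $q^{k(A)}$. Specializing to $(a,b) \in H_2$ we have $q=2$ and $v=b-1$, with $v \neq 0$ because $(a-1)(b-1)=2$. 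Thus it suffices to evaluate $Z := \sum_{\sigma:V\to\{0,1\}} \prod_{uv\in E}(1+v\,[\sigma_u=\sigma_v])$ in time $2^{\tw(G)}n^{O(1)}$ and multiply by the closed-form prefactor.

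The algorithm I would use is a dynamic program on a nice tree decomposition. For each bag $B_x$, the table $\tau_x$ is indexed by spin assignments $\sigma : B_x \to \{0,1\}$ (so it has $2^{|B_x|} \le 2^{\tw(G)+1}$ entries) and stores, for each $\sigma$, the sum over all extensions of $\sigma$ to $V(G_x)$ of $\prod_{uv}(1+v\,[\sigma_u=\sigma_v])$ ranging over edges already introduced in $G_x$. Transitions are routine: a leaf bag has $\tau_x[\emptyset]=1$; a vertex-introduce node duplicates each child entry for both values of the new vertex; a vertex-forget node sums the two child entries differing on the forgotten coordinate; an edge-introduce node for $uv$ multiplies $\tau_x[\sigma]$ by $(1+v\,[\sigma_u=\sigma_v])$; and a join node takes the pointwise product of the two children's tables. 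Each transition uses $O(2^{|B_x|})$ arithmetic operations in the word RAM model, and summing over the $O(n)$ nodes gives total running time $2^{\tw(G)}n^{O(1)}$. Reading $Z$ off the root table and multiplying by $v^{k(G)-|V|}2^{-k(G)}$ produces $T(G;a,b)$.

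There is essentially no obstacle. Unlike the forest-counting DP of Section~\ref{sec:forests}, the $2$-state Potts partition function has no global connectivity constraint, so bag spin assignments already form a compact DP state and no rank-based compression is required. The only sanity check is that the prefactor $v^{k(G)-|V|}\,2^{-k(G)}$ is well defined on $H_2$, which is immediate from $v=b-1\neq 0$.
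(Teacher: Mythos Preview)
Your proof is correct. Both your approach and the paper's reduce to the Ising/$2$-state Potts partition function, but you use different (dual) representations of it. The paper passes through the high-temperature/even-subgraph expansion $C_G(z)=\sum_k c_k z^k$ (via Kasteleyn) and runs a DP whose state at bag $B_x$ is a parity vector $p\in\{0,1\}^{B_x}$ recording the degree parities of boundary vertices, together with an edge-count index $k$; the join step then requires a convolution over $\mathbb{Z}_2^{|B_x|}$ handled by fast subset convolution. You instead use the Fortuin--Kasteleyn spin representation and keep a table indexed by spin assignments $\sigma\in\{0,1\}^{B_x}$, which makes the join a plain pointwise product and avoids tracking $k$ altogether. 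Your route is slightly more elementary for evaluating at a single point $(a,b)\in H_2$, while the paper's route produces the entire generating polynomial $C_G(z)$ (hence all of $H_2$) in one pass; both yield the same $2^{\tw(G)}n^{O(1)}$ bound.
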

\begin{proof}
    As mentioned in the proof of Theorem \ref{thm:H2Low}, computing the Tutte polynomial along the curve is equivalent to computing the partition function of the Ising model \cite{1982Baxter}. Computing this function in its entirety is equivalent to computing the generating function
    \[
    C_G(z) = \sum \limits _{k=0}^\infty c_kz^k,
    \]
    where $c_k$ gives the the number of closed subgraphs with $k$ edges \cite{1967Kasteleyn}. We can compute the coefficients of this polynomial by computing the following dynamic programming table.

    Let $S[x, p, k]$ be the number of edgesets $A$ of size $|A| = k$ in the graph below bag $B_x$ such that $deg_{G[A]}(v) \equiv_2 p(v)$. Note that the number of entries in the table is $2^{\tw(G)} n^{O(1)}$, since we only need to consider $p \in \{0,1\}^{\tw(G)}$ and $k \in [n^2]$. We may compute new entries as follows where $\emptyset$ denotes the empty vector. If $B_x$ is a leaf bag then $S[x, \emptyset, 0] = 1$ and $S[x, \emptyset, k] = 0$ otherwise. If $x$ is a vertex-forget node for vertex $v$, then
    \[
    S[x, p, k] = S[y, p', k],
    \]
    where $p'$ is the vector given by $p'(u) = p(u)$ for $u \neq v$ and $p'(v) = 0$ and $y$ is the child node of $x$. If $x$ is a vertex-introduce node for vertex $v$, then
    \[
    S[x, p, k] = 
    \begin{cases}
        S[y, p_{B_y}, k]    & \text{if } $p(v) = 0$ \\
        0                   & \text{otherwise}
    \end{cases},
    \]
    where $y$ is the child node of $x$. If $x$ is an edge-introduce node for edge $e$, then
    \[
    S[x, p, k] = S[y, p, k] + S[y, p +_2 \mathbf{1}_e, k + 1],
    \]
    where $\bold{1}_e(u) = 1$ if and only if $u$ is one of the endpoints of $e$ and where we use $+_2$ to indicate addition in $\mathbb{Z}_2^\ell$ for some $\ell$. If $x$ is a join node, with children $y_1$ and $y_2$, we use fast subset convolution as described in \cite[Theorem A.6]{2011CNPPRW} to compute
    \[
    S[x, p, k] = \sum\limits_{i = 0}^k \sum\limits_{p_1 +_2 p_2 = p} S[y_1, p_1, i]S[y_2, p_2, k-i]. %= \sum\limits_{i = 0}^k  (S[y_1, \_, i] *_x S[y_2, \_, k - i])(p).
    \]
\end{proof}

\subsection{The curve $H_0^x$}
The curve $H_0^x$ contains the point $(1,2)$, which counts the number of connected edgesets of a connected graph. Using existing results this gives an \textsc{ETH} lower bound which matches the running time of the general algorithm described in Theorem \ref{thm:GenAlg}.
%\subsubsection{Lower bound}
\begin{theorem} \label{thm:H0Low}
Let $0 < \alpha < 1$. Computing the Tutte polynomial along the curve $H_0^x$ cannot be done in time $(\alpha\ctw(G)-\epsilon)^{(1-\alpha)\ctw(G)/2} n^{O(1)}$, unless \textsc{SETH} fails.
\end{theorem}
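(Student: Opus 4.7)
The plan is to reduce a \textsc{SETH}-hard counting problem (parameterized by cutwidth) to evaluating the Tutte polynomial at a single point on $H_0^x$, and then lift the bound to the whole curve using polynomial interpolation together with Theorem~\ref{thm:redCurveMain}.

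First I would observe that $T(G;1,2)$ counts the spanning connected subgraphs of a connected graph $G$, and more generally that $T_0(G;t) = T(G; 1, t+1)$ is a univariate polynomial of degree at most $|E|$ whose coefficients enumerate spanning connected subgraphs by edge count. Hence any algorithm that produces $T_0(G;\cdot)$ in the claimed time also outputs $T(G;1,2)$ within the same bound, so it suffices to prove the lower bound for a single well-chosen point $(1,y) \in H_0^x$. Because Theorem~\ref{thm:redCurveMain} increases the cutwidth only by a small additive constant in this regime, a SETH lower bound at such a point transfers to the whole curve.

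Next I would establish a SETH-based lower bound of the claimed shape for counting spanning connected subgraphs on bounded-cutwidth graphs. The natural strategy is a state-packing reduction from \#\textsc{SAT}: group roughly $\log_2(\alpha\ctw)$ variables per gadget, where each gadget admits $\alpha\ctw$ possible spanning-connected configurations encoding the $2^{\log_2(\alpha\ctw)}$ local assignments, and has cutwidth roughly $\tfrac{2}{1-\alpha}$. Then clause-gadgets are inserted that use local connectivity constraints to enforce clause satisfaction. If $N$ is the number of \#\textsc{SAT}-variables, the resulting graph has $N/\log_2(\alpha\ctw)$ variable-gadgets and cutwidth at most $\ctw$; a hypothetical algorithm running in $(\alpha\ctw-\epsilon)^{(1-\alpha)\ctw/2}$ time would therefore solve the original instance in time $(\alpha\ctw-\epsilon)^{N/\log_2(\alpha\ctw)} = 2^{N(1-\Omega(1))}$, contradicting \textsc{SETH}.

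The hardest part will be the gadget design. Connectivity is a global property of the whole graph, so the variable-gadgets must be built so that their local spanning-connected configurations act independently --- no unintended connections through clause gadgets or across the cut should inflate or collapse the state space. Simultaneously one has to balance two parameters tightly: each gadget must support $\Omega(\alpha\ctw)$ genuinely distinct configurations while using only $\tfrac{2}{1-\alpha}$ units of cutwidth. I expect that constructions along the lines of~\cite{2022GNMS} can be adapted (or invoked essentially as a black box): that paper already engineers gadgets whose spanning-connected extensions realize a prescribed set of vertex-side signatures, which is exactly the primitive needed here. Once the gadget is in place, the reduction itself --- embedding variables and clauses of a \#\textsc{SAT} formula into a linear layout of cutwidth $\ctw$ --- is a standard counting gadgeteering argument.
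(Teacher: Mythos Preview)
Your reduction sketch contains a genuine gap in the arithmetic. With the parameters you propose ($N/\log_2(\alpha\ctw)$ gadgets of cutwidth $\tfrac{2}{1-\alpha}$ each, so that $(1-\alpha)\ctw/2 = N/\log_2(\alpha\ctw)$), a hypothetical $(\alpha\ctw-\epsilon)^{(1-\alpha)\ctw/2}$-time algorithm would solve \#\textsc{SAT} in time
\[
(\alpha\ctw-\epsilon)^{N/\log_2(\alpha\ctw)} \;=\; 2^{\,N\cdot\log_2(\alpha\ctw-\epsilon)/\log_2(\alpha\ctw)} \;=\; 2^{\,N(1-o(1))},
\]
because $\epsilon$ is fixed while $\ctw\to\infty$; this is not $2^{(1-\Omega(1))N}$ and so does not contradict SETH. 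There is also a more basic inconsistency: a gadget of \emph{constant} cutwidth $\tfrac{2}{1-\alpha}$ can realise only a bounded number of distinguishable connectivity states across its boundary, not $\alpha\ctw$ many, so the state-packing you describe is impossible with those parameters.

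The paper sidesteps both issues by invoking~\cite{2022GNMS} at a different level --- not for its gadgets, but for its final statement that counting connected edge sets \emph{modulo a prime $p$} admits no $(p-\epsilon')^{\ctw}$-time algorithm under SETH, where the underlying reduction incurs an additive cutwidth overhead of $p^2$. Since computing $T$ along $H_0^x$ in particular yields $T(G;1,2)$ exactly (and hence modulo any $p$), one is now free to let $p$ grow with the instance: setting $p=(\alpha\ctw)^{1/2}$ turns the bound into $p^{\,\ctw-p^2}=(\alpha\ctw)^{(1-\alpha)\ctw/2}$. The modular-counting detour is precisely what allows the base of the exponent to scale with $\ctw$ while still inheriting a genuine constant-factor SETH separation --- something a direct state-packing reduction from \#\textsc{SAT} cannot deliver.
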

\begin{proof}
In \cite{2022GNMS} a lower bound of $p^{\ctw(G)}$ is found for counting connected edgesets modulo $p$. In the reduction the authors reduce to counting essentially distinct $q$-coloring modulo $p$, with cutwidth $\ctw(G) + q^2$ and $p = q$. Thus we find a lower bound of $p^{\ctw(G) - p^2} = (\alpha \ctw(G))^{(1-\alpha)\ctw(G)/2}$ for $p = (\alpha \ctw(G))^{1/2}$.
\end{proof}

%\subsubsection{Upper bound}
By Theorem \ref{thm:GenAlg} points on this curve can be computed in time $\tw(G)^{O(\tw(G))} n^{O(1)}$.

\subsection{The curve $H_q$ for $q \in \mathbb{Z}_{\geq 3}$}
These curves contain the points $(1-q, 0)$, which count the number of $q$-colorings. Using previous results and a folklore algorithm, we find matching upper and lower bounds for these points and thus for the whole curves.
%\subsubsection{Lower bound}
\begin{theorem} \label{thm:HqLow}
Let $q \in \mathbb{Z}_{\geq 3}$. Computing the Tutte polynomial along the curve $H_q$ cannot be done in time $(q-\epsilon)^{\ctw(G)} n^{O(1)}$, unless \textsc{SETH} fails.
\end{theorem}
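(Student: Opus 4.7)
The plan is to reduce from the problem of counting proper $q$-colorings, which is well-known to have a SETH-based cutwidth lower bound of the desired form. The starting point is the classical identity $P_G(q) = (-1)^{|V|-k(G)} q^{k(G)} T(G; 1-q, 0)$, so computing $T(G; 1-q, 0)$ determines $P_G(q)$ in polynomial time and vice versa (the factor is easily computed and nonzero). Note that $(1-q-1)(0-1) = q$, so $(1-q,0) \in H_q$, which is exactly the curve in the statement.

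First, I would invoke the SETH lower bound for counting proper $q$-colorings parameterized by cutwidth: there is no algorithm running in time $(q-\epsilon)^{\ctw(G)} n^{O(1)}$ for any $\epsilon > 0$ (this follows from the standard constructions, in the spirit of Lokshtanov--Marx--Saurabh for pathwidth, where the lower-bound gadget has cutwidth matching pathwidth up to additive constants). Combining with the chromatic polynomial identity, this immediately rules out a $(q-\epsilon)^{\ctw(G)} n^{O(1)}$ time algorithm for evaluating $T$ at the single point $(1-q, 0)$.

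Next, I would lift this single-point lower bound to the whole curve $H_q$ using Theorem~\ref{thm:redCurveMain}. Since $q \geq 3$ implies $|1-q| = q-1 \geq 2 \notin \{0,1\}$, the point $(1-q, 0)$ falls under Lemma~\ref{lem:aNotOne} (the first case of Theorem~\ref{thm:redCurveMain}), so computing $T$ along $H_q$ for graphs of cutwidth $\ctw'$ can be turned into computing $T$ at $(1-q,0)$ for graphs of cutwidth at most $\ctw' - O(1)$ in polynomial time. Hence an algorithm computing $T$ along $H_q$ in time $(q-\epsilon)^{\ctw(G)} n^{O(1)}$ would yield an algorithm for $(1-q,0)$ running in time $(q-\epsilon)^{\ctw(G)+O(1)} n^{O(1)} = (q-\epsilon')^{\ctw(G)} n^{O(1)}$ for a slightly smaller $\epsilon'>0$, contradicting SETH.

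The main obstacle is the first step: ensuring a clean SETH lower bound of $(q-\epsilon)^{\ctw}$ for counting $q$-colorings (not just for pathwidth). If one only has the pathwidth bound, then one must verify that the standard reduction from SAT produces instances whose cutwidth equals pathwidth up to additive constants, which is the case for the usual ``columns of gadgets along a path'' construction. Once that is in place, the remaining steps are just the chromatic-polynomial identity and a direct invocation of Theorem~\ref{thm:redCurveMain}, with no further technical work needed.
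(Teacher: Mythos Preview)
Your approach is essentially the paper's: both reduce to the SETH lower bound for counting proper $q$-colorings via the chromatic-polynomial specialization $T(G;1-q,0)$, and both observe $(1-q,0)\in H_q$. Two small remarks. First, your invocation of Theorem~\ref{thm:redCurveMain} is unnecessary here: ``computing along $H_q$'' means recovering the coefficients of $T_q(G;t)$, which in particular lets you evaluate at $t=-1$, i.e.\ at $(1-q,0)$, on the \emph{same} graph with no change in cutwidth; the paper's proof uses exactly this trivial containment. Second, for the cutwidth lower bound on counting $q$-colorings the paper cites~\cite{2022GNMS} (a modular-counting lower bound with $p>q$, which a fortiori rules out exact counting) rather than arguing that the Lokshtanov--Marx--Saurabh pathwidth construction has bounded cutwidth; your route works too, but the paper's citation gives the cutwidth statement directly and avoids the verification you flag as the ``main obstacle''.
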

\begin{proof}
Note that $H_q$ contains the point $(1-q, 0)$. Computing the Tutte polynomial on this point is equivalent to counting the number of $q$-colorings of the graph $G$.

By choosing a modulus $p > q$ we can apply the results from \cite{2022GNMS} to find a lower bound of $q^{\ctw(G)}$ on the time complexity of counting $q$-colorings modulo $p$. This lower bound clearly extends to general counting.
\end{proof}

%\subsubsection{Upper bound}
\begin{theorem} \label{thm:HqUpp}
    Let $G$ be a graph with a given tree decomposition of width $\tw(G)$ and $q \in \mathbb{Z}_{\geq 3}$. There exists an algorithm that computes $T(G; a, b)$ for $(a,b) \in H_q$ in time $q^{\tw(G)} n^{O(1)}$.
\end{theorem}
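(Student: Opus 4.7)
The plan is to reduce the evaluation of $T(G;a,b)$ on $H_q$ to computing the Potts model partition function, and then to solve the latter by a standard coloring dynamic program on the tree decomposition.

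First I would apply the Fortuin--Kasteleyn identity. Setting $v := b - 1$, so that $a - 1 = q/v$, a short manipulation of the subgraph expansion of the Tutte polynomial gives
\[
T(G;a,b) \;=\; q^{-k(E)}\,v^{k(E)-|V|}\,Z(G;q,v),
\]
where the Potts partition function admits the coloring representation
\[
Z(G;q,v) \;=\; \sum_{A \subseteq E} q^{k(A)} v^{|A|} \;=\; \sum_{c\colon V \to [q]} \prod_{uv \in E} \bigl(1 + v\cdot[c(u)=c(v)]\bigr).
\]
Since on $H_q$ with $q \geq 3$ we have $v \neq 0$, the prefactors $q^{-k(E)}v^{k(E)-|V|}$ are easily computed in polynomial time, so it suffices to compute $Z(G;q,v)$ for the specific value $v = b-1$.

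Next I would set up a dynamic program over a nice tree decomposition, with state indexed by colorings of the current bag. For each node $x$ and each function $c\colon B_x \to [q]$, define
\[
S[x,c] \;:=\; \sum_{\hat c} \;\prod_{uv \in E(G_x)} \bigl(1 + v\cdot[\hat c(u)=\hat c(v)]\bigr),
\]
where the sum ranges over extensions $\hat c\colon V(G_x) \to [q]$ of $c$. The transitions are routine: for a leaf node $S[x,\emptyset]=1$; for a vertex-introduce node introducing $w$ we copy $S[x,c] = S[y,c|_{B_y}]$ (no new edges yet); for an edge-introduce node introducing $uw$ we multiply by the local factor, $S[x,c] = S[y,c]\cdot(1 + v\cdot[c(u)=c(w)])$; for a vertex-forget node forgetting $w$ we set $S[x,c] = \sum_{i \in [q]} S[y, c \cup \{w \mapsto i\}]$; and for a join node we set $S[x,c] = S[y_1,c]\cdot S[y_2,c]$. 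The value $Z(G;q,v)$ is finally read off from the (single) empty-bag entry at the root.

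The correctness of each transition is immediate from the definition of $S[x,c]$ and the fact that every edge of $G$ is introduced exactly once. Since each bag has at most $\tw(G)+1$ vertices, the table contains at most $q^{\tw(G)+1}$ entries per node, each computable in polynomial time (in particular the join node takes $q^{\tw(G)+1}$ time per entry since it is just an entry-wise product). With $O(n)$ nodes in the nice decomposition, the total running time is $q^{\tw(G)} n^{O(1)}$. There is no real obstacle here once the Fortuin--Kasteleyn reduction is in place; the only subtlety is to introduce edges in a way that matches the tree decomposition (handled by edge-introduce bags, which are assumed available in the preliminaries).
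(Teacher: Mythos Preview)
Your proof is correct, but it takes a different route from the paper. The paper proves Theorem~\ref{thm:HqUpp} by invoking Theorem~\ref{thm:redCurveMain}: starting from the folklore $q^{\tw}n^{O(1)}$ algorithm for counting proper $q$-colorings (i.e., evaluating $T$ at the single point $(1-q,0)\in H_q$), it uses the stretch/thickening interpolation machinery to recover $T_q(G;t)$ and hence $T(G;a,b)$ for every $(a,b)\in H_q$. You instead bypass the interpolation entirely by computing the full Potts partition function $Z(G;q,v)$ directly via the Fortuin--Kasteleyn coloring expansion, with a DP whose states are $q$-colorings of the bag; the Tutte value then drops out from the prefactor.

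Both approaches ultimately run the same $q^{\tw}$-state coloring DP, so the asymptotics match. Your route is more self-contained (no graph transformations, no interpolation, no case analysis on $|a|,|b|$), and it transparently handles every point of $H_q$ in one shot. The paper's route has the advantage of reusing Theorem~\ref{thm:redCurveMain}, which is needed anyway for the lower bounds, so the upper bound comes essentially for free once that theorem is in place. One small slip: at the join node each entry costs $O(1)$ (pointwise product), not $q^{\tw(G)+1}$; the total over the bag is $q^{\tw(G)+1}$, which is what you want.
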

This theorem is a direct consequence of combining Theorem~\ref{thm:redCurveMain} with the following folklore result:
%it suffices to give an algorithm that computes $T(G; 1-q, 0)$, i.e. the number of $q$-colorings of $G$. 
\begin{theorem}[Folklore]
    Let $G$ be a graph with a given tree decomposition of width $\tw(G)$ and $q \in \mathbb{Z}_{\geq 3}$. There exists an algorithm that computes the number of $q$-colorings of $G$ in time $q^{\tw(G)} n^{O(1)}$.
\end{theorem}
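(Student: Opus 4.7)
The plan is to describe a standard dynamic programming algorithm over a nice tree decomposition whose states are all proper $q$-colorings of the current bag, since this restricts the state space to at most $q^{\tw(G)+1}$ entries per bag.

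Concretely, I would preprocess the given tree decomposition into a nice tree decomposition of the same width in polynomial time, rooted at an empty bag, and introduce edges at dedicated edge-introduce nodes (as in the preliminaries). For each node $x$ I define
\[
A_x[c] := |\{ \chi : V(G_x) \to [q] \mid \chi \text{ proper on } G_x,\ \chi|_{B_x} = c\}|,
\]
ranging over all maps $c : B_x \to [q]$. The answer to the problem is then $\sum_c A_r[c]$ at the root $r$. Only properness of edges already introduced is required; this is enforced incrementally at edge-introduce nodes, exactly as in the preliminaries for treewidth DP.

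I would then give the transition rules in the standard way: at a leaf node $A_x[\varnothing]=1$; at a vertex-introduce node for $v$, set $A_x[c]=A_y[c|_{B_y}]$ for every extension $c$ of any $c|_{B_y}$ to $v$; at an edge-introduce node for $uv$, set $A_x[c]=A_y[c]$ if $c(u)\neq c(v)$ and $0$ otherwise; at a vertex-forget node for $v$, set $A_x[c]=\sum_{i\in[q]} A_y[c\cup\{v\mapsto i\}]$; and at a join node with children $y_1,y_2$ and equal bag, set $A_x[c]=A_{y_1}[c]\cdot A_{y_2}[c]$, using the fact that $G_{y_1}$ and $G_{y_2}$ meet only at $B_x$, so colorings combine iff they agree on $B_x$. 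A straightforward induction on the nice decomposition shows $A_x$ is computed correctly at every node.

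For the running time, each table has at most $q^{\tw(G)+1}$ entries and every transition is handled in time polynomial in the table size times $n$; since a nice tree decomposition has $O(n\cdot \tw(G))$ nodes, the total time is $q^{\tw(G)}n^{O(1)}$. Finally, since $T(G;1-q,0) = (-1)^{|V|-k(G)}q^{k(G)}\chi_G(q)/q$ up to a trivially computable factor, invoking Theorem~\ref{thm:redCurveMain} extends the algorithm from the point $(1-q,0)$ to the entire curve $H_q$ within the same asymptotic bound. There is no real obstacle here; the only mild point of care is making sure edges are introduced exactly once and that the join rule is justified by the bag being the only shared vertex set, both of which are built into the nice-decomposition framework already set up in Section~\ref{sec:prel}.
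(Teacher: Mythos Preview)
Your dynamic programming argument is correct and is exactly the standard folklore algorithm; the paper itself does not supply a proof of this statement (it is simply cited as ``Folklore''), so there is nothing further to compare. One minor remark: your final paragraph, about relating $q$-colorings to $T(G;1-q,0)$ and invoking Theorem~\ref{thm:redCurveMain} to cover all of $H_q$, is not part of the folklore theorem you were asked to prove---that extension is precisely the content of Theorem~\ref{thm:HqUpp}, which the paper states separately---so you may safely drop it here.
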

%\begin{proof}
%    We compute the following dynamic programming table. Let $S[x, c]$ be the number of colorings $c'$ of the graph below bag $B_x$ such that $c'_{B_x} = c$. Clearly the size of the table is $O(q^{\tw(G)} n^{O(1)})$. We will show that each entry can be computed in constant time.

%    Let $x$ be a leaf node, then $S[x,\emptyset] = 1$. Let $x$ be a vertex-forget node for vertex $v$, with child node $y$, then
%    \[
%        S[x, c] = \sum \limits_{i = 1}^q S[y, c'_i],
%    \]
%    where $c'_i(u) = c(u)$ for $u \neq v$ and $c'_i(v) = i$. Let $x$ be a vertex-introduce node for vertex $v$, with child node $y$, then
 %   \[
 %       S[x, c] = \begin{cases}
 %           S[y, c|_{B_y}]  &\text{if $c$ is a valid coloring of $B_x$} \\
 %           0               &\text{otherwise}
 %       \end{cases}.
 %   \]
 %   If $x$ is a join node, with children $y_1$ and $y_2$, then
 %   \[
 %       S[x,c] = S[y_1,c]S[y_2, c].
 %   \]
%\end{proof}

\subsection{The curve $H_{-q}$ for $q \in \mathbb{Z}_{> 0}$}
These curves contain the points $(1 + q, 0)$. Using the same results we used to prove theorem \ref{thm:HqLow} and exploiting the fact these results hold for modular counting, we find an \textsc{ETH} lower bound which matches the running time of the general algorithm described in theorem \ref{thm:GenAlg}.
%\subsubsection{Lower bound}
\begin{theorem} \label{thm:H-qLow}
Let $q \in \mathbb{Z}_{>0}$. Computing the Tutte polynomial along the curve $H_{-q}$ cannot be done in $\ctw(G)^{o(\ctw(G))}$ time, unless \textsc{ETH} fails.
\end{theorem}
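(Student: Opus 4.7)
The plan is to mirror the proof of Theorem~\ref{thm:HqLow} but starting from the point $(1+q,0)\in H_{-q}$ and to extract a proper-colouring count via a simple modular identity. First note that $((1+q)-1)(0-1)=-q$, so $(1+q,0)\in H_{-q}$. The standard identity
\[
\chi_G(\lambda)=(-1)^{|V|-k(G)}\lambda^{k(G)}\,T(G;\,1-\lambda,\,0)
\]
specialised at $\lambda=-q$ shows that $\chi_G(-q)$ is recoverable in polynomial time from $T(G;1+q,0)$, since $k(G)$ is trivially computable and the factor $(-q)^{k(G)}$ is invertible modulo any prime $p$ coprime to $q$. Hence any algorithm computing the Tutte polynomial along $H_{-q}$ in time $f(\ctw(G))\cdot n^{O(1)}$ also produces $\chi_G(-q)\bmod p$ within the same time bound.

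The key observation is that $\chi_G\in\mathbb{Z}[\lambda]$, so for every prime $p>q$ we have $\chi_G(-q)\equiv \chi_G(p-q)\pmod p$, and $\chi_G(p-q)$ is exactly the number of proper $(p-q)$-colourings of $G$. Setting $c:=p-q$, the hypothetical algorithm thus counts proper $c$-colourings of $G$ modulo the prime $p=c+q$ within the same running time. The results of \cite{2022GNMS} invoked in the proof of Theorem~\ref{thm:HqLow} provide a lower bound on this modular counting problem that is insensitive to the exact choice of prime modulus (as long as $p>c$), so they apply directly with $p=c+q$.

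To derive the \textsc{ETH}-style bound $\ctw(G)^{o(\ctw(G))}$ I would follow the same parameter-growing argument as in the proof of Theorem~\ref{thm:H0Low}: take a $3$-\textsc{SAT} instance on $n$ variables, apply \cite{2022GNMS}'s reduction with the colour count $c$ chosen as a slowly growing function of $n$ so that the resulting graph has cutwidth $\ctw=\Theta(n/\log n)$, and set $p=c+q$. A hypothetical $\ctw^{o(\ctw)}$ algorithm for $T$ along $H_{-q}$ would then solve $3$-\textsc{SAT} in time $(n/\log n)^{o(n/\log n)}=2^{o(n)}$, contradicting \textsc{ETH}.

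The main technical point that needs verification is that \cite{2022GNMS}'s construction is still valid when the modulus is $p=c+q$ rather than $p=c$ exactly; this is a mild observation of the same flavour as the one already used for Theorems~\ref{thm:HqLow} and~\ref{thm:H0Low}, and it affects neither the cutwidth bound nor the asymptotic accounting, so the rest of the argument goes through verbatim.
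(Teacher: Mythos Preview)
Your proposal is correct and follows essentially the same approach as the paper: both use the point $(1+q,0)\in H_{-q}$, reduce modulo a prime $p>q$ to obtain the number of proper $(p-q)$-colourings modulo $p$, invoke \cite{2022GNMS}, and then let the colour count grow with the instance size to obtain the $\ctw^{o(\ctw)}$ bound. The only cosmetic difference is that you pass explicitly through the chromatic polynomial identity $\chi_G(\lambda)=(-1)^{|V|-k(G)}\lambda^{k(G)}T(G;1-\lambda,0)$, whereas the paper applies the modular congruence directly to $T(G;\,\cdot\,,0)$ (which is legitimate since $T(G;x,0)\in\mathbb{Z}[x]$); these are equivalent.
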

\begin{proof}
Like mentioned earlier $H_{-q}$ contains the point $(1+q, 0)$. For a prime $p > q$ we have that $T(G; 1+q, 0) \equiv_p T(G; 1+q - p, 0)$. This means that computing the Tutte polynomial modulo $p$ at the point $(1+q, 0)$ is equivalent to counting the number of $p - q$-colorings of $G$ modulo $p$. Since $q > 0$ and $p > q$ we find that $0 < p - q < p$ and thus as before, by \cite{2022GNMS}, we find a lower bound of $(p-q)^{\ctw(G)}$. Since the cutwidth of the construction in \cite{2022GNMS} is $O(n + rp^{r+2})$ for some $r$ dependant on $p-q$ and $\epsilon$. We find that there is no algorithm running in time $O((p - q - \epsilon)^{\ctw(G) - rp^{r+2}}) = O((\alpha\ctw(G) - \epsilon)^{\ctw(G) (1-\alpha)/(r+2)})$, where $p-q = (\alpha \ctw(G))^{1/(r+2)}$.
\end{proof}

%\subsubsection{Upper bound}
By Theorem \ref{thm:GenAlg} points on this curve can be computed in time $\tw(G)^{O(\tw(G))} n^{O(1)}$.

\section{A general algorithm}
In this section we show how we can exploit bounded treewidth to compute the Tutte polynomial at any point in the plane, in FPT-time. For this we use a standard dynamic programming approach. 

A linear (in the input size) time FPT-algorithm has previously been given by Noble \cite{DBLP:journals/cpc/Noble98}. This algorithm is double exponential in the treewidth, where the algorithm we give here has a running time of the form $2^{O(\tw(G) \log(\tw(G))} n^{O(1)}$. We consider this an improvement for our purposes, since we are mainly interested in the dependence on the treewidth.

\begin{theorem} \label{thm:GenAlg}
There is an algorithm that, given a graph $G$ and a point $(a,b)$, computes $T(G;a,b)$ in time $\tw(G)^{O(\tw(G))} n^{O(1)}$.
\end{theorem}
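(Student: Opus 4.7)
The plan is to perform standard dynamic programming on a nice tree decomposition of $G$, with a table at each bag indexed by set partitions of that bag, similar in flavor to Section~\ref{sec:forests} but with additional bookkeeping to track both edge counts and the component structure that the Tutte polynomial demands.

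Write $X = a - 1$ and $Y = b - 1$; I would assume $XY \neq 0$, as the curves $H_0^x$ and $H_0^y$ are handled separately in the paper. Rearranging the definition of the Tutte polynomial gives
\[
T(G;a,b) \;=\; X^{-k(E)} Y^{-|V|} \sum_{A \subseteq E} (XY)^{k(A)} Y^{|A|},
\]
which suggests defining, for each node $x$,
\[
f_x[\pi] \;:=\; \sum_{\substack{A \subseteq E(G_x)\\ \pi(A) = \pi}} (XY)^{k_i(A)} Y^{|A|},
\]
where $\pi(A)$ is the partition of $B_x$ induced by the connected components of $(V(G_x), A)$, and $k_i(A)$ counts the components of $(V(G_x), A)$ disjoint from $B_x$. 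At the root $r$ (empty bag), every component is internal, so $f_r[\emptyset]$ equals the required sum and $T(G;a,b) = X^{-k(E)} Y^{-|V|} f_r[\emptyset]$.

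Next I would give explicit transitions for each type of nice bag. A leaf bag sets $f_x[\emptyset] = 1$. A vertex-introduce for $v$ places $v$ as a singleton: $f_x[\pi] = f_y[\pi \setminus \{\{v\}\}]$ if $\{v\} \in \pi$, and $0$ otherwise. A vertex-forget for $v$ sums $f_y[\pi']$ over $\pi'$ restricting to $\pi$, multiplied by $XY$ exactly when $\{v\}$ is a singleton block of $\pi'$ (its component becomes internal once $v$ leaves the bag). An edge-introduce for $uv$ is handled by deletion-contraction: either the edge is not selected (contributing $f_y[\pi]$), or it is, contributing $Y \cdot f_y[\pi]$ if $u,v$ already share a block of $\pi$ (a cycle is closed), and $Y \cdot f_y[\pi']$ summed over $\pi'$ obtained from $\pi$ by splitting the $u$-$v$ block into two parts separating $u$ and $v$ (two components are merged by the edge). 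A join at $x$ with children $y_1, y_2$ sets
\[
f_x[\pi] \;=\; \sum_{\pi_1 \sqcup \pi_2 = \pi} f_{y_1}[\pi_1]\, f_{y_2}[\pi_2],
\]
whose correctness follows from the additivity of $|A|$ and $k_i(A)$ over the disjoint edge union $A_1 \sqcup A_2$, while cycles created through bag vertices are counted correctly because multiple pairs $(\pi_1,\pi_2)$ can combine to the same $\pi$ with different edge totals.

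For the running time, each bag has at most $\tw + 1$ vertices, so the table has size at most the Bell number $B_{\tw+1} = \tw^{O(\tw)}$. The most expensive transition is the join at $\tw^{O(\tw)}$ per bag; the others are no worse. With $n^{O(1)}$ bags in the nice decomposition, the total running time is $\tw^{O(\tw)} n^{O(1)}$, matching the statement. The only mildly delicate point is ensuring that the $(XY)^{k_i(A)}$ factor is credited exactly once per component, precisely at the forget node where that component becomes permanently disconnected from the bag; the rest is routine deletion-contraction bookkeeping.
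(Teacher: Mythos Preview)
Your approach is essentially the paper's---dynamic programming over a nice tree decomposition with states indexed by set partitions of the bag---but with different bookkeeping: the paper keeps the component count $i$ and edge count $j$ as explicit extra indices (entries $C_x(\pi,i,j)$ are plain counts), whereas you absorb them into the weight $(XY)^{k_i(A)}Y^{|A|}$. Both give $\tw^{O(\tw)}n^{O(1)}$; your version is a bit leaner per bag, while the paper's recovers all coefficients $c_{i,j}$ and hence evaluates at every point at once.

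One caveat: your assumption $XY\neq 0$ is not innocuous. The paper invokes Theorem~\ref{thm:GenAlg} precisely to supply the upper bound on the curve $H_0^x=\{x=1\}$ (see the sentence after Theorem~\ref{thm:H0Low}), where $X=0$ and your prefactor $X^{-k(E)}$ is undefined. The fix is immediate---either carry $k_i(A)$ and $|A|$ as extra indices ranging over $\{0,\dots,n^2\}$ (the paper's choice, costing only a polynomial factor), or treat $X,Y$ as formal indeterminates throughout and observe that the final division by $X^{k(E)}Y^{|V|}$ is a harmless exponent shift since $k(A)\geq k(E)$ and $k(A)+|A|\geq |V|$ always---but as written your proposal excludes a case the theorem is actually used for.
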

\begin{proof}
Note that, in order to compute the Tutte polynomial, we only need to know the number $c_{i,j}$ of edgesets with $i$ components and $j$ edges, for $i,j = 1, \dots, n$. We can then compute
\[
T(G;a,b) = \sum\limits_{i,j=1}^n c_{i,j}(a-1)^{i-k(E)} (b-1)^{i + j - |V|},
\]
in polynomial time.

We will now focus on computing the values of $c_{i,j}$. Let $(R, (B_x)_{x \in V(R)})$ be a rooted, nice tree decomposition with root $r$. We define $C_x(\pi,i,j)$ as the number of edgesets of the graph covered by the subtree rooted\footnote{I.e. all vertices that are in some bag $y$, such that any path in $R$ from $y$ to $r$ must pass through $x$.} at bag $B_x$, with $i$ components $j$ edges and whose connected components give the partition $\pi$ on $B_x$. At the leaves of the decomposition we have $B_x = \empty$ and thus
\[
C_x(\pi,i,j) = 
\begin{cases}
1,   &\text{if } \pi = \emptyset, i = j = 0, \\
0,   &\text{otherwise}.
\end{cases}
\]

If $x$ is an introduce node for vertex $v$ with child $y$. For $N_\pi(v)$ the set of vertices in the same set of $\pi$, we have
\[
C_x(\pi,i,j) = 
\begin{cases}
\displaystyle\sum_{\emptyset \neq A \subseteq N_\pi(v)} C_y(\pi - v,i - |A|,j),   &\text{if } N_\pi(v) \neq \emptyset, \\
C_y(\pi - v,i,j - 1),    &\text{otherwise}.
\end{cases}
\]

If $x$ is a forget node for vertex $v$ with child $y$, we have
\[
C_x(\pi,i,j) = \sum_{\pi' \in 2^{B_y}, \pi'|_{B_x} = \pi} C_y(\pi',i,j)
\]

If $x$ is a join node with children $y$ and $z$, we have
\[
C_x(\pi,i,j) = \sum\limits_{k = 0}^i \sum\limits_{l = 0}^j \sum \limits_{\pi' \sqcup \pi'' = \pi} C_y(\pi,k,l)C_z(\pi,i-k,j-l),
\]
where $\pi' \sqcup \pi'' = \pi$ indicates that merging any overlapping sets in $\pi'$ and $\pi''$ results in $\pi$.
\end{proof}

\section{Conclusion}

In this paper we gave a classification of the complexity, parameterized by treewidth/pathwidth/cutwidth, of evaluating the Tutte polynomial at integer points into either computable
\begin{itemize}
\item in polynomial time,
\item in $\tw^{O(\tw)}n^{O(1)}$ time but not in $\ctw^{o(\ctw)}n^{O(1)}$ time,
\item in $q^{\tw}n^{O(1)}$ time but not in $2^{o(\ctw)}$ (and for many points not even in $r^{\ctw}n^{O(1)}$ time for some constants $q > r$),
% \begin{itemize}
%     \item $r^{\ctw}n^{O(1)}$ time for some constants $q > r$,
%     \item $2^{o(\ctw)}n^{O(1)}$,
% \end{itemize}
\end{itemize}
assuming the (Strong) Exponential Time Hypothesis.

This classification turned out to be somewhat surprising, especially considering the asymmetry between $H_0^x = \{(x,y) : x = 1 \}$ and $H_0^y = \{(x,y) : y = 1 \}$ that does not show up in other classifications such as the ones from~\cite{BrandDR19,Dell_2014,jaeger_vertigan_welsh_1990}.

Our paper leaves ample opportunities for further research.
First, we believe that our rank upper bound should have more applications for counting forests with different properties.
For example, it seems plausible that it can be used to count all Feedback Vertex Sets in time $2^{O(\tw)}n^{O(1)}$ or the number of spanning trees with $k$ components in time $2^{O(\tw)}n^{O(1)}$. The latter result would improve over a result by Peng and Fei Wan~\cite{Peng-FeiWan} that show how to count the number of spanning forests with $k$ components (or equivalently, $n-k-1$ edges) in $\tw^{O(\tw)}n^{O(1)}$ time. 
We decided to not initiate this study in this paper to retain the focus on the Tutte polynomial.

Second, it would be interesting to see if our classification of the complexity of all points on $\mathbb{Z}^2$ can be extended to a classification of the complexity of all points on $\mathbb{R}^2$ (or even $\mathbb{C}^2$). Typically, evaluation at non-integer points can be reduced to integers points (leading to hardness for non-integer points), but we were not able to establish such a reduction without considerably increasing the width parameters.

Third, it would be interesting to see if a similar classification can be made when parameterized by the vertex cover number instead of treewidth/pathwidth/cutwidth. We already know that the runtime of $2^{n}n^{O(1)}$ by Bj\"orklund et al.~\cite{BjorklundHKK08} for evaluating the Tutte polynomial cannot be strengthened to a general $2^{O(k)}n^{O(1)}$ time algorithm where $k$ is the minimum vertex cover size of the input graph due to a result by Jaffke and Jansen~\cite{JaffkeJ23}, but this still leaves the complexity of evaluating at many other points open.

%The curve $H_\alpha$ for $\alpha \in \mathbb{C}\setminus\mathbb{Z}$
%\subparagraph{Lower bound}
%\input{HaLowerbound}

%\subparagraph{Upper bound}
%By theorem \ref{thm:GenAlg} points on this curve can be computed in time $\tw^{\tw} n^{O(1)}$.
%

\bibliography{TutteCtWRef.bib}

\begin{thebibliography}{10}

\bibitem{Andrzejak98}
Artur Andrzejak.
\newblock An algorithm for the {T}utte polynomials of graphs of bounded
  treewidth.
\newblock {\em Discret. Math.}, 190(1-3):39--54, 1998.
\newblock \href {https://doi.org/10.1016/S0012-365X(98)00113-7}
  {\path{doi:10.1016/S0012-365X(98)00113-7}}.

\bibitem{1982Baxter}
Rodney~J. Baxter.
\newblock {\em Exactly Solved Models in Statistical Mechanics}, pages 5--63.
\newblock URL:
  \url{https://www.worldscientific.com/doi/abs/10.1142/9789814415255_0002},
  \href
  {http://arxiv.org/abs/https://www.worldscientific.com/doi/pdf/10.1142/9789814415255_0002}
  {\path{arXiv:https://www.worldscientific.com/doi/pdf/10.1142/9789814415255_0002}},
  \href {https://doi.org/10.1142/9789814415255_0002}
  {\path{doi:10.1142/9789814415255_0002}}.

\bibitem{BjorklundHKK08}
Andreas Bj{\"{o}}rklund, Thore Husfeldt, Petteri Kaski, and Mikko Koivisto.
\newblock Computing the {T}utte polynomial in vertex-exponential time.
\newblock In {\em 49th Annual {IEEE} Symposium on Foundations of Computer
  Science, {FOCS} 2008, October 25-28, 2008, Philadelphia, PA, {USA}}, pages
  677--686. {IEEE} Computer Society, 2008.
\newblock \href {https://doi.org/10.1109/FOCS.2008.40}
  {\path{doi:10.1109/FOCS.2008.40}}.

\bibitem{BodlaenderCKN15}
Hans~L. Bodlaender, Marek Cygan, Stefan Kratsch, and Jesper Nederlof.
\newblock Deterministic single exponential time algorithms for connectivity
  problems parameterized by treewidth.
\newblock {\em Inf. Comput.}, 243:86--111, 2015.
\newblock \href {https://doi.org/10.1016/j.ic.2014.12.008}
  {\path{doi:10.1016/j.ic.2014.12.008}}.

\bibitem{BrandDR19}
Cornelius Brand, Holger Dell, and Marc Roth.
\newblock Fine-grained dichotomies for the {T}utte plane and {B}oolean {\#}csp.
\newblock {\em Algorithmica}, 81(2):541--556, 2019.
\newblock \href {https://doi.org/10.1007/s00453-018-0472-z}
  {\path{doi:10.1007/s00453-018-0472-z}}.

\bibitem{Brylawski1980}
Thomas~H Brylawski.
\newblock The {T}utte polynomial.
\newblock In {\em Matroid Theory and Its Applications: Lectures Given at the
  Centro Internazionale Matematico Estivo}. Varenna (como), Italy, 1980.

\bibitem{DBLP:books/daglib/0023376}
Thomas~H. Cormen, Charles~E. Leiserson, Ronald~L. Rivest, and Clifford Stein.
\newblock {\em Introduction to Algorithms, 3rd Edition}.
\newblock {MIT} Press, 2009.
\newblock URL: \url{http://mitpress.mit.edu/books/introduction-algorithms}.

\bibitem{2015CurtMarx}
Radu Curticapean and Dániel Marx.
\newblock {\em Tight conditional lower bounds for counting perfect matchings on
  graphs of bounded treewidth, cliquewidth, and genus}, pages 1650--1669.
\newblock URL:
  \url{https://epubs.siam.org/doi/abs/10.1137/1.9781611974331.ch113}, \href
  {http://arxiv.org/abs/https://epubs.siam.org/doi/pdf/10.1137/1.9781611974331.ch113}
  {\path{arXiv:https://epubs.siam.org/doi/pdf/10.1137/1.9781611974331.ch113}},
  \href {https://doi.org/10.1137/1.9781611974331.ch113}
  {\path{doi:10.1137/1.9781611974331.ch113}}.

\bibitem{CyganKN18}
Marek Cygan, Stefan Kratsch, and Jesper Nederlof.
\newblock Fast {H}amiltonicity checking via bases of perfect matchings.
\newblock {\em J. {ACM}}, 65(3):12:1--12:46, 2018.
\newblock \href {https://doi.org/10.1145/3148227} {\path{doi:10.1145/3148227}}.

\bibitem{2011CNPPRW}
Marek Cygan, Jesper Nederlof, Marcin Pilipczuk, Michał Pilipczuk, Johan van
  Rooij, and Jakub~Onufry Wojtaszczyk.
\newblock Solving connectivity problems parameterized by treewidth in single
  exponential time, 2011.
\newblock URL: \url{https://arxiv.org/abs/1103.0534}, \href
  {https://doi.org/10.48550/ARXIV.1103.0534}
  {\path{doi:10.48550/ARXIV.1103.0534}}.

\bibitem{talg/CyganNPPRW22}
Marek Cygan, Jesper Nederlof, Marcin Pilipczuk, Michal Pilipczuk, Johan M.~M.
  van Rooij, and Jakub~Onufry Wojtaszczyk.
\newblock Solving connectivity problems parameterized by treewidth in single
  exponential time.
\newblock {\em {ACM} Trans. Algorithms}, 18(2):17:1--17:31, 2022.
\newblock \href {https://doi.org/10.1145/3506707} {\path{doi:10.1145/3506707}}.

\bibitem{Dell_2014}
Holger Dell, Thore Husfeldt, D\'{a}niel Marx, Nina Taslaman, and Martin
  Wahl{\'{e}}n.
\newblock Exponential time complexity of the permanent and the {T}utte
  polynomial.
\newblock {\em {ACM} Transactions on Algorithms}, 10(4):1--32, aug 2014.
\newblock URL: \url{https://doi.org/10.1145\%2F2635812}, \href
  {https://doi.org/10.1145/2635812} {\path{doi:10.1145/2635812}}.

\bibitem{2022GNMS}
Carla Groenland, Isja Mannens, Jesper Nederlof, and Krisztina Szil\'{a}gyi.
\newblock {Tight Bounds for Counting Colorings and Connected Edge Sets
  Parameterized by Cutwidth}.
\newblock In {\em 39th International Symposium on Theoretical Aspects of
  Computer Science (STACS 2022)}, pages 36:1--36:20, 2022.
\newblock URL: \url{https://arxiv.org/abs/2110.02730}, \href
  {https://doi.org/10.48550/ARXIV.2110.02730}
  {\path{doi:10.48550/ARXIV.2110.02730}}.

\bibitem{ImpagliazzoP01}
Russell Impagliazzo and Ramamohan Paturi.
\newblock On the complexity of k-sat.
\newblock {\em J. Comput. Syst. Sci.}, 62(2):367--375, 2001.
\newblock \href {https://doi.org/10.1006/jcss.2000.1727}
  {\path{doi:10.1006/jcss.2000.1727}}.

\bibitem{jaeger_vertigan_welsh_1990}
F.~Jaeger, D.~L. Vertigan, and D.~J.~A. Welsh.
\newblock On the computational complexity of the {J}ones and {T}utte
  polynomials.
\newblock {\em Mathematical Proceedings of the Cambridge Philosophical
  Society}, 108(1):35–53, 1990.
\newblock \href {https://doi.org/10.1017/S0305004100068936}
  {\path{doi:10.1017/S0305004100068936}}.

\bibitem{JaffkeJ23}
Lars Jaffke and Bart M.~P. Jansen.
\newblock Fine-grained parameterized complexity analysis of graph coloring
  problems.
\newblock {\em Discret. Appl. Math.}, 327:33--46, 2023.
\newblock \href {https://doi.org/10.1016/j.dam.2022.11.011}
  {\path{doi:10.1016/j.dam.2022.11.011}}.

\bibitem{1987Jerrum}
Mark {Jerrum}.
\newblock {Two-dimensional monomer-dimer systems are computationally
  intractable}.
\newblock {\em Journal of Statistical Physics}, 48(1-2):121--134, July 1987.
\newblock \href {https://doi.org/10.1007/BF01010403}
  {\path{doi:10.1007/BF01010403}}.

\bibitem{1967Kasteleyn}
P.W. Kasteleyn and F.~Harary.
\newblock {\em Graph Theory and Theoretical Physics}, chapter Graph theory and
  crystal physics, pages 43--110.
\newblock 1967.
\newblock \href {http://arxiv.org/abs/https://doi.org/10.1063/1.3035700}
  {\path{arXiv:https://doi.org/10.1063/1.3035700}}, \href
  {https://doi.org/10.1063/1.3035700} {\path{doi:10.1063/1.3035700}}.

\bibitem{KRIZAN1983230}
John~E. Krizan, Peter~F. Barth, and M.L. Glasser.
\newblock Phase transitions for the {I}sing model on the closed {C}ayley tree.
\newblock {\em Physica A: Statistical Mechanics and its Applications},
  119(1):230--242, 1983.
\newblock URL:
  \url{https://www.sciencedirect.com/science/article/pii/0378437183901577},
  \href {https://doi.org/https://doi.org/10.1016/0378-4371(83)90157-7}
  {\path{doi:https://doi.org/10.1016/0378-4371(83)90157-7}}.

\bibitem{LokshtanovMS18}
Daniel Lokshtanov, D{\'{a}}niel Marx, and Saket Saurabh.
\newblock Slightly superexponential parameterized problems.
\newblock {\em {SIAM} J. Comput.}, 47(3):675--702, 2018.
\newblock \href {https://doi.org/10.1137/16M1104834}
  {\path{doi:10.1137/16M1104834}}.

\bibitem{DBLP:journals/cpc/Noble98}
Steven~D. Noble.
\newblock Evaluating the {T}utte polynomial for graphs of bounded tree-width.
\newblock {\em Comb. Probab. Comput.}, 7(3):307--321, 1998.
\newblock URL:
  \url{http://journals.cambridge.org/action/displayAbstract?aid=46641}.

\bibitem{Peng-FeiWan}
Xin-Zhuang~Chen Peng-Fei~Wan.
\newblock Computing the number of $k$-component spanning forests of a graph
  with bounded treewidth.
\newblock {\em Journal of the Operations Research Society of China}, 7(2):385,
  2019.
\newblock URL:
  \url{https://www.jorsc.shu.edu.cn/EN/abstract/article_11289.shtml}, \href
  {https://doi.org/10.1007/s40305-019-00241-4}
  {\path{doi:10.1007/s40305-019-00241-4}}.

\end{thebibliography}
\end{document}